\documentclass[11pt]{article}
\usepackage{fullpage}

\usepackage{hyperref}
\usepackage[utf8]{inputenc} 
\usepackage[T1]{fontenc}    
\usepackage{url}            
\usepackage{mathtools}
\usepackage{booktabs}       
\usepackage{amsfonts}       
\usepackage{nicefrac}       
\usepackage{microtype}      
\usepackage{xcolor}         

\usepackage{nicematrix}

\usepackage{tikz}
\usepackage{float}
\usetikzlibrary{shapes,backgrounds,patterns,calc,arrows,arrows.meta}
\usepackage{chemfig}

\catcode`\_=11
\definearrow5{-x>}{%
    \CF_arrowshiftnodes{#3}%
    \CF_expafter{\draw[}\CF_arrowcurrentstyle](\CF_arrowstartnode)--(\CF_arrowendnode)%
        coordinate[midway,shift=(\CF_ifempty{#4}{225}{#4}+\CF_arrowcurrentangle:\CF_ifempty{#5}{5pt}{#5})](line@start)%
        coordinate[midway,shift=(\CF_ifempty{#4}{45}{#4}+\CF_arrowcurrentangle:\CF_ifempty{#5}{5pt}{#5})](line@end)%
        coordinate[midway,shift=(\CF_ifempty{#4}{135}{#4}+\CF_arrowcurrentangle:\CF_ifempty{#5}{5pt}{#5})](line@start@i)%
        coordinate[midway,shift=(\CF_ifempty{#4}{315}{#4}+\CF_arrowcurrentangle:\CF_ifempty{#5}{5pt}{#5})](line@end@i);
    \draw(line@start)--(line@end);%
    \draw(line@start@i)--(line@end@i);%
    \CF_arrowdisplaylabel{#1}{0.5}+\CF_arrowstartnode{#2}{0.5}-\CF_arrowendnode
}
\catcode`\_=8

\usepackage{pict2e,picture,graphicx}
\usepackage[overload]{empheq}
\usepackage{microtype}
\usepackage{graphicx}
\usepackage{subcaption}
\usepackage{booktabs}
\usepackage{amsmath}
\usepackage{cases}
\usepackage{mathtools}
\usepackage{amsthm}
\usepackage{thm-restate}
\usepackage{enumitem}

\allowdisplaybreaks
\usepackage{xcolor}
\usepackage{nicefrac, xfrac}
\usepackage{xparse}
\usepackage[capitalize, noabbrev]{cleveref}
\usepackage{wrapfig}
\usepackage{bbm}
\usepackage{yfonts}
\usepackage{nicefrac} 
\usepackage{multirow}
\usepackage{bm}
\usepackage{bbm}
\usepackage{subcaption}

\usepackage[T1]{fontenc}

\hypersetup{
	colorlinks = true,
	linkcolor = typ_blue,
	citecolor = Sepia,
	linktocpage = true,
	urlcolor = darkGreen
}

\usepackage{comment}

\usepackage{colortbl}
\usepackage{cellspace}
\usepackage{mleftright}

\usepackage{booktabs} 
\usepackage[ruled]{algorithm2e} 

\crefname{algocf}{alg.}{algs.}
\Crefname{algocf}{Algorithm}{Algorithms}
\crefalias{AlgoLine}{line}
\crefalias{algocfline}{line}
\SetAlFnt{\small}
\SetAlCapFnt{\small}
\SetAlCapNameFnt{\small}
\SetAlCapHSkip{0pt}
\IncMargin{-\parindent}

\newcommand{\PE}{\textsc{Phased-Elimination}\xspace}

\colorlet{darkgreen}{green!70!black}
\newcommand{\Scal}{\mathcal{S}}
\newcommand{\Bcal}{\mathcal{B}}

\newcommand{\NP}{\textsf{NP}}
\newcommand{\RP}{\textsf{RP}}
\newcommand{\PolyClass}{\textsf{P}}

\newcommand{\supp}{\textnormal{supp}}

\newcommand{\OPT}{\textsf{OPT}}
\newcommand{\diam}{\textnormal{diam}}
\newcommand{\Up}{U^\textsf{P}}
\newcommand{\Ua}{U^\textsf{A}}
\newcommand{\Pcal}{\mathcal{P}}

\newcommand{\Rcal}{\mathcal{R}}
\newcommand{\Ecal}{\mathcal{E}}

\newcommand{\poly}{\textsf{poly}}
\newcommand{\PTAS}{\textsf{PTAS}}
\newcommand{\FPTAS}{\textsf{FPTAS}}

\def\ie{\textit{i.e.}\@\xspace}

\newcommand{\Reals}{\mathbb{R}}

\newcommand{\Naturals}{\mathbb{N}}

\usepackage{color}
\definecolor{mygreen}{rgb}{0.0, 0.5, 0.0}
\definecolor{myorange}{rgb}{0.55, 0.62, 1}
\newcommand{\nb}[3]{{\colorbox{#2}{\bfseries\sffamily\scriptsize\textcolor{white}{#1}}}{\textcolor{#2}{\sf\small\textsf{#3}}}}

\newcommand{\ma}[1]{\nb{Marti}{niceRed}{#1}}

\theoremstyle{plain}

\makeatletter
\let\cref@old@stepcounter\stepcounter
\def\stepcounter#1{%
  \cref@old@stepcounter{#1}%
  \cref@constructprefix{#1}{\cref@result}%
  \@ifundefined{cref@#1@alias}%
    {\def\@tempa{#1}}%
    {\def\@tempa{\csname cref@#1@alias\endcsname}}%
  \protected@edef\cref@currentlabel{%
    [\@tempa][\arabic{#1}][\cref@result]%
    \csname p@#1\endcsname\csname the#1\endcsname}}
\makeatother
\theoremstyle{plain}
\newtheorem{theorem}{Theorem}[section]

\newtheorem{claim}[theorem]{Claim}
\newtheorem{lemma}[theorem]{Lemma}

\newtheorem*{theorem-non}{Theorem}

\theoremstyle{definition}
\newtheorem{definition}[theorem]{Definition}
\newtheorem{assumption}[theorem]{Assumption}

\theoremstyle{remark}
\newtheorem{remark}[theorem]{Remark}

\definecolor{niceRed}{RGB}{190,38,38}
\definecolor{Red2}{RGB}{219, 50, 54}
\definecolor{mgreen}{RGB}{160, 200, 140}
\definecolor{blueGrotto}{RGB}{5,157,192}
\definecolor{limeGreen}{HTML}{81B622}
\definecolor{myellow}{rgb}{0.88,0.61,0.14}
\definecolor{darkGreen}{HTML}{2E8B57}
\definecolor{navyBlueP}{HTML}{03468F}
\definecolor{Sepia}{HTML}{7F462C}
\definecolor{red2}{HTML}{1F462C}
\definecolor{orange2}{HTML}{FF8000}
\definecolor{mgray}{HTML}{ABB3B8}
\definecolor{lgray}{HTML}{E5E8E9}
\definecolor{myPurple}{RGB}{175,0,124}
\definecolor{mypurple2}{rgb}{0.8,0.62,1}
\definecolor{royalBlue}{HTML}{057DCD}
\definecolor{mpink}{HTML}{FC6C85}
\definecolor{lblue}{RGB}{74,144,226}
\definecolor{peagreen}{RGB}{152,193,39}
\definecolor{typ_navy}{HTML}{001f3f}
\definecolor{typ_blue}{HTML}{0074d9}
\definecolor{typ_aqua}{HTML}{7fdbff}
\definecolor{typ_teal}{HTML}{39cccc}
\definecolor{typ_eastern}{HTML}{239dad}
\definecolor{typ_purple}{HTML}{b10dc9}
\definecolor{typ_fuchsia}{HTML}{f012be}
\definecolor{typ_maroon}{HTML}{85144b}
\definecolor{typ_red}{HTML}{ff4136}
\definecolor{typ_orange}{HTML}{ff851b}
\definecolor{typ_yellow}{HTML}{ffdc00}
\definecolor{typ_olive}{HTML}{3d9970}
\definecolor{typ_green}{HTML}{2ecc40}
\definecolor{typ_lime}{HTML}{01ff70}
\definecolor{newgreen}{HTML}{83c702}
\definecolor{newpurp}{RGB}{97,96,121}
\usepackage[style=alphabetic,natbib=true,maxcitenames=2, maxbibnames=10,backend=bibtex]{biblatex}
\addbibresource{biblio.bib}

\allowdisplaybreaks

\title{Single-dimensional Contract Design:\\ Efficient Algorithms and Learning}

\author{
    Martino Bernasconi\thanks{Bocconi University, \texttt{\textcolor{blueGrotto}{martino.bernasconi@unibocconi.it}}}\and
    Matteo Castiglioni\thanks{Politecnico di Milano, \texttt{\textcolor{blueGrotto}{matteo.castiglioni@polimi.it}}} \and
    Andrea Celli\thanks{Bocconi University, \texttt{\textcolor{blueGrotto}{andrea.celli2@unibocconi.it}}}
}

\date{}

\begin{document}

\maketitle
\begin{abstract}

We study a Bayesian contract design problem in which a principal interacts with an unknown agent.
We consider the single-parameter uncertainty model introduced by \citet{alon2021contracts}, in which the agent's type is described by a single parameter, \emph{i.e.}, the cost per unit-of-effort.
Despite its simplicity, several works have shown that single-dimensional contract design is not necessarily easier than its multi-dimensional counterpart in many respects. Perhaps the most surprising result is the reduction by \citet{castiglioni2025reduction} from multi- to single-dimensional contract design. However, their reduction preserves only multiplicative approximations, leaving open the question of whether additive approximations are easier to obtain than multiplicative ones.

%
%
%

In this paper, we give a positive answer---at least to some extent---to this question.
In particular, we provide an additive $\PTAS$ for these problems while also ruling out the existence of an additive $\FPTAS$. This, in turn, implies that no reduction from multi- to single-dimensional contracts can preserve additive approximations. Moreover, we show that single-dimensional contract design is fundamentally easier than its multi-dimensional counterpart from a learning perspective. Under mild assumptions, we show that optimal contracts can be learned efficiently, providing results on both regret and sample complexity.
\end{abstract}

\pagenumbering{gobble} 

\clearpage
\tableofcontents
\renewcommand{\thmtformatoptarg}[1]{~#1}
\newpage
\pagenumbering{arabic}
\section{Introduction}

Contract design studies the interaction between a principal and an agent.
The agent undertakes costly actions that are hidden from the principal, who can only observe a stochastic outcome influenced by the agent's decisions.
The goal of the principal is to design an outcome-dependent payment scheme, known as a \emph{contract}, to incentivize the agent to take desirable actions.

When all the parameters of the model are known, an optimal contract can be computed efficiently through linear programming~\citep{dutting2019simple}.
However, this changes dramatically when uncertainty is added to the model. When Bayesian uncertainty is introduced and an agent's type, defining their actions and costs, is sampled from a probability distribution, contract design problems become intractable~\citep{castiglioni2022bayesian,guruganesh2021contracts,castiglioni2023designing}.
%
A comparable increase in the complexity of the problem occurs in online learning scenarios, where the principal interacts with an unknown agent. In this setting, uncertainty makes the problem intractable, leading to regret bounds that are nearly linear~\citep{zhu2022online}, or exponential in the instance size~\citep{bacchiocchi2023learning}.

These negative results motivate the study of simpler settings with more ``structured'' uncertainty. Drawing inspiration from mechanism design, \citet{alon2021contracts} introduced a setting in which the agent's type is single-dimensional and represents a cost per unit-of-effort, \emph{i.e.}, a parameter that multiplies the cost of the chosen action.
A recent line of work~\citep{alon2021contracts,alon2023bayesian,castiglioni2025reduction} shows that, while single-dimensional contract design enjoys some additional structural properties, it is computationally more similar to the multi-dimensional case than initially expected.
These results are culminated in \citet{castiglioni2025reduction}, who provided an almost approximation preserving reduction from multi-dimensional to single-dimensional contract design for \emph{multiplicative} approximations. 

At first glance, the above result may suggest that efficient algorithms for contract design with single-dimensional costs are unattainable.
However, we show that, in several key aspects, single-dimensional contract design is simpler than its multi-dimensional counterpart, and achieving better results in this setting is possible.
In particular, we answer positively to the following questions:
\begin{enumerate} 
    \item Is it possible to design efficient approximation algorithms for Bayesian single-dimensional contract design under additive approximations?
    \item Is it possible to design learning algorithms with sublinear regret (or with polynomial sample complexity) in online learning scenarios with single-dimensional types?
\end{enumerate}

\subsection{Our Contributions}

%
In the first part of the paper, we analyze the single-dimensional Bayesian contract design problem under \emph{additive} approximations. While single-dimensional and multi-dimensional problems are essentially equivalent under multiplicative approximations \cite{castiglioni2025reduction}, this is not the case for additive ones.
Indeed, we show that single-dimensional problems admit a $\PTAS$. This is in contrast to multi-dimensional problems, which do not admit a $\PTAS$ unless $\PolyClass=\NP$~\cite{guruganesh2021contracts, castiglioni2022bayesian}. 

\begin{theorem-non}[Informal, see \Cref{thm:PTAS}]
    Single-dimensional Bayesian contract design admits an additive $\PTAS$.
\end{theorem-non}

Our result relies on the following observation.
In multi-dimensional instances, all types might have very different actions, costs, and incentives. This is not the case in single-dimensional instances, where similar types share similar incentives.
Our algorithm exploits this feature by grouping agents with similar types into $k$ buckets.
Then, replacing each bucket with a representative type, we can employ algorithms for discrete types such as the one proposed by ~\citet{guruganesh2021contracts} and \citet{castiglioni2022bayesian} to solve the discretized problem. Such algorithms run in time exponential in number of types. However, if the number of representative types (and thus the number of buckets $k$) is constant, they run in polynomial time.

The core of the proof is to manage the error with respect to the optimal contract as a function of the number of buckets $k$ in which we partition the types.
The agent's type is represented by a single parameter $\theta\in [0,1]$. 
We will show that when we group two agents of types $\theta \in [0,1]$ and $\theta'\in [0,1]$ together, we make an error of order $\poly(|\theta-\theta'|)$. Thus, to guarantee an $\epsilon=O(1)$ additive error, it suffices to set $k=\poly(1/\epsilon)$, resulting in a $\PTAS$. 
To do that, we relate the value of the discretized problem to the true one. Then, we make the solution of the discretized problem robust to the slight type misspecification resulting from bucketing different types together.
%
%
This result shows that focusing on multiplicative approximations is necessary to design an approximation-preserving reduction from multi- to single-dimensional contracts~\citep{castiglioni2025reduction}. Indeed, for additive approximations, single-dimensional contracts admit a $\PTAS$ while multi-dimensional ones are $\NP$-hard \citep{guruganesh2021contracts, castiglioni2022bayesian}. 

It is therefore natural to try to strengthen our result and ask whether an $\FPTAS$ exists for single-dimensional contracts with additive approximations. Unfortunately, we answer this question in the negative.
Indeed, we show that our $\PTAS$ is tight, and it is impossible to design an additive $\FPTAS$ unless $\PolyClass=\NP$.

\begin{theorem-non}[See \Cref{thm:reduction}]
    Single-dimensional Bayesian contract design does not admit an additive $\FPTAS$, unless $\PolyClass=\NP$.
\end{theorem-non}

The closest related work to this hardness result is~ \citet{castiglioni2025reduction}, which establishes the equivalence of multi-dimensional and single-dimensional contract design under multiplicative approximations. A key aspect of their approach is using exponentially decreasing costs and expected rewards, leading to an exponentially small principal's utility as the size of the instance grows. In such cases, the null contract incurs only a negligible additive error. 
In contrast, we focus on additive approximations and construct instances with polynomially decreasing costs and expected rewards, ensuring that the principal’s utility remains polynomially bounded. 
Exponentially decreasing sequences are used to ``separate'' the actions of each type from the actions of other types by using incentives. However, using polynomial sequences results in sets of actions that are less separated. Dealing with these less separated action sets is the primary challenge of the proof.

In the second part of the paper, we focus on the problem of learning approximately optimal contracts, providing bounds both on the regret rates and on the sample complexity.
We study a stochastic setting for which we make some mild assumptions. In particular, as in previous work on online learning in contract design \citep{zhu2022online}, we restrict to bounded contracts. Moreover, as in many previous works on single-dimensional contract design, we assume that the type distribution has bounded density~\citep{alon2021contracts,alon2023bayesian}.

There are two main challenges in obtaining no-regret algorithms in our setting.
First, the action space is continuous (being the hypercube) and lacks any obvious structural properties to exploit. This makes most techniques used in multi-dimensional contract design unfeasible, as they operate directly on this decision space and cannot leverage the simplicity of single-dimensional types~\cite{zhu2022online,bacchiocchi2023learning,ho2014adaptive}.
Second, the type distribution is continuous.
This makes standard approaches for Stackelberg-like games unfeasible. Indeed, most approaches for learning in games with commitments crucially rely on having a finite number of types  (see \emph{e.g.}, \citet{balcan2015commitment,castiglioni2020online}).

We address such challenges by considering an approximate instance with finitely many types, whose grid of discretized types is the one used in the Bayesian problem.
We show that, under the bounded density assumption, the utility of each contract does not change too much from the original instance to the discretized one, allowing us to work with discretized types.
This result is conceptually very different from the ``continuous-to-discrete'' result we establish when working towards the $\PTAS$. In particular, we now provide both upper and lower bounds on utility, but only for distributions with bounded density.
Then, similarly to \citet{bernasconi2023optimal} we shift the learning problem to the space of utilities. 
This approach allows us to avoid the exponentially large set of actions that would typically need to be considered (where bandit feedback would translate the exponential number of actions into a regret bound that grows exponentially with the instance size).
Indeed, in this context, the problem is an \emph{almost-}linear bandit whose dimension is equal to the number of types. It is important to note that we sacrifice the linear structure of the problem, as the discretized instance is only an approximation of the original one.
We deal with this almost-linear structure by resorting to \emph{misspecified linear bandits}~\citep{ghosh2017misspecified}, and by carefully balancing the effects of the discretization on the misspecification of the linear model.
Combining all these results, we obtain a regret minimization algorithm that guarantees regret of order $\widetilde O(T^{3/4})$.

\begin{theorem-non}[Informal, see \Cref{thm:regret}]
There exists an algorithm that guarantees regret
\(
R_T= \widetilde O(\poly(\mathcal{I})T^{3/4}),
\)
where $\mathcal{I}$ is the instance size.
\end{theorem-non}

Finally, we show how the same reduction to misspecified linear bandits can be exploited to construct an algorithm with polynomial sample complexity.

\begin{theorem-non}[Informal, see \Cref{thm:sample}]
    Given an $\eta>0$, there exists an algorithm that with high probability finds an $\eta$-optimal contract using $\widetilde O\left(\frac{\poly(\mathcal{I})}{\eta^4}\right)$ samples,
    where $\mathcal{I}$ is the instance size.
\end{theorem-non}

\subsection{Related Works}

In this section, we review the studies most relevant to our work. For a broader overview on algorithmic contract theory, we refer readers to the recent survey by \citet{dutting2024algorithmic}. 

The computational study of contract design has attracted growing interest in recent years, beginning with \citet{babaioff2006combinatorial}, who introduced the multi-agent setting, and \citet{dutting2021complexity}, who explored computational aspects of the single-agent model. 
A more recent line of work extends contract analysis to Bayesian settings, where the principal has a known probability distribution over the agent’s types~\citep{guruganesh2021contracts, castiglioni2021bayesian, alon2021contracts, alon2023bayesian}. In particular, \citet{castiglioni2023designing} introduce menus of randomized contracts and show that they can be computed efficiently. \citet{gan2022optimal} generalize the principal-agent problem to cases where the principal’s action space is potentially infinite and subject to design constraints, proving that optimal randomized mechanisms remain efficiently computable.  


\paragraph{Single-dimensional Contract design}
The computational perspective on single-parameter models was first introduced by \citet{alon2021contracts,alon2023bayesian}.
The study of the single parameter setting is mainly motivated by the known computational complexity separation between single-parameter and multi-parameter settings in algorithmic mechanism design \cite{nisan1999algorithmic,daskalakis2015multi}. However, \citet{castiglioni2025reduction} showed that, in the context of contract design, the relationship between the two settings is more intricate. In particular, they provided an almost approximation-preserving reduction from multi- to single-dimensional contract design for \emph{multiplicative} approximations.

\paragraph{Learning Optimal Contracts} The problem of learning optimal contracts was first explored by \citet{ho2014adaptive}, who focused on instances with a very specific structure, similar to what is done in \citet{cohen2022learning} (see the discussion in \citet{zhu2022online} for more details about these assumptions). \citet{zhu2022online} extended this line of work by studying the problem in general principal-agent settings. They proposed an algorithm with a regret upper bound of $\widetilde O(m^{1/2} T^{1-1/(2m+1)})$ and established an almost matching lower bound of $\Omega(T^{1-1/(m+2})$. \citet{bacchiocchi2023learning} developed an algorithm that achieves cumulative regret with respect to an optimal contract upper bounded by $\widetilde O(m^n T^{4/5})$, which remains polynomial in the instance size when the number of agent actions $n$ is constant.
Our problem also shares connections with the more general problem of learning an optimal commitment, in repeated Stackelberg games \cite{letchford2009learning,peng2019learning,blum2014learning,balcan2015commitment} and online Bayesian persuasion \cite{castiglioni2020online,castiglioni2021multi,bernasconi2023optimal}.



\section{Preliminaries}

\subsection{The Bayesian Principal-Agent Problem}\label{sec:preliminaries_problem}

An instance of a principal-agent problem is defined by a set $A$ of $n \coloneqq |A|$ agent's (hidden) actions. 
Each action leads stochastically to one of the $m$ possible different outcomes $\omega\in \Omega$, which provides  a reward $r_\omega$ to the principal. Each action defines a different probability distribution on the outcomes. We denote with $F_{a,\omega}$ the probability that action $a\in A$ leads to outcome $\omega\in \Omega$. Thus, we have that $F_{a,\omega}\in[0,1]$ for each $a\in A$ and $\omega\in\Omega$, and $\sum_{\omega\in\Omega}F_{a,\omega}=1$ for each $a\in A$. We consider the \emph{single-dimensional hidden-type model} introduced by \citet{alon2021contracts}, in which each action $a$ requires $c_a$ units of effort from the agent. Each agent has a hidden type $\theta\in\Theta\coloneqq [0,1]$, which represents their cost per unit of effort. Consequently, an agent of type $\theta$ incurs a cost of $c_{a,\theta}=\theta\cdot c_a$ when performing action $a\in A$.
We say that this setting is single-dimensional in the sense that the agent's type $\theta$ is a single real value that only influences the cost of the agent's actions. In contrast, in general \emph{multi-dimensional} contract design problems, both the action probability $F_{a,\theta}$ and the action cost $c_{a,\theta}$ can depend arbitrarily on the agent's type $\theta$.

The goal of the principal is to design a contract which specifies a payment for each possible outcome. We denote with $p\in\Reals^m_{\ge 0}$ such a payment vector where $p_\omega$ defines the payment when outcome $\omega$ realized.\footnote{The assumption that payments are non-negative is a standard assumption in contract design, known as \emph{limited-liability}~\citep{carroll2015robustness, innes1990limited}.}

When an agent of type $\theta$ takes action $a$, then their expected utility is
\(
\Ua_\theta(p,a):=\sum_{\omega \in \Omega} F_{ a, \omega } p_\omega - \theta c_{a}.
\)
On the other hand, the principal expected utility is 
\(
\Up(p,a):= \sum_{\omega \in \Omega} F_{a, \omega} (r_\omega - \, p_\omega).
\)
Given a contract $p\in\Reals^m_{\ge 0}$ an agent of type $\theta$ plays an action that is
\begin{enumerate}
	\item \emph{incentive compatible} (IC), \emph{i.e.}, it maximizes their expected utility over actions in $A$; and
	\item \emph{individually rational} (IR), \emph{i.e.}, it has non-negative expected utility.
\end{enumerate}

To simplify exposition we can assume w.l.o.g.~that there is an action $a\in A$ such that $c_a=0$, thus conveniently including IR into the IC constraints. In the case in which there are multiple actions which satisfy the IC constraints, we assume (as it it standard in contract design \citep{carroll2015robustness}) that ties are broken in favor of the principal. We can define the set $\Bcal^\theta(p)$ of all actions which are best-responses for an agent of type $\theta$ to a contract $p$. For ease of presentation, we will also need to define the set of actions $\Bcal_\epsilon^\theta(p)$ which are $\epsilon$-IC, formally
\[
{\Bcal}^{\theta}_\epsilon(p)\coloneqq\left\{a\in A: \Ua_\theta(p,a) \ge \Ua_\theta(p,a')-\epsilon\quad\forall a'\in A\right\},
\]
and clearly ${\Bcal}^{\theta}(p)={\Bcal}^{\theta}_0(p)$. The assumption that ties are broken in favor of the principal can be written by assuming that the action $b^\theta(p)\in A$ taken by an agent of type $\theta$ for a contract $p$, is any action such that
$b^\theta(p)\in \arg \max_{a \in \Bcal^{\theta}(p)} \Up(a,p).$ \footnote{It is convenient, especially in \Cref{sec:learning}, to fix the way in which ties are further broken, and we assume that $b^\theta(p)$ is chosen in the set $\arg \max_{a \in \Bcal^{\theta}(p)} \Up(a,p)$ according to a fixed ordering of the agent's actions.}

\subsection{Single-Dimensional Bayesian Principal-Agent Problem}

In the first part of the paper, we focus on \emph{single-dimensional Bayesian} principal-agent problems, in which the type of each agent is drawn from a publicly-known distribution $\Gamma$ over types. This distribution can be absolutely continuous, discrete, or a combination of the two.

With slight abuse of notation, we write $\Up(p,\theta)$ instead of $\Up(p,b^\theta(p))$ to denote the utility of the principal when the agent's realized type is $\theta$.
Then, given an instance of our problem, we define $\OPT\coloneqq\sup_{p\in\Reals_{\ge 0}^m}\Up(p)$, where $\Up(p)$ is the principal expected utility of contract $p$ which can be written as
\[
\Up(p)\coloneqq\mathbb{E}_{\theta \sim \Gamma} \left[\Up(p,\theta)\right].
\]

\subsection{Learning an Optimal Contract}\label{sec:prelimLearning}

In the second part of the paper, we focus on a regret minimization setting in which the principal interacts sequentially with a stream of $T$ agents.
At each round $t\in [T]$, the type $\theta_t$ of the $t$-th agent is sampled accordingly to the distribution $\Gamma$, which is unknown to the principal.
%
Then, the principal commits to a contract $p_t\in\Reals^m_{\ge 0}$, and the agent plays the best response $a_t=b^{\theta_t}(p_t)$. Finally, the principal observes the outcome $\omega \sim F_{a_t}$, and receives utility $r_{\omega_t}-p_{\omega_t}$. 

The principal's regret is defined as:
\[
R_T= T \cdot\OPT- \mathbb{E}\left[\sum_{t \in T}  \mathbb{E}_{\theta\sim \Gamma}[\Up(p_t,\theta)]\right],   
\]
where the expectation is on the randomness of the algorithm.
Our goal is to design algorithms with regret sub-linear in $T$ and polynomial in the instance size.

Another problem studied in the learning literature consists in finding an approximately optimal contract with high probability. The interaction between the learner and the environment is the same as in the regret-minimizing scenario. However, the goal is to minimize the number of rounds needed to find a good approximation of the optimal contract.
Formally, given an approximation error $\eta$ and a probability $\delta$, we aim at finding a contract $p$ such that
\[
\mathbb{P}\left[\mathbb{E}_{\theta\sim \Gamma}\left[\Up(p,\theta)\right]\ge \OPT-\eta\right]\ge 1-\delta.
\]
The goal is to find such a contract with a small number of samples $T$.

\section{An Additive $\PTAS$ for Single-Dimensional Costs and Continuous-Types}\label{sec:PTAS}

In this section, we show that the problem admits an additive $\PTAS$. 
This is in contrast to the multi-dimensional problem, that does not admits an additive $\PTAS$, unless $\PolyClass=\NP$ (see \cite{guruganesh2021contracts} and \cite[Theorem~7 for $\rho=1$,]{castiglioni2022bayesian}). 

While in instances with multi-dimensional costs, all types might have very different incentives, our result relies on the simple observation that in single-dimensional instances, similar types share similar incentives and behavior.
Our algorithm exploits this feature and groups agents with similar types into $k$ buckets.
Then, replacing each bucket with a representative type, we can employ algorithms for discrete types such as the one proposed by ~\citet{guruganesh2021contracts} and \citet{castiglioni2022bayesian} to solve the discretized problem. Such algorithms run in time $\poly(n^k, m)$ and therefore in polynomial time when the number of buckets $k$ is constant.
The core of the proof is to manage the error with respect to $\OPT$ in terms of the number of buckets $k$. In particular, we will show that when we group two agents of type $\theta$ and $\theta'$ together, we make an error of order $\poly(|\theta-\theta'|)$. Thus, to guarantee an $\epsilon=O(1)$ additive error, we can set $k=\poly(1/\epsilon)$, resulting in a $\PTAS$.

More specifically, our algorithm works as follows. 
Given an instance of single-dimensional Bayesian contract design (possibly with continuous types), we build an instance of contract design with discretized types as follows.
Given a parameter $\delta$ to be set in the following, let $\theta_i=(i-1/2) \delta$ for each $i \in \lceil 1/\delta\rceil$.
Then, we consider the set of types $\Theta_{\delta}:=\{\theta_i\}_{i\in\lceil 1/\delta\rceil}$ each realizing with probability 
\begin{align*}
    \gamma_{\theta_i}\coloneqq \begin{cases}
    \mathbb{P}_{\theta\sim\Gamma}[\theta\in{(\theta_i-\delta/2,\theta_i+\delta/2]}]&\text{if $i>1$}\\
    \mathbb{P}_{\theta\sim\Gamma}[\theta\in{[0,\delta/2]}]&\text{if $i=1$}
    \end{cases}.
\end{align*}
Here, we do not require any particular assumption on how the distribution is represented. Computationally, we only make the mild assumption to have access to an oracle, returning the probability that $\theta\sim \Gamma$ belongs to a given interval.\footnote{It will be clear from our proof that our algorithm is robust to approximations in the probabilities returned by the oracle.}
As we already observed, this discretized problem with types supported on $\Theta_\delta$ can be solved in polynomial time for any constant $\delta>0$~\citep{guruganesh2021contracts, castiglioni2022bayesian}.
Let $\tilde p$ be the contract computed on the discretized problem.
Then, our algorithm returns as a solution the contract $p= \tilde p +\alpha (r-\tilde p)$, where $\alpha$ will be set in the following. Intuitively, this step makes the contract robust to the use of approximate types. 

Formally, we obtain the following guarantees. 
\begin{theorem}\label{thm:PTAS}
    There is an algorithm that for any $\epsilon>0$ runs in time $\poly(n^{{1}/{\epsilon}},m)$ and finds a contract $p\in\Reals^m_{\ge 0}$ such that $\Up(p)\ge \OPT-O(\epsilon)$.
\end{theorem}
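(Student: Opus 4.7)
My plan is to bound $\OPT - \Up(p)$ via two applications of the same IC-perturbation argument: one relating the output $p = (1-\alpha)\tilde p + \alpha r$ to the discretized optimum $\OPT_\delta$, and one relating $\OPT_\delta$ to $\OPT$ via the mirrored contract $\hat p := (1-\alpha) p^* + \alpha r$, where $p^*$ is an optimal continuous contract. Throughout I abbreviate $P_a := \sum_\omega F_{a,\omega} \tilde p_\omega$ and $R_a := \sum_\omega F_{a,\omega} r_\omega$ and exploit the linearity $\Up(p, a) = (1-\alpha)\Up(\tilde p, a)$; the bucket width gives $|\theta - \theta_i| \le \delta/2$ for every $\theta$ in bucket $i$, and costs $c_a$ are assumed to lie in $[0,1]$.

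For the first application I fix $\theta$ in bucket $i$, set $\tilde a_i := b^{\theta_i}(\tilde p)$ and $a := b^{\theta}(p)$, and combine
\[
(1-\alpha)(P_a - P_{\tilde a_i}) + \alpha(R_a - R_{\tilde a_i}) \ge \theta(c_a - c_{\tilde a_i}) \quad \text{and} \quad P_a - P_{\tilde a_i} \le \theta_i(c_a - c_{\tilde a_i}),
\]
which follow, respectively, from the IC of $a$ at $\theta$ under $p$ and of $\tilde a_i$ at $\theta_i$ under $\tilde p$. A direct manipulation isolating $R_a - R_{\tilde a_i}$ in the first inequality and substituting the upper bound on $P_a - P_{\tilde a_i}$ from the second yields $\Up(\tilde p, a) - \Up(\tilde p, \tilde a_i) \ge (\theta - \theta_i)(c_a - c_{\tilde a_i})/\alpha \ge -\delta/(2\alpha)$. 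Multiplying by $(1-\alpha)$ and integrating against $\Gamma$ delivers $\Up(p) \ge (1-\alpha)\OPT_\delta - \delta/(2\alpha)$. For the second application I swap roles and test $\hat p$ on the discrete instance: with $a_i := b^{\theta_i}(\hat p)$ and $a_\theta := b^\theta(p^*)$, the identical two-line argument gives $\Up(p^*, a_i) \ge \Up(p^*, a_\theta) - \delta/(2\alpha)$ for every $\theta$ in bucket $i$; averaging against $\Gamma$ over the bucket and using the optimality of $\tilde p$ for the discrete problem yields $\OPT_\delta \ge (1-\alpha)\OPT - \delta/(2\alpha)$.

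Chaining the two bounds and using $\OPT \le 1$ gives $\Up(p) \ge \OPT - 2\alpha - \delta/\alpha$; balancing with $\alpha = \Theta(\sqrt{\delta})$ produces total additive error $O(\sqrt{\delta})$, so setting $\delta = \Theta(\epsilon^2)$ achieves the promised $O(\epsilon)$ loss. The discrete instance has $\lceil 1/\delta \rceil = O(1/\epsilon^2)$ types, so invoking the algorithm of \citet{guruganesh2021contracts} or \citet{castiglioni2022bayesian} solves it in time $\poly(n^{O(1/\epsilon^2)}, m)$, consistent with the theorem statement up to the precise polynomial appearing in the exponent. The main obstacle is the second step: since $\Gamma$ is allowed to be any distribution (no density bound), the mass sitting near any of the at-most-$(n-1)$ discontinuities of the step function $\theta \mapsto \Up(p^*, \theta)$ can be $\Theta(1)$, so a naive evaluation of $p^*$ on the bucket weights may miss $\OPT$ by a constant. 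The perturbation $\hat p$ repairs this by injecting a uniform $\alpha\,\Up(p^*, a)$ bonus for every action $a$; this bonus dominates the $\delta$-scale type-misspecification slack and forces each discrete type $\theta_i$ to select an action whose continuous-problem principal utility is near-optimal over its entire bucket.
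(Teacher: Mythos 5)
Your proposal is correct and follows the same high-level blueprint as the paper (discretize types, solve the discrete instance, and return the linear interpolation $\tilde p + \alpha(r - \tilde p)$, then balance $\alpha$ and $\delta$), but you package the two discretization-error steps differently, and in a way that is arguably cleaner. The paper cites the Dütting--Roughgarden--Talgam-Cohen robustification result (their Lemma 3.2, \citet[Prop.~2.4]{dutting2021complexity}) as a black box, together with a discretization lemma (their Lemma 3.3), and for the ``$\mathrm{OPT}_\delta$ vs.\ $\mathrm{OPT}$'' direction it constructs an auxiliary map $\rho(\theta) = \arg\max_{a \in \bigcup_{\theta'\in\mathcal P_i}\Bcal^{\theta'}(p^\star)}\Up(p^\star, a)$ and verifies separately that $\rho$ is $\delta$-IC for every $\theta$ in the bucket and that $\Up(p^\star,\rho(\theta))\ge \Up(p^\star,\theta)$. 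You instead re-derive the robustification inequality inline via the symmetric two-line IC manipulation, and for the second direction you directly compare $a_i := b^{\theta_i}(\hat p)$ against $a_\theta := b^\theta(p^\star)$, which subsumes both of the paper's two checks in one chain and avoids introducing the argmax map $\rho$ entirely. Your interpretation of $\hat p$ as injecting a type-independent bonus of $\alpha\,\Up(p^\star,a)$ to every action's agent-utility is also an accurate way to see why the construction works. Two minor remarks: (i) the bound $|c_a - c_{a'}|\le 1$ you use (the paper uses the same bound in its Lemma 3.3 proof, there labelled ``Cauchy--Schwarz'') is correct but relies on costs lying in $[0,1]$, which the paper uses implicitly; (ii) you write ``$p^\star$ is an optimal continuous contract,'' but $\mathrm{OPT}$ is a supremum that may not be attained --- the paper handles this with a near-optimizer $p^\star_k$ and a limit $k\to\infty$, and the same patch applies to your argument with no further changes. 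The exponent discrepancy you flag ($1/\epsilon^2$ vs.\ the stated $1/\epsilon$) is present in the paper's own parameter choice as well and is immaterial to the $\PTAS$ claim since $\epsilon$ is treated as a constant.
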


Technically, our result is based on two main observations:
\begin{itemize}
\item If the payment vector $p= \tilde p +\alpha(r-\tilde p)$ is applied to the original instance of the problem, it yields a principal's utility close to the utility of contract $\tilde p$ applied to the discretized instance;
\item We relate the utility achieved by an optimal contract in the original and the discretized instance, showing that the latter is only slightly smaller than the former.
\end{itemize}
The theorem follows from the combination of these two results.

We start the formal analysis providing some thecnical lemmas. First, we notice that an optimal contract for the discretized instance that incentivizes action $a_i$ for a type $\theta_i\in \Theta_\delta$, approximately incentivizes the same action for types close to $\theta_i$ in the original instance.
As a tool to exploit this $\delta$-best response, we will use the well-known result of \citet{dutting2021complexity} that relates IC and approximate-IC contracts. We recall the statement of this result in the following lemma.
\begin{lemma}[{\citep[Proposition~2.4]{dutting2021complexity}}]\label{lem:linearization}
Let $\rho:\Theta\to A$ be a generic mapping from types to actions, specifying the action $\rho(\theta)$ taken by each type $\theta\in\Theta$, and let $p\in\Reals^m_{\ge 0}$ satisfy $\rho(\theta)\in \Bcal^{\theta}_{\epsilon}(p)$ for all $\theta\in\Theta$. Then, $\tilde p:=p+\alpha(r-p)$ guarantees $\Up(\tilde p,\theta)\ge \Up(p,\rho(\theta))-\left(\frac{\epsilon}{\alpha}+\alpha\right)$.
\end{lemma}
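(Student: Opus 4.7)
The plan is to exploit the fact that $\tilde p = (1-\alpha) p + \alpha r$ is a convex combination of $p$ and the ``full-reward'' contract $r$ (which hands the entire outcome reward back to the agent). This yields two clean identities that drive the argument. Concretely, for any action $a \in A$, I would first observe that $r_\omega - \tilde p_\omega = (1-\alpha)(r_\omega - p_\omega)$, which immediately gives
\[
\Up(\tilde p, a) \,=\, (1-\alpha)\,\Up(p, a).
\]
Then, expanding $\Ua_\theta(\tilde p, a) = \sum_\omega F_{a,\omega} \tilde p_\omega - \theta c_a$ and regrouping the reward contribution together with the principal's utility, I would obtain
\[
\Ua_\theta(\tilde p, a) \,=\, \Ua_\theta(p, a) + \alpha\,\Up(p, a).
\]
The second identity is the key structural fact: switching from $p$ to $\tilde p$ shifts the agent's utility on every action by exactly $\alpha\cdot \Up(p,a)$, so actions that are more profitable for the principal become proportionally more attractive to the agent as well.

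Next, let $b := b^\theta(\tilde p)$ denote the action actually played by the agent under $\tilde p$. I would instantiate agent IC, $\Ua_\theta(\tilde p, b) \ge \Ua_\theta(\tilde p, \rho(\theta))$, plug in the second identity on both sides, and rearrange to
\[
\alpha\bigl[\Up(p,b) - \Up(p, \rho(\theta))\bigr] \,\ge\, \Ua_\theta(p,\rho(\theta)) - \Ua_\theta(p,b).
\]
Combining this with the hypothesis $\rho(\theta) \in \Bcal^\theta_\epsilon(p)$, which gives $\Ua_\theta(p,\rho(\theta)) \ge \Ua_\theta(p, b) - \epsilon$, and dividing by $\alpha>0$ yields the intermediate bound $\Up(p, b) \ge \Up(p, \rho(\theta)) - \epsilon/\alpha$.

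Finally, I would apply the first identity at action $b$ to conclude:
\[
\Up(\tilde p, \theta) \,=\, (1-\alpha)\,\Up(p,b) \,\ge\, (1-\alpha)\bigl[\Up(p, \rho(\theta)) - \epsilon/\alpha\bigr] \,\ge\, \Up(p, \rho(\theta)) - \epsilon/\alpha - \alpha,
\]
where the last step uses the elementary inequality $(1-\alpha)X \ge X - \alpha$ for any $X \le 1$, which applies here because $\Up(p,\rho(\theta)) \le 1$ (rewards lie in $[0,1]$ and payments are non-negative, so $\Up(p,\rho(\theta))-\epsilon/\alpha \le 1$ as well). The tie-breaking convention causes no complication, since the IC inequality is valid for any best response to $\tilde p$ and the principal-favorable tie-break only improves the realized $\Up(\tilde p,\theta)$. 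I expect the only genuine obstacle to be spotting the clean additive decomposition $\Ua_\theta(\tilde p, a) = \Ua_\theta(p, a) + \alpha \Up(p, a)$; once that identity is isolated, the rest is a short algebraic chain that transparently produces the $\epsilon/\alpha$ term (from IC slack absorbed by the welfare tilt) and the $\alpha$ term (from the principal ``leaking'' an $\alpha$-fraction of utility back to the agent).
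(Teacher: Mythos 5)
Your proof is correct and is essentially the standard argument behind the cited result (Proposition~2.4 of \citet{dutting2021complexity}), which the paper imports without reproving: the decomposition $\Ua_\theta(\tilde p,a)=\Ua_\theta(p,a)+\alpha\,\Up(p,a)$ together with $\Up(\tilde p,a)=(1-\alpha)\Up(p,a)$ is exactly how the $\frac{\epsilon}{\alpha}+\alpha$ loss arises. The only implicit ingredients are $\alpha\in(0,1]$ and the normalization $r_\omega\in[0,1]$ (so that $\Up(p,\rho(\theta))\le 1$), both of which are standard in this paper and in the regime where the lemma is applied.
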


Next, we must establish a connection between the discretized and continuous instances of the problem. 
We do that through the following lemma, which has two main implications. 
First, it implies that if for each type $\theta_i\in \Theta_\delta$ the  action played (which may not be IC) does not change across the interval $(\theta_i-\delta/2, \theta_i+\delta/2]$, then the principal's utility in the original and discretized instances coincide.
Second, it shows that the best response of a type $\theta_i \in \Theta_\delta$ is still an approximate best response for each type in $(\theta_i-\delta/2, \theta_i+\delta/2]$.
In the following lemma, we prove a slightly more general result for arbitrary partition of $\Theta=[0,1]$.

\begin{restatable}{lemma}{lemmadiscretization}\label{lem:discretization}
    Let $\Pcal=\{\Pcal_1,\ldots,\Pcal_k\}$ be any finite partition of $[0,1]$. Moreover let $\hat\Theta:=\{\theta_1,\ldots,\theta_k\}$ be such that $\theta_i\in\Pcal_i$ for all $i\in[k]$, and let $\rho:\Theta\to A$ be such that $\rho(\theta)=\rho(\theta_i)$ for each $\Pcal_i$ and $\theta\in\Pcal_i$. Then, for any contract $p\in\Reals^m_{\ge 0}$ it holds
    \[
        \mathbb{E}_{\theta\sim\Gamma} \left[\Up(p, \rho(\theta))\right] = \sum_{\theta_i\in\hat\Theta} \gamma_{\theta_i} \Up(p, \rho(\theta_i)),
    \]
    where $\gamma_{\theta_i}\coloneqq\mathbb{P}_{\theta\sim\Gamma}[\theta\in \Pcal_i]$.
    Moreover, if for any $\theta\in\Pcal_i$, $i \in [k]$, we have $\rho(\theta)\in \Bcal^{\tilde\theta}(p)$ for some $\tilde \theta\in \Pcal_{i}$, then $\rho(\theta)\in\Bcal^{\theta}_{\diam(\Pcal_i)}(p)$ for all $\theta\in\Pcal_i$, where $\diam(\Pcal_i)\coloneqq\sup_{\theta,\theta'\in\Pcal_i}|\theta-\theta'|$.
\end{restatable}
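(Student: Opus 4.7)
The plan is to prove the two statements separately, with the first being essentially a law-of-total-expectation computation and the second being a short Lipschitz-style argument that exploits the fact that the agent's utility $\Ua_\theta(p,a)$ depends on $\theta$ only through the linear term $-\theta c_a$.

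For the first identity, I would start from the definition $\mathbb{E}_{\theta\sim\Gamma}[\Up(p,\rho(\theta))]$, split the integral along the partition $\Pcal$, and use the crucial observation that $\Up(p,a)=\sum_\omega F_{a,\omega}(r_\omega - p_\omega)$ is a function of the action alone and does not depend on $\theta$. Combined with the hypothesis that $\rho$ is constant on each $\Pcal_i$ with value $\rho(\theta_i)$, this gives
\[
\mathbb{E}_{\theta\sim\Gamma}[\Up(p,\rho(\theta))] = \sum_{i=1}^k \int_{\Pcal_i} \Up(p,\rho(\theta_i))\, d\Gamma(\theta) = \sum_{i=1}^k \gamma_{\theta_i} \Up(p,\rho(\theta_i)),
\]
which is the claimed identity. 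This step has no real obstacle; the only subtlety is to handle a possibly continuous-plus-discrete $\Gamma$, which is absorbed by writing everything as an integral against the measure associated with $\Gamma$.

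For the approximate best-response claim, fix $i\in[k]$, $\theta\in\Pcal_i$, and let $\tilde\theta\in\Pcal_i$ be such that $\rho(\theta)\in\Bcal^{\tilde\theta}(p)$. For any competing action $a'\in A$ I would write
\[
\Ua_\theta(p,\rho(\theta))-\Ua_\theta(p,a') = \Bigl[\Ua_{\tilde\theta}(p,\rho(\theta))-\Ua_{\tilde\theta}(p,a')\Bigr] + (\tilde\theta-\theta)\bigl(c_{\rho(\theta)}-c_{a'}\bigr),
\]
since the payment parts $\sum_\omega F_{a,\omega}p_\omega$ cancel out in the cost-rewriting. The first bracket is non-negative by the IC property of $\rho(\theta)$ at type $\tilde\theta$, and the second term has absolute value at most $|\tilde\theta-\theta|\cdot|c_{\rho(\theta)}-c_{a'}| \le \diam(\Pcal_i)$, using the normalization $c_a\in[0,1]$. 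Therefore $\Ua_\theta(p,\rho(\theta))\ge \Ua_\theta(p,a')-\diam(\Pcal_i)$, which is exactly $\rho(\theta)\in\Bcal^{\theta}_{\diam(\Pcal_i)}(p)$.

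The main thing to get right is the bookkeeping on this last decomposition: one needs to exploit that the only $\theta$-dependence in $\Ua_\theta(p,a)$ is the linear cost term, so changing $\theta$ to $\tilde\theta$ perturbs pairwise utility differences only by $(\tilde\theta-\theta)(c_a-c_{a'})$. Once this linearity in $\theta$ is isolated, both parts of the lemma reduce to elementary estimates, and there is no genuine obstacle beyond being careful with signs and with the measure-theoretic statement of the first identity.
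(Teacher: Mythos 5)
Your proof is correct and follows essentially the same approach as the paper: the first identity is the law of total expectation exploiting that $\rho$ is constant on each $\Pcal_i$ and that $\Up(p,a)$ does not depend on $\theta$, and the second part uses that $\theta$ enters $\Ua_\theta(p,a)$ only through the linear cost term, so swapping $\theta$ for $\tilde\theta$ perturbs utility differences by $(\tilde\theta-\theta)(c_{\rho(\theta)}-c_{a'})$, bounded by $\diam(\Pcal_i)$ via $c_a\in[0,1]$. Your presentation of the second part as a clean difference-of-differences decomposition is slightly tidier than the paper's in-line rewriting, but the underlying argument is identical.
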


For instance, we can consider the set $\Theta_\delta$ defined above together with the natural partition of $[0,1]$ given by
\(\{[0,\delta],(\delta,2\delta],\ldots(\lfloor\frac1\delta\rfloor\delta,1]\}\).
The first part of \Cref{lem:discretization} states that if $\rho:\Theta\to A$ is a constant function on each partition $\Pcal_i$, then the values under the original and discretized instance are the same. The second part of the statement states that if the function $\rho$ recommends to an agent of type $\theta\in\Pcal_i$ an action that is a best response to a type $\tilde\theta\in\Pcal_i$, then $\rho(\theta)$ is $\diam(\Pcal_i)$-IC for any agent with type belonging to $\Pcal_i$.


Equipped with the lemma above, we can establish the existence of a $\PTAS$.

\begin{proof}[Proof of \Cref{thm:PTAS}]
    Let $\tilde p$ be the optimal contract of the discrete-type instance defined over types $\Theta_\delta$ and with distribution $\{\gamma_\theta\}_{\theta \in \Theta_\delta}$, and let $p\coloneqq \tilde p+\alpha(r-\tilde p)$, where $\delta>0$ and $\alpha>$ will be defined in the following.
    By using the algorithm of \citet[Theorem~8]{castiglioni2022bayesian} or \citet[Lemma~2]{guruganesh2021contracts}, we can build $\tilde p$ in time $\poly(n^{|\Theta_\delta|}, m)$, where $|\Theta_\delta|=\lfloor\frac{1}{\delta}\rfloor$.

    First, we relate the principal's utility when playing $p$ in the original instance to their utility when playing $\tilde p$ in the discretized instance.
    Let $\rho_{\tilde p}^\delta(\theta)$ be the function that assigns to each type $\theta\in [0,1]$ the best response played by the closest type in $\Theta_\delta$. Formally, for all $\theta\in[0,1]$, taking  $\theta_i\in \Theta_\delta$ such that $\theta \in(\theta_i-\delta/2,\theta_i+\delta/2]$, we have that $\rho_{\tilde p}^\delta(\theta) = b^{\theta_i}(\tilde p)$.
    Then, by the second part of \Cref{lem:discretization}, for each $\theta \in [0,1]$ and  $\theta_i \in \Theta_\delta$  such that $\theta \in(\theta_i-\delta/2,\theta_i+\delta/2]$, it holds
    \[ \rho_{\tilde p}^\delta(\theta)\in \Bcal^{\theta_i}(\tilde p) \subseteq \Bcal_{\delta}^{\theta}(\tilde p).   \]
    %
    Thus, by \Cref{lem:linearization}, we can readily say that for each $\theta\in\Theta$ it holds
	\begin{align*}
	\Up(p,\theta)+\left(\frac{\delta}{\alpha}+\alpha\right)\ge \Up(\tilde p, \rho^\delta_{\tilde p}(\theta)).
	\end{align*}
	By taking expectations over $\Gamma$ we find that
	\begin{align}\label{eq:ptastmp1}
	\Up(p)+\left(\frac{\delta}{\alpha}+\alpha\right)\ge \mathbb{E}_{\theta\sim\Gamma} \left[\Up(\tilde p, \rho^\delta_{\tilde p}(\theta))\right].
	\end{align}

    Now, we have to relate the optimal value of the discretized instance to the optimal value of the original one, showing that these are not too dissimilar.
    
    Instrumentally to this goal, we can apply the first part of \Cref{lem:discretization} to the right-hand side of \Cref{eq:ptastmp1}, obtaining:
	\begin{align}\label{eq:ptasAdded}
	\mathbb{E}_{\theta\sim\Gamma} \left[\Up(\tilde p, \rho^\delta_{\tilde p}(\theta))\right]= \sum_{\theta_i\in\Theta_{\delta}} \gamma_{\theta_i} \Up(\tilde p, \theta_i).
	\end{align}
	Now notice that $\tilde p$ is the optimum with respect to the discretized instance and thus  
    \begin{align}\label{eq:ptasAdded2}
    \sum_{\theta_i\in\Theta_{\delta}} \gamma_{\theta_i} \Up(\tilde p, \theta_i)\ge \sum_{\theta_i\in\Theta_{\delta}} \gamma_{\theta_i} \Up(\bar p, \theta_i)\quad \forall \bar p\in \Reals^m_{\ge 0}.
    \end{align}
	In particular, we can take $\bar p=p^{\star}_k+\alpha(r-p^{\star}_k)$, where $p^{\star}_k$ is any contract such that $\Up(p^{\star}_k)\ge \OPT-\frac{1}{k}$ where $k\in\Naturals$ (we observe that a contract that attains the supremum might not exist).

        Now, for $i\in \lfloor \frac{1}{\delta}\rfloor$ and $k\in\Naturals$, we define the set of possible best responses of types in the interval $\Pcal_i:=(\theta_i-\delta/2,\theta_i+\delta/2]$ for contract $p^\star_k$ as 
        \[
        \mathcal{A}_{k,i}=\bigcup_{\theta\in\Pcal_i} \Bcal^{\theta}(p^\star_k).
        \]
        Then, we define a function $\rho:\Theta\to A$ so that, for any $\theta\in\Pcal_i$, it holds
        \[
	\rho(\theta) = \arg\max_{a\in\mathcal{A}_{k,i}} \Up( p^\star_k,a).
	\]
   Intuitively, $\rho(\theta)$ is the action that maximizes the principal's utility among all actions that are IC for a type $\theta'$ that is in the same partition as $\theta$.
    Now, thanks to the second part of \Cref{lem:discretization}, $\rho(\theta)\in\Bcal_\delta^{\theta}(p^\star_k)$.
    Hence, we can apply \Cref{lem:linearization} and take the expectation over $\Gamma$ obtaining
	\begin{align}\label{eq:ptastmp2}
    \sum_{\theta_i\in\Theta_{\delta}}\gamma_{ \theta_i}\Up(\bar p, \theta_i)\ge \sum_{\theta_i\in\Theta_{\delta}}\gamma_{\theta_i}\Up(p_k^\star, \rho(\theta_i))-\left(\frac\delta\alpha+\alpha\right).
	\end{align}
	Since $\rho(\theta)$ is constant on each partition, we can use the first part of \Cref{lem:discretization} to get that
	\[
	\sum_{\theta_i\in\Theta_{\delta} }\gamma_{\theta_i}\Up(p_k^\star, \rho(\theta_i))=\mathbb{E}_{\theta\sim\Gamma} \left[\Up(p_k^\star, \rho(\theta))\right].
	\]
	Now, consider each term $\Up(p_k^\star, \rho(\theta))$ separately. By our construction of $\rho(\theta)$ it is easy to see that 
    \begin{align}\label{eq:PTASFinal}
    \Up(p^\star_k, \rho(\theta))\ge \Up(p_k^\star, \theta).
    \end{align}
    Combining, \Cref{eq:ptastmp1,eq:ptastmp2,eq:ptasAdded,eq:ptasAdded2,eq:PTASFinal} we get that \(\Up(p)\ge\mathbb{E}_{\theta\sim\Gamma}[\Up(p^\star_k,\theta)]-2\left(\frac\delta\alpha+\alpha\right)\). Therefore,
	\begin{align*}
	\Up(p)&
    \ge \mathbb{E}_{\theta\sim\Gamma}[\Up(p^\star_k,\theta)]-2\left(\frac\delta\alpha+\alpha\right) \\
    &=\Up(p^\star_k)-2\left(\frac\delta\alpha+\alpha\right)\ge \OPT-\frac{1}{k}-2\left(\frac\delta\alpha+\alpha\right).
	\end{align*}
    Choosing $\alpha=\sqrt{\delta}=\epsilon^2/16$ and taking the limit $k\to+\infty$ concludes the proof.	
\end{proof}

\section{Impossibility of an Additive $\FPTAS$}\label{sec:hardness}

In this section, we show that the additive $\PTAS$ obtained in the previous section is tight. In particular, we show that the single-dimensional Bayesian principal-agent problem does not admit an additive $\FPTAS$, even if the distribution has finite support. As it is customary for distributions with finite support, we assume that the distribution $\Gamma$ is represented by a tuple $(\gamma_\theta)_{\theta \in \supp(\Gamma)}$.

\begin{theorem}\label{thm:reduction}
The single-dimensional Bayesian principal-agent problem does not admit an additive $\FPTAS$, unless $\PolyClass=\NP$.
\end{theorem}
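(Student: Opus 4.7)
The plan is to adapt the reduction of \citet{castiglioni2025reduction} from multi-dimensional to single-dimensional contract design, but working with polynomially-scaled quantities rather than the exponentially decreasing costs and rewards used in the multiplicative version. Since multi-dimensional Bayesian contract design is NP-hard to approximate within inverse-polynomial additive error \citep{guruganesh2021contracts, castiglioni2022bayesian}, a reduction that preserves additive error up to $1/\poly(\mathcal{I})$ would immediately rule out an additive $\FPTAS$ for the single-dimensional case. The starting point is therefore an NP-hard additive-gap instance of multi-dimensional contract design with a constant number of types whose optimum lies in a range of width $1/\poly(\mathcal{I})$.

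The construction I would write assigns to each multi-dimensional type $\theta^{(i)}$ a distinct cost-per-unit-of-effort $\lambda_i \in [0,1]$, and replaces each multi-dimensional action $a$ originally belonging to type $i$ with a single-dimensional copy whose effort $c_a$ is calibrated so that, under the cost parameter $\lambda_i$, the single-dimensional cost $\lambda_i c_a$ reproduces the target multi-dimensional cost. Crucially, one inflates/deflates rewards and effort levels so that: (a) all outcome rewards and expected payments remain bounded by $1/\poly(\mathcal{I})$, so the optimum and the null-contract utility differ by a quantity of the same order; and (b) the $\lambda_i$ are spaced by polynomial gaps. Combined with appropriate dummy outcomes and padding actions, one forces type $\lambda_i$ to respond within the ``block'' of actions intended for multi-dim type $i$, recovering the multi-dim incentives at the single-dim level.

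The proof then proceeds in three steps. First, verify that the optimal single-dim contract matches the optimal multi-dim contract on the source instance up to additive slack $1/\poly(\mathcal{I})$, so that $\OPT^{\textnormal{single}} = \OPT^{\textnormal{multi}} \pm 1/\poly(\mathcal{I})$. Second, show that any contract $p$ with $\Up(p) \ge \OPT^{\textnormal{single}} - \epsilon$ for sufficiently small $\epsilon = 1/\poly(\mathcal{I})$ must, after a rounding/cleanup step (reading off the principal-optimal best response of each $\lambda_i$), induce the optimal action profile of the multi-dim instance, because the additive gaps between distinct profiles are bounded below by $\poly(1/\mathcal{I})$ by construction. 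Third, invoke the assumed $\FPTAS$ with $\epsilon$ set below this gap to solve the NP-hard problem in polynomial time, yielding the contradiction.

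The main obstacle, exactly as the authors flag, is step two: when types are separated only polynomially rather than exponentially, neighboring cost parameters $\lambda_i$ and $\lambda_{i\pm 1}$ have overlapping best-response sets, and the ``blocks'' of actions designed for different multi-dim types are no longer cleanly isolated. The heart of the argument will therefore be a careful quantitative analysis showing that, despite this leakage, (i) enough separation can be engineered between blocks using extra dummy actions and outcome rewards of varying polynomial magnitudes so that best-responses for each $\lambda_i$ still collapse onto the intended block, and (ii) the residual error from misclassified types is absorbed into the $1/\poly(\mathcal{I})$ additive slack rather than dominating the optimum. Keeping all these polynomials mutually compatible, while still leaving an inverse-polynomial additive gap between yes- and no-instances of the source problem, is the delicate part of the construction.
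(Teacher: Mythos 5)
Your proposal takes a genuinely different route from the paper: you want to reduce from multi-dimensional Bayesian contract design via a polynomially-scaled variant of the \citet{castiglioni2025reduction} embedding, whereas the paper reduces directly from \textsc{Set-Cover} with a bespoke construction (polynomially decaying success probabilities $F_{a_{i,S},\omega_S}=\frac{1}{2i}\mu$, costs $c_{a_{i,S}}=\frac{1}{4i^2}\mu$, types $\theta_i=i/n$, plus ``twin'' actions $\bar a_{i,S}$ and a bookkeeping type $\theta_0=0$). These are not the same argument, and the comparison below explains why yours, as stated, does not go through.

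First, a concrete error: you write that the starting point is ``an NP-hard additive-gap instance of multi-dimensional contract design with a \emph{constant} number of types.'' With constantly many types, multi-dimensional Bayesian contract design is solvable exactly in polynomial time (this is precisely the subroutine invoked in \Cref{thm:PTAS} via \citep{guruganesh2021contracts,castiglioni2022bayesian}), so there are no NP-hard constant-type instances. The hardness of the multi-dimensional problem intrinsically requires $\Theta(\poly(\mathcal{I}))$ types. Once you fix this, you must embed polynomially many cost parameters $\lambda_1,\ldots,\lambda_N$ into $[0,1]$, so the gaps $|\lambda_i-\lambda_{i\pm1}|$ are at most $1/\poly(\mathcal{I})$, and your ``leakage'' problem is not a corner case but the whole problem.

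Second, and more importantly, the central claim of your step two---that one can engineer enough dummy actions and rewards so that ``best-responses for each $\lambda_i$ still collapse onto the intended block''---is exactly the part of the argument you have not done, and there is evidence it cannot be done in the clean form you describe. The paper's own construction explicitly does \emph{not} force each type to best-respond inside its intended block: it partitions elements into $E_1$ (types that play their own action $a_{i,S}$), $E_2$ (types that play some other $a_{j,S}$ with $j\neq i$), and $E_3$ (types that defect to $\bar a_{j,S}$, $a^\star$, or $a_0$), and then proves via Lemmas~\ref{lem:onlyif1}--\ref{lem:onlyif5} that a principal gains nothing---and in fact loses a quantifiable $\mu/(8n^4)$ per type---by letting types drift into $E_2\cup E_3$. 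In other words, the paper embraces leakage and shows it is dominated, rather than engineering it away. The twin actions $\bar a_{i,S}$ (which are the mechanism that converts ``type $i$ plays $a_{i,S}$'' into the lower bound $p_{\omega_S}\geq \frac{1}{n}-\frac{4}{\eta}p_{\omega^\star}$) and the extraction gadget $(a^\star,\theta_0)$ (which converts high payments on many outcomes into a penalty proportional to the size of the recovered cover) have no counterpart in your sketch, and these are not decorations: they are what lets the only-if direction charge the principal $\epsilon\rho/n$ per set in $\bar\Scal$, producing the required $1/\poly(\mathcal{I})$ additive gap between a size-$k$ and a size-$(k+1)$ cover.

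In summary: the high-level intuition you flag---replace exponential separation with polynomial separation and pay for it with a more careful analysis---is the right one, and matches the paper's own framing. But ``more careful analysis'' in the paper does not mean forcing block separation; it means a different, direct \textsc{Set-Cover} construction with extra gadgets that turn imperfect separation into a controlled utility loss. Your proposal leaves precisely that piece as a to-do, and the one concrete quantitative claim you do commit to (constant types) is false, so the argument as written does not establish the theorem.
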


\paragraph{Challenges in Hardness of Single-Dimensional Contract Design}

When proving the hardness of multi-dimensional contract design, a recurrent feature that is exploited is the fact that an agent of type $\theta$ essentially has different parameters. For example, we could easily implement an instance in which the set of actions is different for each agent.\footnote{
Not all existing reductions adopt this framing. For example, in the reduction by \citet{guruganesh2021contracts}, this is achieved by assigning a distinct outcome distribution to each type $\theta$. However, this is essentially equivalent to having a different set of actions, as presented in our discussion.}
Next, we provide a high-level intuition on how this can be leveraged into a reduction. For example, if reducing from \textsc{Set-Cover}, one can have a type for each element of the universe $E$ and an outcome for each set $S$ in the universe $\Scal$. Then, a type corresponding to an element in the universe can only play actions that lead deterministically to a ``favorable outcome'' $\omega_S$. For any cover $\Scal^\star\subseteq\Scal$, we can thus construct a contract that incentivizes an action that leads to $\omega_S$ if $S\in\Scal^\star$. Since $\Scal^\star$ is a cover of $E$, each agent has an action that leads to a favorable outcome, and thus, the utility of the player is large. The fundamental property of this reduction is that an agent of a type corresponding to an element $i$ can only induce outcomes $\omega_S$ such that $i\in S$. However, this is not the case in single-dimensional settings.
Indeed, the main technical challenge in proving hardness in the single-dimensional setting is that all agents have the same action set, outcomes, and outcome distributions. They can only differ in their costs, and only by a multiplicative factor! This constraint makes proving hardness results challenging in single-dimensional settings, as it requires artificially ``separating'' actions.

\paragraph{Previous Approach} The closest construction to ours is presented in~\citet{castiglioni2025reduction}, which reduces multi-dimensional to single-dimensional contract design, preserving \emph{multiplicative} approximations. 
One crucial feature of their approach is the use of sequences of exponentially decreasing costs and expected rewards.
Sequences of exponentially decreasing costs and expected rewards are very effective in creating ``separation'' among actions, making some actions clearly suboptimal for an agent of a given type.
At a high level, their approach essentially partitions the set of actions, guaranteeing that only inducing the agent to play a subset of action (the ones available to the type in the multi-dimensional instance) provides a non-negligible reward to the principal.
While this is effective for multiplicative approximations, it results in the principal’s expected utility becoming exponentially small as the instance size increases. In such instances, approximately optimal contracts can be efficiently found under additive approximation. Indeed, posting the null contract provides an additive error that is trivially exponentially small in the instance size.

\paragraph{High-Level Construction} For the reason above, we recognized the need to use polynomially decreasing sequences of costs and expected rewards, which require a much more fine-grained analysis to establish such separation. Here, we outline the construction of our sequences, temporarily setting aside the additional complexities of the reduction.
%
The goal is to find a sequence of outcome distributions $F_{a_i,\omega}$,  action costs $c_{a_i}$, and a sequence of types $\theta_i$, where $i \in \mathbb{N}$, such that, when the contract is set to $p_{\omega}=1$, then the best response of a type $\theta_i$ is $a_i$.
We construct such sequence as follows: for each $i$, we set
\begin{itemize}
\item $F_{a_i,\omega}= \frac{1}{2i} $;
\item $c_{a_i}=\frac{1}{4i^2}$;
\item  types $\theta_i=i$.
\end{itemize}
Then, an agent of type $i$ will play the action maximizing their utility:
\[
i=\arg \max_{j \in \mathbb{N}} \left[ p_\omega F_{a_j,\omega} -\theta_{i}c_{a_j} \right] =\arg \max_{j \in \mathbb{N}} \left[ \frac{1}{2j} -\frac{i}{4j^2} \right], 
\]
which is the action corresponding to its type, as desired.
\begin{figure}[!t]
    \centering
    \scalebox{0.7}{\tikzset{every picture/.style={line width=0.75pt}} 

\begin{tikzpicture}[x=0.75pt,y=0.75pt,yscale=-1.2,xscale=1.2]

\input{settings/colors}
\colorlet{highcell}{typ_blue}
\colorlet{cell}{mgray!80!}

\draw  [fill=cell  ,fill opacity=0.3 ] (130,80) -- (190,80) -- (190,140) -- (130,140) -- cycle ;
\draw  [fill=cell  ,fill opacity=0.3 ] (190,80) -- (250,80) -- (250,140) -- (190,140) -- cycle ;
\draw  [fill=cell  ,fill opacity=0.3 ] (250,80) -- (310,80) -- (310,140) -- (250,140) -- cycle ;
\draw  [fill=cell  ,fill opacity=0.3 ] (310,80) -- (370,80) -- (370,140) -- (310,140) -- cycle ;
\draw  [fill=cell  ,fill opacity=0.3 ] (370,80) -- (430,80) -- (430,140) -- (370,140) -- cycle ;
\draw  [fill=cell  ,fill opacity=0.3 ] (430,80) -- (490,80) -- (490,140) -- (430,140) -- cycle ;
\draw  [fill=cell  ,fill opacity=0.3 ] (130,140) -- (190,140) -- (190,200) -- (130,200) -- cycle ;
\draw  [fill=cell  ,fill opacity=0.3 ] (190,140) -- (250,140) -- (250,200) -- (190,200) -- cycle ;
\draw  [fill=cell  ,fill opacity=0.3 ] (250,140) -- (310,140) -- (310,200) -- (250,200) -- cycle ;
\draw  [fill=cell  ,fill opacity=0.3 ] (310,140) -- (370,140) -- (370,200) -- (310,200) -- cycle ;
\draw  [fill=cell  ,fill opacity=0.3 ] (370,140) -- (430,140) -- (430,200) -- (370,200) -- cycle ;
\draw  [fill=cell  ,fill opacity=0.3 ] (430,140) -- (490,140) -- (490,200) -- (430,200) -- cycle ;
\draw  [fill=cell  ,fill opacity=0.3 ] (130,200) -- (190,200) -- (190,260) -- (130,260) -- cycle ;
\draw  [fill=cell  ,fill opacity=0.3 ] (190,200) -- (250,200) -- (250,260) -- (190,260) -- cycle ;
\draw  [fill=cell  ,fill opacity=0.3 ] (250,200) -- (310,200) -- (310,260) -- (250,260) -- cycle ;
\draw  [fill=cell  ,fill opacity=0.3 ] (310,200) -- (370,200) -- (370,260) -- (310,260) -- cycle ;
\draw  [fill=cell  ,fill opacity=0.3 ] (370,200) -- (430,200) -- (430,260) -- (370,260) -- cycle ;
\draw  [fill=cell  ,fill opacity=0.3 ] (430,200) -- (490,200) -- (490,260) -- (430,260) -- cycle ;
\draw  [draw opacity=0] (130,40) -- (490,40) -- (490,80) -- (130,80) -- cycle ;
\draw  [draw opacity=0] (490,80) -- (530,80) -- (530,260) -- (490,260) -- cycle ;
\draw  [color={rgb, 255:red, 245; green, 166; blue, 35 }  ,draw opacity=1 ][fill=highcell  ,fill opacity=0.3 ][line width=2.25]  (190,80) -- (250,80) -- (250,140) -- (190,140) -- cycle ;
\draw  [color={rgb, 255:red, 245; green, 166; blue, 35 }  ,draw opacity=1 ][fill=highcell  ,fill opacity=0.3 ][line width=2.25]  (310,140) -- (370,140) -- (370,200) -- (310,200) -- cycle ;
\draw  [color={rgb, 255:red, 245; green, 166; blue, 35 }  ,draw opacity=1 ][fill=highcell  ,fill opacity=0.3 ][line width=2.25]  (370,200) -- (430,200) -- (430,260) -- (370,260) -- cycle ;
\draw  [draw opacity=0] (130,260) -- (190,260) -- (190,300) -- (130,300) -- cycle ;
\draw  [draw opacity=0] (190,260) -- (250,260) -- (250,300) -- (190,300) -- cycle ;
\draw  [draw opacity=0] (250,260) -- (310,260) -- (310,300) -- (250,300) -- cycle ;
\draw  [draw opacity=0] (310,260) -- (370,260) -- (370,300) -- (310,300) -- cycle ;
\draw  [draw opacity=0] (370,260) -- (430,260) -- (430,300) -- (370,300) -- cycle ;
\draw  [draw opacity=0] (430,260) -- (490,260) -- (490,300) -- (430,300) -- cycle ;
\draw  [draw opacity=0] (90,80) -- (130,80) -- (130,140) -- (90,140) -- cycle ;
\draw  [draw opacity=0] (90,140) -- (130,140) -- (130,200) -- (90,200) -- cycle ;
\draw  [draw opacity=0] (90,200) -- (130,200) -- (130,260) -- (90,260) -- cycle ;

\draw (160,280) node  [font=\Large]  {$a_{1,S_{1}}$};
\draw (220,280) node  [font=\Large,color={rgb, 255:red, 208; green, 2; blue, 27 }  ,opacity=1 ]  {$a_{1,S_{3}}$};
\draw (280,280) node  [font=\Large]  {$a_{2,S_{1}}$};
\draw (340,280) node  [font=\Large,color={rgb, 255:red, 208; green, 2; blue, 27 }  ,opacity=1 ]  {$a_{2,S_{2}}$};
\draw (400,280) node  [font=\Large,color={rgb, 255:red, 208; green, 2; blue, 27 }  ,opacity=1 ]  {$a_{3,S_{3}}$};
\draw (460,280) node  [font=\Large]  {$a_{3,S_{4}}$};
\draw (310,60) node   [font=\Large] {$a_{j,S}$};
\draw (513,170) node  [font=\Large]  {$i$};
\draw (220,170) node  [font=\Large]  {$\frac{1}{2j} -\frac{2}{4j^{2}}$};
\draw (160,170) node  [font=\Large]  {$-\frac{2}{4j^{2}}$};
\draw (400,170) node  [font=\Large]  {$\frac{1}{2j} -\frac{2}{4j^{2}}$};
\draw (460,170) node  [font=\Large]  {$-\frac{2}{4j^{2}}$};
\draw (280,170) node  [font=\Large]  {$-\frac{2}{4j^{2}}$};
\draw (340,170) node  [font=\Large]  {$\frac{1}{2j} -\frac{2}{4j^{2}}$};
\draw (110,110) node  [font=\Large] {$1$};
\draw (110,170) node  [font=\Large] {$2$};
\draw (110,230) node  [font=\Large] {$3$};
\draw (460,110) node  [font=\Large]  {$-\frac{1}{4j^{2}}$};
\draw (460,230) node  [font=\Large]  {$-\frac{3}{4j^{2}}$};
\draw (400,230) node  [font=\Large]  {$\frac{1}{2j} -\frac{3}{4j^{2}}$};
\draw (400,110) node  [font=\Large]  {$\frac{1}{2j} -\frac{1}{4j^{2}}$};
\draw (280,230) node  [font=\Large]  {$-\frac{3}{4j^{2}}$};
\draw (160,230) node  [font=\Large]  {$-\frac{3}{4j^{2}}$};
\draw (220,230) node  [font=\Large]  {$\frac{1}{2j} -\frac{3}{4j^{2}}$};
\draw (340,110) node  [font=\Large]  {$\frac{1}{2j} -\frac{1}{4j^{2}}$};
\draw (220,110) node  [font=\Large]  {$\frac{1}{2j} -\frac{1}{4j^{2}}$};
\draw (280,110) node  [font=\Large]  {$-\frac{1}{4j^{2}}$};
\draw (160,110) node  [font=\Large]  {$-\frac{1}{4j^{2}}$};
\draw (340,230) node  [font=\Large]  {$\frac{1}{2j} -\frac{3}{4j^{2}}$};

\end{tikzpicture}}
    \caption{Stylized instance corresponding to the \textsc{Set-Cover} instance: $S_1=\{1,2\},S_3=\{2\},S_3=\{1,3\},S_4=\{3\}$ with universe $E=\{1,2,3\}$. An example of a cover is $\Scal^\star=\{S_2,S_3\}$. The table corresponds to payments $p_{\omega_{S_2}}=p_{\omega_{S_3}}=1$ while all the others are $0$. In each cell, there is the utility of an agent of type $i$ (indexed by rows) of playing an action of type $a_{j,S}$ (indexed by columns). We highlighted in \textcolor{niceRed}{red} the actions corresponding to the outcomes $\omega_{S_2}$ and $\omega_{S_3}$, and in \textcolor{typ_blue}{blue} the action played by each agent.}
    \label{fig:instance}
\end{figure}
This sequence is used when reducing from \textsc{Set-Cover} by replicating this sequence for each $S\in\Scal$ if $i\in S$. Formally, we have actions $a_{i,S}$ for each $i\in S$ and $S\in\Scal$, and outcome $\omega_S$ for each $S\in\Scal$. Then, $F_{a_{i,S},\omega_{S}}=\frac{1}{2i}$ and $c_{a_{i,S},\omega_S}=\frac{1}{4i^2}$ as above. The properties of the sequence guarantee that an agent $i$ plays an action $a_{i,S}$ for some $S\in\Scal$ if the principal picked payment $p_{\omega_S}=1$. Thus for a cover $\Scal^\star$ we can choose payments $p_{\omega_S}=\mathbb{I}(S\in\Scal^\star)$, and each agent of type $i$ has an action $a_{i,S}$ to play.
\Cref{fig:instance} provides depiction of the sequence associated with a simple \textsc{Set-Cover} instance. 
In the next section, we provide the formal proof of \Cref{thm:reduction}.

\subsection{Proof of \Cref{thm:reduction}}

	We reduce from \textsc{Set-Cover}.
	Given a set of elements $E$ and a set of sets $\Scal \subseteq 2^E$, we reduce from the decision problem of determining if there exists a set cover of size $k$, i.e., a set $\Scal^\star\subseteq \Scal$ of size $k$ such that 
	$\bigcup_{S \in \Scal^\star} S=E$. This problem is well known to be NP-Hard \cite{Karp1972}.
	
	We show that if there exists a set cover of size $k$, then principal's utility from the optimal contract is at least $\ell$ where $\ell$ will be defined in the following, while if all the set covers have size at least $k+1$ then the utility is at most $\ell-|E|^{-15}|S|^{-1}$.

	\paragraph{\textbf{Construction}}

	We identify $E$ with the set $\{1,2,\ldots, n\}$, where $|E|=n$. 
    Moreover, we let $m=|\mathcal{S}|$.
	For the ease of presentation, we define  the values $\rho=n^{-6}$, $\eta=n^{-2}$, $\epsilon=n^{-8}m^{-1}$, and $\mu=n^{-9}m^{-1}$, which are all polynomial in the instance. 
	Then, given an instance $(E,\Scal)$ of set cover, the instance of single-dimensional Bayesian contract design is defined as follows:
	\begin{itemize}
		\item Outcomes
		\begin{itemize}
			\item an outcome $\omega_{S}$ for each $S \in \Scal$,
			\item two additional outcomes $\omega^\star$ and $\bar \omega$.
		\end{itemize}
		
		\item Rewards
		\begin{itemize}
			\item $r_{\omega_S}=0$ for each $S \in \Scal$,
			\item $r_{\omega^\star}=\frac{1}{n}$ and $r_{\bar \omega}=0$.
		\end{itemize}
		
		\item Actions
		\begin{itemize}
			\item an action $a_{i,S}$ for each set $S\in \Scal$ and for each $i\in S$,
			\item an action $\bar a_{i,S}$ for each set $S\in\Scal$ and for each $i\in S$, 
			\item two additional actions $a^\star$ and $a_0$.
		\end{itemize}
		
		\item Actions cost
		\begin{itemize}
			\item cost $c_{a_{i, S}}=\frac{1}{4 i^2}\mu $ for each $S\in\Scal$ and $i\in S$,
			\item cost $c_{\bar a_{i, S}}=\frac{1}{4 i^2}\mu  (1-\eta)$ for each $S\in\Scal$ and $i\in S$,
			\item $c_{a^\star}=1$ and $c_{a_0}=0$.
		\end{itemize}
		
		\item Outcomes distribution (all unspecified probabilities are set to $0$)
		\begin{itemize}
			\item  $F_{a_{i,S},\omega_S}=\frac{1}{2i}\mu$,  $F_{a_{i,S}, \omega^\star}=\frac{1}{i}\mu$, and $F_{a_{i,S},\bar \omega}=1-\frac{3}{2i}\mu$ for each set $S\in \Scal$ and $i\in S$,
		\item $F_{\bar a_{i,S},\omega_S}=\frac{1}{2i}(1-\frac\eta2)\mu$, $F_{\bar a_{i,S},\bar \omega}=1-\frac{1}{2i}(1-\frac\eta2)\mu$ for each set $S\in\Scal$ and $i\in S$,
			\item $F_{a^\star,\omega_S}=\epsilon$ for each $S\in\Scal$ and $F_{a^\star,\omega^\star}=1-m\epsilon$,
			\item $F_{a_0, \bar\omega}=1$.
		\end{itemize}
		
		\item Types 
		\begin{itemize}
			\item a type $\theta_i=\frac{i}{n}$ for each $i\in E$,
            \item  an additional type $\theta=0$. 
		\end{itemize}
		
		\item Distribution over types
		\begin{itemize}
			\item $\gamma_{\theta_i}=\frac{1-\rho}{n}$ for each $i \in E$ where $\theta_i=\frac{i}{n}$,
            \item  $\gamma_{\theta_0}=\rho$ where $\theta_0=0$.
            \end{itemize}

		%
		%
		%
	\end{itemize}
    
	\paragraph{\textbf{If analysis.}}
	Suppose that there exists a set cover $\Scal^\star$ of size $k$.
	We build a contract with utility at least 
	\[
            \ell:= \frac{1-\rho}{2n^2} \mu \sum_{i\in E} \frac{1}{i} + \frac{\rho}{n} (1-m\epsilon -\epsilon k).  
        \]
	The contract is defined as follows.	For each set $S\in \Scal^\star$, set
	$p_{\omega_S}=\frac{1}{n}$.
	Moreover, let $p_{\omega^\star}=p_{\bar \omega}=0$ and $p_{\omega_S}=0$ for all $S\notin \Scal^\star$.
	
    We start computing the best responses for each possible agent's type. Then, we bound the principal's expected utility according to the best responses.
    In \Cref{app:reduction} we show that
    \begin{itemize}
        \item For any type $\theta_i$ the best response is any action $\tilde a_i\in\{a_{i,S}\}_{S\in\Scal^\star}$ (\Cref{lem:BR_IF}).\footnote{Technically, we only show that $\tilde a_i$ is IC, \emph{i.e.}, $\tilde a_i\in \Bcal^{\theta_i}(p)$. However, since ties are broken in favor of the principal, if the agent plays a different action, the principal obtains an even larger utility. }
        \item For the type $\theta_0$ the best response is $a^\star$ (\Cref{lem:BR_IF2}).
    \end{itemize}

	%
	
	Knowing the best-responses for each type, we can compute the principal's utility committing to the previously defined $p$:
	\begin{align*}
		&\sum_{i \in E} \gamma_{\theta_i} \sum_{\omega\in \Omega}F_{\tilde a_{i},\omega}(r_\omega-p_\omega) + \gamma_{\theta_0} \sum_{\omega\in\Omega }F_{a^\star,\omega}(r_\omega - p_\omega)\\
		& \hspace{0.5cm}=\sum_{i\in E} \frac{1-\rho}{n} \left[F_{\tilde a_{i},\omega_S}(r_{\omega_S}-p_{\omega_s})+F_{\tilde a_{i},\omega^\star}(r_{\omega^\star}-p_{\omega^\star})\right]\tag*{\makebox[1pt][r]{(Utility for types $\theta_i$)}}
		\\
		& \hspace{2cm}+ \rho\left(F_{a^\star,\omega^\star}(r_{\omega^\star}-p_{\omega^\star})+\sum_{S\in\Scal}F_{a^\star,\omega_S}(r_{\omega_S}-p_{\omega_S})\right)\tag*{\makebox[1pt][r]{(Utility for type $\theta_0$)}}\\
		&\hspace{0.5cm}=\frac{1-\rho}{n}\sum_{i\in E}\left[\frac{1}{2i}\mu\left(0-\frac{1}{n}\right)+\frac{1}{i}\mu\left(\frac{1}{n}-0\right)\right]+\rho\left((1-m\epsilon)\left(\frac{1}{n}-0\right)+|\Scal^\star|\epsilon\left(0-\frac{1}{n}\right)\right)\\
		&\hspace{0.5cm}=\frac{1-\rho}{n^2} \mu \sum_{i\in E} \frac{1}{2i}+\frac{\rho}{n}(1-m\epsilon-\epsilon|\Scal^\star|)\\
		&\hspace{0.5cm}=\frac{1-\rho}{2n^2}  \mu\sum_{i\in E}\frac{1}{i}+\frac{\rho}{n}(1-m\epsilon-\epsilon k)\\
        &\hspace{0.5cm}=:\ell.
	\end{align*}
    This completes the first part of the proof, lowerbounding the principal's utility when there exists an set cover of size $k$.

	\paragraph{\textbf{Only if analysis.}}
	Suppose that all set covers have size at least $k+1$.
	We show that any contract provides to the principal a utility of at most $\ell-n^{15}m^{-1}$.
	Let $p\in\Reals^{|\Omega|}_{\ge 0}$ be any contract.
	
    We partition the set of elements $E$ into three sets.
    \begin{itemize}
	\item $E_1\subseteq E$ includes the elements $i\in E$ such that the best response of type $\theta_i$ is $a_{i,S}$ for some $S \in \Scal$,
	\item $E_2\subseteq E$ includes  the elements $i\in E$ such that the best response of type $\theta_i$ is $a_{j,S}$ with $j\neq i$ and $S \in \Scal$,
	\item $E_3$ includes the remaining elements, \emph{i.e.}, $E_3=E \setminus (E_1\cup E_2)$.
    \end{itemize}
    
    In \Cref{app:reduction} we prove the following bounds on the principal utility conditioning on each of these cases.
    \begin{itemize}
        \item For any $i\in E_1$ the principal utility when the realized type is $\theta_i$ is at most $\frac{1}{2in}\mu+\frac{1}{i}p_{\omega^\star}\mu\left(\frac2\eta-1\right)$ (\Cref{lem:onlyif1}),
        \item For any $i\in E_2$ the principal utility when the realized type is $\theta_i$ is at most $\mu\left(\frac{1}{2in}-\frac{1}{8n^4}+\frac{2}{\eta}p_{\omega^\star}\right)$ (\Cref{lem:onlyif2}),
        \item For any $i\in E_3$ the principal utility when the realized type is $\theta_i$ is negative (\Cref{lem:onlyif3}),
        \item When the realized type is $\theta_0$ the principal utility is $-\epsilon\sum_{S\in\Scal}p_{\omega_S}+(1-m\epsilon)(\frac{1}{n}-p_{\omega^\star})$ (\Cref{lem:onlyif4})
    \end{itemize}

    We can then combine these results to show an upper bound on the principal utility of
    \begin{equation}\label{eq:upperboundutils}
    \frac{1-\rho}{n}\mu\left[\sum_{i\in E_1}\frac{1}{2in}+\sum_{i\in E_2}\left(\frac{1}{2in}-\frac{1}{8n^4}\right)\right]+\frac{1}{n}\rho(1-m\epsilon)-\frac{1}{n}\epsilon\rho|\bar\Scal|,
    \end{equation}
    where $\bar \Scal:=\{S\in\Scal: p_{\omega_S}\ge\frac{1}{n}-\frac{4}{\eta}p_{\omega^\star}\}$.
    This result is proved in \Cref{lem:onlyif5}.

    Now, we compute an upperbound on the principal's utility. Intuitively, we show that the principal utility increases if we ``move'' all the elements of $E_2\cup E_3$ to $E_1$ and increase the payment by a $\frac{1}{n}\epsilon\rho |E_2\cup E_3|$ additive factor.
    Formally, we observe that we can ``move'' a type from $E_2$ to $E_1$ by adding a $\frac{1}{n}\rho\epsilon$ payment since:
	\begin{align*}
		\frac{1-\rho}{n} \mu \frac{1}{8n^4}&\ge \frac{\mu}{16n^5}=\frac{1}{16n^{14}m}\\
		&\ge\frac{1}{n^{15}m}\tag{for $n$ large enough}\\
		&=\frac{1}{n}\rho\epsilon,
    \end{align*}
	and thus, starting from \Cref{eq:upperboundutils}, we find that the principal utility is upper bounded by
	\[
	\frac{1-\rho}{n}\mu\sum_{i\in E}\frac{1}{2in}+\frac{1}{n}\rho(1-m\epsilon)-\frac{1}{n}\epsilon\rho(|\bar\Scal|+|E\setminus E_1|).
	\]

    We complete the proof showing that $|\bar\Scal|+|E\setminus E_1|$ is a upper bound on the size of the smallest set cover, and hence by assumption  $|\bar\Scal|+|E\setminus E_1|\ge k+1$.
    To do so, we complete the set $\bar \Scal$ to make it a set cover of $E$ in the following way. For each element $v$ in $E\setminus E_1$ we pick a set $s(v)\in\Scal$ such that $v\in s(v)$ and we claim that $\Scal^\star:=\bar\Scal\bigcup\left( \bigcup_{v\in E\setminus E_1}s(v)\right)$ is a cover of $E$.
    Indeed, by construction the set $\bar S$ is a cover of $E_1$ and $\bigcup_{v\in E\setminus E_1}s(v)$ is a cover of $E\setminus E_1$. Thus, $\Scal^\star$ is a cover of $E$ and it has size at least $k+1$ by assumption. 
    
    Now, note that $|\bigcup_{v\in E\setminus E_1}s(v)|\le |E\setminus E_1|$. The proof is concluded by observing that $|\bar \Scal|+|E\setminus E_1|\ge |\Scal^\star|\ge k+1$ and thus the principal utility is upper bounded by
	\[
	\frac{1-\rho}{2n^2}\mu\sum_{i\in E}\frac{1}{i}+\frac{\rho}{n}\left(1-m\epsilon-\epsilon\rho(k+1)\right)=\ell-\frac{1}{n}\epsilon\rho=\ell-\frac{1}{n^{15}m},
	\]
    concluding the proof.
\section{Learning Contracts with Single-Dimensional Costs and Continuous Types}\label{sec:learning}


We proved in \Cref{sec:hardness} that there are no polynomial-time algorithms that compute $\epsilon$-optimal contracts under additive approximations. Well-know reductions from offline to online problems (see, \emph{e.g.}, \citet{roughgarden2019minimizing}) essentially preclude the existence of efficient algorithms that achieve regret polynomial in the instance size and sub-linear in $T$ in this setting, unless $\NP\subseteq\RP$. 
Therefore, we shift our focus to the problem of achieving sub-linear regret with \emph{inefficient} algorithms, prioritizing the learning aspects over computational efficiency.

The main result of this section is a learning algorithm that provides regret of $\widetilde O(\poly(n,m)T^{3/4})$ assuming that the types are stochastic and single-dimensional.
Then, we extend our result to provide a sample complexity bound of $\widetilde O\left(\eta^{-4}\poly(m,n)\right)$ to learn $\eta$-optimal contracts.

\subsection{Setting and Assumptions}

In our learning setup, we introduce two standard assumptions.
First, as is common when handling continuous types \citep{alon2023bayesian, alon2021contracts}, we make some minimal assumptions on the underlying distributions. In particular, we consider distributions which admits density bounded by some constant $\beta$.

\begin{assumption}\label{ass:boundedDensity}
    The type distribution $\Gamma$ admits density $f_\Gamma$ such that  $\sup_{\theta\in[0,1]}f_\Gamma(\theta)\le \beta$.
\end{assumption}

Our second assumption is on the set of possible contracts available to the principal. Following previous work on learning in contract design problems \citep{bacchiocchi2023learning,zhu2022online,chen2024bounded,ho2014adaptive}, we restrict our focus to bounded contracts.
\begin{assumption}\label{ass:boundedContract}
	The principal can post only contracts in $p\in [0,1]^m$.
\end{assumption}
Our results will easily extend to bounded contracts $p\in [0,C]^m$, by adding a linear dependence on $C$ in the regret bounds. Intuitively, the maximum possible payment determines the range of possible principal's utility. This affects the variance of the observed random variables and the regret bounds.
With slight abuse of notation, we now use $\OPT$ to denote the value of the optimal bounded contract in $[0,1]^m$.

\subsection{A No-Regret Algorithm}

In this section, we design a regret-minimization algorithm for single-dimensional contract design problem defined, as defined in \Cref{sec:prelimLearning}. Then, in the next section, we extend our results to derive sample complexity bounds.

There are two main challenges in obtaining no-regret algorithms for our problem.

\paragraph{Continuous decision space} The primary and most evident  challenge is that the decision space (the $m$-dimensional hypercube) is continuous, and that the function $p\mapsto \Up(p)$ lacks useful structural properties (indeed, it is neither concave nor Lipschitz). As a result, most existing techniques for multi-dimensional contract design, which operate directly on this decision space and do not take advantage of single-dimensional types, become inapplicable~\cite{zhu2022online,bacchiocchi2023learning,ho2014adaptive}.

\paragraph{Infinite types} The second challenge arises from having infinitely many possible types, as the support is $\Theta=[0,1]$. This renders standard approaches for Stackelberg-like games inapplicable. Indeed, these approaches for learning in games with commitments crucially rely on having a finite number of types.
For example, \citet{balcan2015commitment} employ an explore-then-commit algorithm that leverages barycentric spanners to estimate the values of exponentially many possible decision while playing only a polynomial number of them (equal to the number of types). As is typical in explore-then-commit algorithms, this approach achieves $    \widetilde O(T^{2/3})$ regret and a polynomial dependence on the number of types. Hence, similar approaches are doomed to fail in settings with continuous types.
As a further example, \citet{bernasconi2023optimal} avoid the exploration phase by reducing the problem to linear bandit optimization on a suitable small-dimensional space with dimension $d$ equal to the number of types $|\Theta|$. This approach yields regret bounds that depend only on the dimensionality of the space rather than the number of actions, resulting in $O(\poly(d)\sqrt{T})$ regret.
Both approaches perform poorly when the number of types is infinite. This is not just a technical limitation: there is strong evidence that as the number of types grows large, one cannot hope to obtain no-regret algorithms in general. For instance, \citet[Theorem~7.1]{balcan2015commitment} show that when the number of agent types is not bounded, then the regret is necessarily linear. For contract design, \citet{zhu2022online} provide similar results.

We address these challenges by introducing an approximate instance with finitely many types, using the same discretized grid as in \Cref{sec:PTAS}.
Here, we have the additional challenge of showing that such grid provides a good approximation despite not knowing the type distribution $\Gamma$. 
As we will discuss in details in the following, this is \emph{not} true for general distributions.
However, we show that if the underlying types distribution has bounded density, then for \emph{any} contract $p$ the expected principal's utilities with continuous and with discretized types are close.
Formally, we prove the following:

\begin{restatable}{lemma}{lemmaClose}\label{lem:valClose}
    For any type distribution $\Gamma$ and any $\epsilon>0$, let $\Theta_\epsilon\coloneqq\{\epsilon\cdot(i-\frac{1}{2})\}_{i\in\lceil1/\epsilon\rceil}$ and define $\gamma$ supported on $\Theta_\epsilon$ as 
\[\gamma_{\theta}\coloneqq \int_{\theta-\epsilon/2}^{\theta+\epsilon/2} f_\Gamma(\tilde\theta) d\tilde\theta\quad\textnormal{ for any }\theta\in\Theta_\epsilon.\]
Then, under \Cref{ass:boundedDensity} and \Cref{ass:boundedContract}, for any $p \in [0,1]^m$, it holds that
    \[  
        \left|\mathbb{E}_{\theta\sim\Gamma}[\Up(p,\theta)]-\sum_{\theta\in \Theta_\epsilon}  \gamma_\theta \ \Up(p, \theta) \right|\le 2\beta n \epsilon.
    \] 
\end{restatable}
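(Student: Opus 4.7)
The plan is to exploit a \emph{piecewise-constant} structure of $\theta \mapsto \Up(p,\theta)$ for any fixed contract $p \in [0,1]^m$. Fixing $p$, for each action $a\in A$ the agent's utility $\Ua_\theta(p,a) = \sum_\omega F_{a,\omega} p_\omega - \theta c_a$ is affine in $\theta$, so the upper envelope of these $n$ affine functions over $[0,1]$ has at most $n$ linear pieces. Since each line appears at most once in the upper envelope, each action is a best response on at most one (possibly empty) sub-interval of $[0,1]$, so $b^\theta(p)$ is constant on each of at most $n$ consecutive sub-intervals separated by at most $n-1$ ``transition points''. Because $\Up(p,\theta) = \sum_\omega F_{b^\theta(p),\omega}(r_\omega - p_\omega)$ depends on $\theta$ only through $b^\theta(p)$, the function $\theta \mapsto \Up(p,\theta)$ is piecewise constant with at most $n$ pieces (up to a measure-zero set of tie-breaking points which is irrelevant under the density $f_\Gamma$).

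Next, I would decompose the error bucket-by-bucket. The set $\Theta_\epsilon$ induces a natural partition $\{B_i\}$ of $[0,1]$ with $B_i$ an interval of length $\epsilon$ centered at $\theta_i$ (truncated at the endpoints), and by construction $\gamma_{\theta_i} = \int_{B_i} f_\Gamma(\tilde\theta)\,d\tilde\theta$. Writing both quantities as bucket-wise sums gives
\[
\mathbb{E}_{\theta\sim\Gamma}[\Up(p,\theta)] - \sum_{\theta_i\in\Theta_\epsilon}\gamma_{\theta_i}\,\Up(p,\theta_i) \;=\; \sum_{\theta_i\in\Theta_\epsilon}\int_{B_i}\bigl[\Up(p,\theta)-\Up(p,\theta_i)\bigr]f_\Gamma(\theta)\,d\theta.
\]
By the piecewise-constant structure, for any bucket $B_i$ containing no transition point the integrand vanishes identically, so only the at most $n-1$ ``bad'' buckets (those that contain a transition point) contribute to the total error.

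Finally, I would bound the contribution of each bad bucket. Since $r_\omega, p_\omega \in [0,1]$ and $\sum_\omega F_{a,\omega} = 1$, one has $\Up(p,a) \in [-1,1]$ for every action $a$, so $|\Up(p,\theta) - \Up(p,\theta_i)| \le 2$. By \Cref{ass:boundedDensity}, $\int_{B_i} f_\Gamma(\theta)\,d\theta \le \beta \epsilon$. Multiplying these two bounds and summing over at most $n-1$ bad buckets yields total error at most $(n-1)\cdot 2\beta\epsilon \le 2\beta n\epsilon$, as claimed.

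The only non-routine step is the piecewise-constant characterization of $\Up(p,\cdot)$ with at most $n$ pieces; everything else is a direct bucket-wise integral estimate. Crucially, the bounded-density assumption is what prevents a single bad bucket from carrying too much probability mass---without it, a bucket containing a transition point could account for an arbitrarily large error, and no discretization bound of this form could hold.
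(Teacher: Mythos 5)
Your proposal is correct and follows essentially the same route as the paper: both arguments use the fact that $\theta \mapsto b^\theta(p)$ is piecewise constant with at most $n-1$ transition points (the paper cites this as an interval structure of $\Theta(p,a)$ from \citet{alon2021contracts}, while you derive it from the upper envelope of $n$ affine functions, which is the same fact), both decompose the error bucket-by-bucket, observe that only buckets straddling a transition contribute, and bound each such bucket's contribution by $2\beta\epsilon$ using $|\Up| \le 1$ and the bounded-density assumption. The accounting ($\le n-1$ bad buckets, each contributing $\le 2\beta\epsilon$) matches the paper exactly, so this is a faithful reconstruction rather than a different argument.
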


\begin{remark}
    \Cref{lem:valClose} is fundamentally different from \Cref{lem:linearization} and other results that connect $\epsilon$-IC and IC contracts  to ``robustify'' contracts~\citep{dutting2021complexity,zhu2022online,bacchiocchi2023learning, bernasconi2024regret}. 
    These existing results apply to arbitrary distributions but only provide a lower bound on the achieved utility, often significantly underestimating the principal's utility, and rendering them inapplicable in our approach. 
    In contrast, our result crucially relies on \Cref{ass:boundedDensity} and establishes both upper and lower bounds on the utility of any given contract.
\end{remark}

Hence, the above lemma addresses the 
challenge of infinitely many types. In particular, by incurring a discretization error of $O(\beta n\epsilon T)$, we can restrict to a finite set of discretized types of size $O(1/\epsilon)$.
%
Next, we deal with the problem of having an infinite decision space. Our first result partially resolves this by reducing the set of possible decisions from infinite to exponentially many. 
To achieve this, we leverage a powerful characterization of optimal contracts in Bayesian contract design that is independent of the underlying type distribution. This is a standard approach in Stackelberg-like games and is, for example, one of the main tools employed by \citet{balcan2015commitment} for solving learning problems in Bayesian Stackelberg games. 
A similar result also exists in contract design, albeit for \emph{discrete types}. Indeed, \citet{castiglioni2022bayesian,guruganesh2021contracts} showed that, regardless of the underlying distribution over discrete types, the optimal contract always belongs to a finite set of possible contracts $\Pcal$.\footnote{While the original result is for unbounded contracts, it trivially extend to $[0,1]^m$ by adding the hyperplanes $p_\omega\le1$ for each $\omega\in \Omega$.}

\begin{theorem}[Essentially \cite{guruganesh2021contracts, castiglioni2022bayesian}]\label{thm:toFinite}
    For every finite set of types $\Theta$, there exists a finite set of contracts $\Pcal\subset [0,1]^m$ with $|\Pcal|= \poly\left((n,m,|\Theta|)^m\right)$ such that
    \[
    \max_{p\in \Pcal} \sum_{\theta\in \Theta}  \gamma_\theta \Up(p, \theta) = \max_{p\in [0,1]^m} \sum_{\theta\in \Theta}  \gamma_\theta \Up(p, \theta)\quad  \forall \gamma \in \Delta_{\Theta}.
    \]
\end{theorem}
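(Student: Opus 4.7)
The plan is to build $\Pcal$ as the set of vertices of a hyperplane arrangement in $[0,1]^m$ whose structure depends only on $(n,m,\Theta)$ and on the outcome distributions and costs, but not on $\gamma$. Fix any mapping $\rho:\Theta\to A$ specifying, for each type, a purported action. Whenever $\rho$ agrees with the (tie-broken) best response of every type at $p$, the principal's expected utility is
\[
\sum_{\theta\in\Theta}\gamma_\theta \sum_{\omega\in\Omega} F_{\rho(\theta),\omega}(r_\omega - p_\omega),
\]
which is a linear function of $p$. Hence $p\mapsto \sum_\theta \gamma_\theta \Up(p,\theta)$ is piecewise linear in $p$, with pieces corresponding to best-response profiles.

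The cells of this subdivision are cut out by two families of hyperplanes in $\Reals^m$: the IC hyperplanes $\{p:\Ua_\theta(p,a)=\Ua_\theta(p,a')\}$ for $\theta\in\Theta$ and $a\neq a'\in A$, and the box hyperplanes $\{p_\omega=0\}$ and $\{p_\omega=1\}$ for $\omega\in\Omega$. The total number of hyperplanes is $H=|\Theta|\binom{n}{2}+2m=\poly(n,m,|\Theta|)$. Define $\Pcal$ as the set of points in $[0,1]^m$ obtained as intersections of $m$ linearly independent hyperplanes from this family. Then $|\Pcal|\le \binom{H}{m}=\poly((n,m,|\Theta|)^m)$, and $\Pcal$ depends only on the instance structure, not on $\gamma$.

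To prove optimality of $\Pcal$, I would fix an arbitrary $\gamma\in\Delta_\Theta$ and an optimal contract $p^\star\in[0,1]^m$ (which exists because tie-breaking in favor of the principal makes $p\mapsto\sum_\theta \gamma_\theta \Up(p,\theta)$ upper semi-continuous on the compact set $[0,1]^m$). Let $C$ be a closed cell of the arrangement containing $p^\star$ on which the best-response profile is constant. On $C$ the objective is linear, so its maximum over $C$ is attained at a vertex of $C$; every such vertex is an intersection of $m$ linearly independent hyperplanes from the two families above and therefore lies in $\Pcal\cap[0,1]^m$. Thus some element of $\Pcal$ achieves utility at least that of $p^\star$, proving the desired equality.

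The one subtle point to handle carefully is tie-breaking: at a vertex of the arrangement, several types may have multiple IC actions, and the principal's utility is defined via the favorable tie-break. However, upper semi-continuity together with the fact that the favorable tie-break is always available on the closure of each cell implies that the supremum over any closed cell equals the maximum of the linear extension at its vertices, so no optimality is lost by restricting to $\Pcal$. Quantifying the combinatorial bound and handling this continuity subtlety are the main obstacles; both are essentially mirror the analysis of \citet{guruganesh2021contracts} and \citet{castiglioni2022bayesian}, adapted here to bounded contracts by adjoining the $2m$ box hyperplanes.
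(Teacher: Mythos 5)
The paper does not supply its own proof of \Cref{thm:toFinite}; it simply cites \citet{guruganesh2021contracts} and \citet{castiglioni2022bayesian} and notes in a footnote that the bounded-contract version follows by adjoining the hyperplanes $p_\omega = 1$. Your reconstruction is essentially the argument of those papers: partition $[0,1]^m$ by the IC hyperplanes (together with the box faces), observe piecewise linearity of the objective in the best-response profile, bound the number of vertices of the arrangement by $\binom{H}{m}$ with $H = \poly(n,m,|\Theta|)$, and argue that the optimum is attained at a vertex. So the overall route is the right one.

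The one place you should be more careful is the sentence ``the supremum over any closed cell equals the maximum of the linear extension at its vertices.'' Taken literally this is not true: on the boundary of a cell $C$ the favorable tie-break can make $U^P(p) > g_{\rho_C}(p)$, so $\sup_{\bar C}\sum_\theta\gamma_\theta U^P(p,\theta)$ may strictly exceed $\max_{\bar C} g_{\rho_C}$, and the optimum of the overall problem need not be captured by the linear extension of the \emph{cell} that contains $p^\star$. The clean version is to work with the profile attained at $p^\star$ rather than the cell: let $\rho^\star$ be the tie-broken best-response profile at $p^\star$, let $P_{\rho^\star}\subseteq[0,1]^m$ be the polytope of contracts for which $\rho^\star(\theta)\in\Bcal^\theta(p)$ for every $\theta$ (an intersection of IC half-spaces and box constraints), and let $g_{\rho^\star}(p)=\sum_\theta\gamma_\theta U^P(p,\rho^\star(\theta))$. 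Then $U^P(p^\star)=g_{\rho^\star}(p^\star)\le\max_{p\in P_{\rho^\star}} g_{\rho^\star}(p)=g_{\rho^\star}(v)$ for some vertex $v$ of $P_{\rho^\star}$, and since $\rho^\star$ is still IC at $v$, $U^P(v)\ge g_{\rho^\star}(v)\ge U^P(p^\star)$. Every vertex of every $P_\rho$ is an intersection of $m$ linearly independent hyperplanes from your family, so it lies in $\Pcal$, and the polynomial bound is unchanged. This is exactly what you need and avoids any appeal to semicontinuity of the cell-wise linear extension; the rest of your argument (hyperplane count, existence of $p^\star$ via upper semicontinuity on a compact set, $\gamma$-independence of $\Pcal$) is correct as stated.
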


Now, we show how to combine the results in \Cref{lem:valClose} and \cref{thm:toFinite}.
In \Cref{lem:valClose}, we showed that reducing to discretized types provides a good approximation of the reward with the true continuous types, and \Cref{thm:toFinite} shows how to restrict to a finite number of possible contracts $\Pcal$. A simple approach would be to reduce our problem to a finite MAB in which $\Pcal$ is the finite set of arms. This would provide a regret upper bound of $\sqrt{|\Pcal| T}+O(\epsilon n\beta T)$. Since $|\Pcal|=\poly((n,m,1/\epsilon)^m)$, this yields a regret upper bound that, although sub-linear in 
$T$ (for some choice of $\epsilon$), is  exponential in the instance size. 

\citet{bernasconi2023optimal} provided a general framework to solve similar problems and remove the dependency on the size of the decision space.
Such a framework reduces the problem of online learning in Stackelberg-like games with finite types to linear bandits. Their main idea is to map the decisions into a ``utility space'' instead of the original decision space, and then map the decisions from the utility space back to the original one. The advantage is that the problem in the utility space is linear, and thus it scales only with the dimension rather than with the number of actions.
Employing such reduction, \citet{bernasconi2023optimal} provided regret bounds of order $O(\sqrt{T})$ that scale polynomially with the number of types rather than with the number of possible decisions.
Notice that their reduction cannot directly be employed in in our setting since the discrete type instance is only an approximation of the true one, breaking the linear structure of the utility space.

Since we cannot recover linearity in the utility space, our key idea is to continue working in such space while allowing the model to be only approximately linear. Crucially, because the linearity assumption is violated, we must use algorithms designed for a class of almost-linear bandits known as Misspecified Linear Bandits (MLB)~\cite{ghosh2017misspecified}.

\begin{definition}[Misspecified linear bandits~\citep{lattimore2020learning}]\label{def:MLB}
    At each round $t \in [T]$, the learner chooses an action $x_t$ from a finite set of action $X\subseteq [0,1]^d$ with $|X|=k$. The expected reward of an action $x$ is $\mu(x)=\langle x,\phi\rangle+ y(x) \in [-1,1]$, where $ |y(x)|\le\alpha$ for a misspecification parameter $\alpha>0$. The observed rewards $\mu_t$ are i.i.d.~and $O(1)$-subgaussian with mean $\mathbb{E}[\mu_t]=\mu(x_t)$.
    Finally, the regret is defined as
    \[
    R_T= T \cdot\max_{x\in X} \mu(x)- \mathbb{E}\left[\sum_{t \in [T] }  \mu(x_t)\right],
    \]
    where the expectation is on the possible randomization of the algorithm.
\end{definition}

We will use an algorithm proposed by \citet{lattimore2020learning} as a black-box oracle for this setting.

\begin{theorem}[\cite{lattimore2020learning}]\label{thm:misspecified}
    There exists an algorithm called $\PE$ for misspecified linear bandits which guarantees
    \[
        R_T=O\left(\sqrt{dT\log(Tk)}+ \alpha T\sqrt{d} \log(T)\right). 
    \]
\end{theorem}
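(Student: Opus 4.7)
My plan is to analyze the Phased Elimination algorithm based on $G$-optimal experimental design. The algorithm maintains an active set $A_j\subseteq X$ across phases $j=1,2,\ldots$ (initially $A_1=X$). In phase $j$, I compute an optimal design $\pi_j$ on $A_j$ minimizing $\max_{x\in A_j}\|x\|^2_{V(\pi_j)^{-1}}$, where $V(\pi)=\sum_{x}\pi(x)\,xx^\top$; by the Kiefer--Wolfowitz theorem this maximum is at most $d$. I then pull each arm $x\in A_j$ roughly $\pi_j(x)\,T_j$ times, with $T_j\asymp d\,4^{j}\log(Tk)$, and use least-squares to produce an estimator $\hat\phi_j$ of $\phi$.

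The central technical step is the concentration bound
\[
|\langle\hat\phi_j,x\rangle-\mu(x)|\;\le\;\underbrace{\|x\|_{V(\pi_j)^{-1}}\sqrt{\tfrac{c\log(Tk)}{T_j}}}_{\text{variance}}\;+\;\underbrace{c\,\alpha\,\sqrt{d}}_{\text{bias}},
\]
valid with high probability uniformly over $x\in A_j$. The variance term follows from subgaussian concentration combined with $\|x\|^2_{V(\pi_j)^{-1}}\le d$, and by the choice of $T_j$ it is at most $2^{-j}$. The bias term requires a dedicated argument: writing the observation as $\mu_t=\langle\phi,x_t\rangle+y(x_t)+\eta_t$, the expectation of $\hat\phi_j-\phi$ equals $V(\pi_j)^{-1}\sum_{y\in A_j}\pi_j(y)\,y\,y(y)$, and its inner product with $x$ satisfies, by the weighted Cauchy--Schwarz inequality,
\[
\Bigl|\bigl\langle V(\pi_j)^{-1}\!\sum_{y\in A_j}\pi_j(y)\,y\,y(y),\,x\bigr\rangle\Bigr|\;\le\;\alpha\,\|x\|_{V(\pi_j)^{-1}}\;\le\;\alpha\sqrt{d}.
\]

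Given these bounds I eliminate from $A_{j+1}$ every arm whose empirical estimate is more than $2\cdot 2^{-j}+2c\alpha\sqrt{d}$ below the empirical maximum. Standard phased-elimination arguments then show (i) the optimal arm is never eliminated and (ii) every surviving arm has true suboptimality $O(2^{-j}+\alpha\sqrt{d})$. Summing per-phase regret gives
\[
R_T\;\le\;\sum_{j=1}^{\lceil\log_2 T\rceil}T_j\cdot O\bigl(2^{-j}+\alpha\sqrt{d}\bigr)\;=\;O\bigl(\sqrt{dT\log(Tk)}\bigr)\;+\;O\bigl(\alpha T\sqrt{d}\log T\bigr),
\]
where the first term comes from balancing $T_j\,2^{-j}$ against the budget, and the second from $\alpha\sqrt{d}\sum_j T_j\le \alpha\sqrt{d}\,T\log T$.

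The main obstacle is the sharp control of the misspecification bias: the naive per-arm bound from $|y(x)|\le\alpha$ is only $O(\alpha)$, which would introduce $\alpha T$ regret and defeat the entire approach. Exploiting $G$-optimality of $\pi_j$ to upgrade this bound to the uniform $O(\alpha\sqrt{d})$ estimate on $|\langle\hat\phi_j,x\rangle-\mu(x)|$---and hence on the suboptimality of surviving arms---is the delicate step on which the whole argument rests. An additional but more routine subtlety is verifying that one can implement the prescribed $T_j\pi_j(x)$ integer pull counts without perturbing the design guarantee (e.g.\ via standard rounding from the design-of-experiments literature), which introduces only benign constants.
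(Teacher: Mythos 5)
Your proposal is correct and follows essentially the same phased-elimination/$G$-optimal-design argument as the cited reference \cite{lattimore2020learning}, which the paper invokes as a black box (the paper itself does not reprove this theorem; its appendix merely reproduces the algorithm's structure and the key per-phase concentration claim when deriving the sample-complexity bound). You correctly identify the crux: the misspecification bias must be controlled uniformly over arms at the level $O(\alpha\sqrt{d})$ rather than $O(\alpha)$, and this follows from the Kiefer--Wolfowitz bound $\|x\|^2_{V(\pi)^{-1}}\le d$ combined with the weighted Cauchy--Schwarz inequality applied to $V(\pi)^{-1}\sum_y\pi(y)\,y\,y(y)$. Your phase scheduling ($T_j\asymp 4^j$) is a harmless reparametrization of the $T_\ell\asymp 2^\ell$ scheme in the reference, and your final summation of $\sum_j T_j\,O(2^{-j}+\alpha\sqrt{d})$ recovers the stated bound (in fact your calculation $\sum_j T_j\le T$ gives the bias contribution $O(\alpha\sqrt{d}\,T)$, which is slightly sharper than the stated $O(\alpha T\sqrt{d}\log T)$ but consistent with it).
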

Crucially, $\PE$ does not need to know the misspecification parameter $\alpha$ in advance.

Now, we are ready to combine all these components to provide our reduction, whose pseudo-code can be found in Algorithm~\ref{alg:reduction}.
\setlength{\algomargin}{1.5em}
\begin{algorithm}[t]
    \nl Inputs: $\epsilon>0$, $\Pcal$\;
    \nl Set $\Rcal$ as $\PE$ and $X=\nu_\epsilon(\Pcal)$\;
    \For{$t = 1, \ldots, T$}{	
    \nl Set $x_t$ as prescribed by $\Rcal$\;\nllabel{line:decision}
    \nl Commit to any $p_t$ such that $x_t = \nu_\epsilon(p_t)$\;\nllabel{line:invert}
    \nl Agent type $\theta_t$ is drawn according to $\Gamma$\;
    \nl Action $a_t=b^{\theta_t}(p_t)$ is played\;\nllabel{line:br}
    \nl Outcome $\omega_t\sim F_{a_t}$ is observed\;\nllabel{line:outcome}
    \nl Reward $r_{\omega_t}-p_{\omega_t}$ is received and it is used to update $\Rcal$\;
    }
    \caption{Regret minimization for single-dimensional contract design}\label{alg:reduction}
\end{algorithm}

%
%

First, we define the set of actions to be given in input to the regret minimizer. Set $\epsilon=1/\sqrt{T}$.
For any $p \in \Pcal$ (see \Cref{thm:toFinite} for a definition of $\Pcal$), let 
\[
\nu_\epsilon(p)=[\Up(p,\theta)]_{\theta \in \Theta_\epsilon}\quad\text{and}\quad\nu_\epsilon(\Pcal)= \bigcup_{p \in \Pcal} \nu_\epsilon(p).
\]

Note that $\nu_\epsilon(\Pcal)\subset \Reals^{|\Theta_\epsilon|}$ is a finite set of cardinality $|\nu_\epsilon(\Pcal)|=|\Pcal|=\poly((n,m,1/\epsilon)^m)$.
The next decision $x_t$ is chosen in the utility space by the MLB regret minimizer $\Rcal$ (\Cref{line:decision}), while actions $p_t$ are made in the actual decision space of contracts inverting the function $\nu_\epsilon(\cdot)$ (\Cref{line:invert}). After the principal commits to contract $p_t$ the agent selects the best response and the outcome $\omega_t$ is selected with probability $F_{a_t,\omega_t}$ (\Cref{line:br} and \Cref{line:outcome}). Finally, the reward received $r_{\omega_t}-p_{\omega_t}$ is used to update the regret minimizer $\PE$.

The following theorem proves that \Cref{alg:reduction} provides a $\widetilde O(\beta \cdot\poly(n,m)T^{3/4})$ regret.
In the proof, we show that the discretized problem satisfies the conditions of \Cref{def:MLB}. Then, we prove that the regret guarantees of the discretized problem translate to the regret guarantees of the original problem. This holds since the optima of the two problems are close. 

\begin{theorem}\label{thm:regret}
	Under \Cref{ass:boundedDensity} and \Cref{ass:boundedContract}, \Cref{alg:reduction} guarantees
	 \[R_T=\widetilde O(\beta \cdot\poly(n,m)T^{3/4}).\]
\end{theorem}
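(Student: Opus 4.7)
The plan is to show that Algorithm~\ref{alg:reduction} fits the template of the misspecified linear bandit (MLB) problem of Definition~\ref{def:MLB}, so that Theorem~\ref{thm:misspecified} applied to \PE{} yields the claimed bound after tuning $\epsilon=1/\sqrt{T}$. There are three conceptual pieces to verify: (i) the MLB parameters $d$, $k$, $\alpha$ take the right values; (ii) the \PE{}-regret controls the learner's shortfall against the best contract in $\Pcal$ (under the true distribution $\Gamma$); (iii) this best contract in $\Pcal$ is close to $\OPT$. The main obstacle will be handling (iii), since $\Pcal$ is derived from the discretized problem, and we must avoid paying twice for the discretization error.

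\textbf{Setting up the MLB instance.} The action set is $X=\nu_\epsilon(\Pcal)\subset\Reals^{|\Theta_\epsilon|}$, where the $\theta$-th coordinate of $\nu_\epsilon(p)$ is the bounded quantity $\Up(p,\theta)$. Thus the ambient dimension is $d=|\Theta_\epsilon|=O(1/\epsilon)$ and the number of arms is $k=|\Pcal|=\poly((nm/\epsilon)^m)$ by Theorem~\ref{thm:toFinite}. Taking the hidden parameter to be the vector $\gamma\in\Delta_{\Theta_\epsilon}$ from Lemma~\ref{lem:valClose}, the true expected reward of action $x=\nu_\epsilon(p)$ equals $\Up(p)=\mathbb{E}_{\theta\sim\Gamma}[\Up(p,\theta)]$, while the linear model predicts $\langle x,\gamma\rangle=\sum_{\theta\in\Theta_\epsilon}\gamma_\theta\Up(p,\theta)$. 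Lemma~\ref{lem:valClose} then gives misspecification parameter $\alpha=2\beta n\epsilon$. The per-round observation $r_{\omega_t}-p_{\omega_t}$ is bounded in $[-1,1]$ under Assumption~\ref{ass:boundedContract} and conditionally unbiased for $\Up(p_t)$, hence is $O(1)$-subgaussian as required. A rescaling to bring $X$ into $[0,1]^d$ is purely cosmetic and affects the bound only by constants.

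\textbf{From MLB regret to our regret.} By Theorem~\ref{thm:misspecified} applied to \PE{}, the algorithm satisfies
\[
T\max_{p\in\Pcal}\Up(p)-\mathbb{E}\Big[\sum_{t=1}^{T}\Up(p_t)\Big]=O\!\left(\sqrt{dT\log(Tk)}+\alpha T\sqrt{d}\log T\right).
\]
Plugging in $d=O(1/\epsilon)$, $k=\poly((nm/\epsilon)^m)$, and $\alpha=2\beta n\epsilon$ yields the bound $\widetilde O\!\left(\sqrt{m\,T/\epsilon}+\beta n\sqrt{\epsilon}\,T\right)$, where $\widetilde O$ hides $\log$-factors in $T,n,m$.

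\textbf{Relating $\max_{p\in\Pcal}\Up(p)$ to $\OPT$.} By Lemma~\ref{lem:valClose} applied twice and by Theorem~\ref{thm:toFinite}, which guarantees that $\Pcal$ is an exact optimal set for the discretized problem over \emph{any} type distribution on $\Theta_\epsilon$, we obtain
\[
\max_{p\in\Pcal}\Up(p)\ge\max_{p\in\Pcal}\sum_{\theta\in\Theta_\epsilon}\gamma_\theta\Up(p,\theta)-2\beta n\epsilon =\max_{p\in[0,1]^m}\sum_{\theta\in\Theta_\epsilon}\gamma_\theta\Up(p,\theta)-2\beta n\epsilon\ge \OPT-4\beta n\epsilon.
\]
Therefore the total regret relative to $\OPT$ loses only an additive $4\beta n\epsilon T$ term.

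\textbf{Balancing.} Putting the two contributions together,
\[
R_T\le 4\beta n\epsilon T+\widetilde O\!\left(\sqrt{m\,T/\epsilon}+\beta n\sqrt{\epsilon}\,T\right).
\]
The discretization and approximation terms are balanced by taking $\epsilon=1/\sqrt{T}$, which yields $R_T=\widetilde O(\beta\cdot\poly(n,m)\,T^{3/4})$, as claimed. The trickiest step to execute carefully is the three-way chain $\OPT\to$ discretized optimum $\to \max_{p\in\Pcal}$ discretized value $\to\max_{p\in\Pcal}\Up(p)$, which crucially combines Theorem~\ref{thm:toFinite} (identity of the maxima for every discrete distribution) with Lemma~\ref{lem:valClose} (pointwise approximation of $\Up(p)$ by its discretized analog) so that the discretization error enters the final regret bound only linearly in $\epsilon T$.
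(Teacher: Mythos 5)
Your proof is correct and follows essentially the same route as the paper: set up the MLB instance with hidden parameter $\gamma$ and misspecification $\alpha=2\beta n\epsilon$ via Lemma~\ref{lem:valClose}, apply Theorem~\ref{thm:misspecified} to \PE{} over the arm set $\nu_\epsilon(\Pcal)$, and then relate $\max_{p\in\Pcal}\Up(p)$ to $\OPT$ by combining Lemma~\ref{lem:valClose} with Theorem~\ref{thm:toFinite} before tuning $\epsilon=1/\sqrt{T}$. The only slight difference is bookkeeping: you account explicitly for two applications of Lemma~\ref{lem:valClose} in the chain $\OPT\to$ discretized optimum $\to\max_{p\in\Pcal}\Up(p)$ (yielding the $4\beta n\epsilon$ slack), whereas the paper writes a single $2\beta n\epsilon T$ term and leaves the gap between $\max_x\langle x,\gamma\rangle$ and $\max_x\mu(x)$ to be absorbed by the misspecification bound; your version is marginally cleaner but the orders of magnitude agree.
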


\begin{proof}
	Let $X=\nu_\epsilon(\Pcal)$. First, we show that the structure of the reward observed by the regret minimizer for MLS satisfies \Cref{def:MLB}.
	Clearly, the reward observed is always in $[-1,1]$ since both $r_\omega$ and $p_\omega$ are bounded in $[0,1]$. Next, we show that there exists a good linear approximation of $\mu(x)$ for any $x\in X$. In particular, we will show that $\mu(x)$ is close to $\langle x, \gamma\rangle$ where $\gamma_{\theta}=\int_{\theta-\epsilon/2}^{\theta+\epsilon/2} f_\Gamma(\theta)d\theta$, which follows from \Cref{lem:valClose}:
    
    \begin{align*}
        |\mu(x)-\langle x,\gamma\rangle|&=\left|\int_{0}^1f_\Gamma(\theta)\Up(p,\theta)d\theta-\sum_{\theta\in\Theta_\epsilon}\gamma_\theta\Up(p,\theta)\right|\le 2\beta n\epsilon.
    \end{align*}
    Therefore, the misspecification parameter is $\alpha=2\beta n\epsilon$.

	Then, by \Cref{thm:misspecified}, we conclude that
	\begin{align*}
		 T\cdot \max_{x\in X} \mu(x)- \mathbb{E}\left[ \sum_{t\in [T]}  \mu(x_t)\right]& \le O(\sqrt{|\Theta_\epsilon|T}\,\log(T (\poly(n,m,|\Theta_\epsilon|)^m))+ 2n\beta\epsilon T \sqrt{|\Theta_\epsilon|}\, \log(T))\\
		 & \le  \widetilde O\left(\beta\cdot \poly(m,n) T^{3/4}\right),
	\end{align*}
	where the last inequality follows by setting $\epsilon=1/\sqrt{T}$,  and the expectation is over the randomness of the algorithm.
	
	
	Hence, we are left with bounding the difference in utility between $\OPT$ and the optimum in the discretized space, which is $\max_{x \in X} \mu(x)$.
        Since the contract is bounded, we know that there exists an optimal bounded contract $p^*\in \arg \max_{p \in [0,1]} \mathbb{E}_{\theta\sim \Gamma}[\Up(p,\theta)]$.
	Then
	\begin{align*}
		\mathbb{E}_{\theta\sim \Gamma}[\Up(p^*,\theta)]&
        \le \sum_{\theta \in \Theta_\epsilon}  \gamma_\theta  \Up(p^*,\theta)  + 2\beta n \epsilon\tag{\Cref{lem:valClose}}\\
		& = \max_{p \in \Pcal} \sum_{\theta \in \Theta_\epsilon}  \gamma_\theta  \Up(p,\theta)  + 2\beta n \epsilon,\tag{\Cref{thm:toFinite}}.
	\end{align*}
	
	Hence, using this inequality we can finally prove that
    \begin{align*}
        R_T&=T\cdot\mathbb{E}_{\theta\sim \Gamma}[\Up(p^*,\theta)]-\mathbb{E}\left[ \sum_{t \in T}  \mathbb{E}_{\theta\sim \Gamma}[\Up(p_t,\theta)]\right]\\
        &\le T\cdot\max_{p\in \Pcal} \sum_{\theta\in\Theta_\epsilon}\gamma_\theta \Up(p^*,\theta)-\mathbb{E}\left[ \sum_{t \in T}  \mathbb{E}_{\theta\sim \Gamma}[\Up(p_t,\theta)]\right]+2\beta n\epsilon T\\
        & = \widetilde O\left(\beta \cdot\poly(m,n) T^{3/4}\right) + 2 \beta n \epsilon T,
    \end{align*}
    which is $\widetilde O\left(\beta \cdot\poly(m,n) T^{3/4}\right)$, thus concluding the proof.
\end{proof}

\subsection{Sample Complexity of Single-Dimensional Contracts}
In this section, we show how our results can be extended to derive sample complexity bounds (see \Cref{sec:prelimLearning} for a definition of the problem).
%
%
Such extension is based on the observation that the regret minimizer by \citet{lattimore2020learning} that we used in \Cref{alg:reduction} actually guarantees a stronger result. In particular, $\PE$ guarantees that at each time $t$, there exists a subset of feasible arms $X_t\subseteq X$ such that with probability at least $1-\delta$ it holds:
\[
\mu(x)\ge \max_{x\in X}\mu(x)-O\left(\alpha\sqrt{d}+\sqrt{\frac{d}{t}\log(k/\delta)}\right) \quad \forall x\in X_t.
\]
Thus, after $\widetilde O\left(\frac{d\log(k/\delta)}{(\eta-c\cdot\alpha\sqrt{d})^2}\right)$ samples, where $c=O(1)$, any arm left in the set $X_t$ is approximately optimal (specifically, $\eta$-optimal). However, this only holds if $\eta\ge\Omega(\alpha\sqrt{d})$. Indeed, it is known that outside this regime, the problem requires samples that are either exponential in dimension or linear in the number of arms \citep{du2019good, lattimore2020learning}, essentially losing the advantage of having an underlying linear structure.


In our problem, we can control the misspecification error $\alpha$ by adjusting the parameter $\epsilon$, allowing us to ensure good sample complexity bounds for any given approximation level $\eta$. However, this is only possible if we have prior knowledge of an upper bound $\beta$ on the density. Therefore, the following result relies on a slight strengthening of \cref{ass:boundedDensity}, where we assume knowledge of such an upper bound $\beta$.
Setting $\epsilon=O\Big(\big(\frac{\eta}{2\beta n}\big)^2\Big)$, we get a number of samples which is $\widetilde O\left(\frac{\poly(\beta, n, m)}{\eta^4}{\log\left(\frac{1}\delta\right)}\right)$.
We have to make a minor adjustment to the algorithm of \citet{lattimore2020learning} to have simple stopping rules (our algorithm just stops after a fixed amount of samples, after which it returns a solution). Formally, we need to have any-time concentration bounds in order to handle the $\delta$ correctness of the algorithm.

\begin{restatable}{theorem}{samplecomplexity}\label{thm:sample}
    Under \Cref{ass:boundedDensity} and \Cref{ass:boundedContract}, for any given $\eta>0$ and $\delta\in(0,1]$ there exists an algorithm which finds a contract $p$ in $\widetilde O\left(\frac{\poly(\beta, n, m)}{\eta^4}{\log\left(\frac{1}\delta\right)}\right)$ samples such that
    \[
    \mathbb{P}\left[\mathbb{E}_{\theta\sim \Gamma}\left[\Up(p,\theta)\right]\ge \OPT-\eta\right]\ge 1-\delta.
    \]
\end{restatable}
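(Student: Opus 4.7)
The plan is to reuse the reduction to misspecified linear bandits developed for \Cref{thm:regret}, but replace its regret guarantee with the best-arm identification property of phased elimination. Concretely, I would feed the same arm set $X = \nu_\epsilon(\Pcal) \subset [0,1]^{|\Theta_\epsilon|}$ to \PE, run it for a fixed number of rounds $T$ tuned to the target accuracy $\eta$, and then return any contract $p$ obtained by inverting $\nu_\epsilon$ on a surviving arm $x \in X_T$. The analysis done before \Cref{thm:regret} already certifies that this instance is an MLB with dimension $d = |\Theta_\epsilon| = \lceil 1/\epsilon\rceil$, $k = |X| = |\Pcal| = \poly((n,m,1/\epsilon)^m)$, and misspecification parameter $\alpha = 2\beta n \epsilon$; nothing in that argument depends on whether the goal is regret or identification.

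The key property I would invoke from \citet{lattimore2020learning} is that at every round $t$, \PE maintains a surviving set $X_t \subseteq X$ such that, with probability at least $1-\delta$, every $x \in X_t$ satisfies
\[
\mu(x) \;\ge\; \max_{x' \in X} \mu(x') - c\Bigl(\alpha\sqrt{d} + \sqrt{\tfrac{d}{t}\log(k/\delta)}\Bigr),
\]
for some absolute constant $c$. To turn this into an $\eta/2$-optimality guarantee in the linear model it suffices to balance the two error terms: pick the discretization so that $\alpha\sqrt{d} = 2\beta n\sqrt{\epsilon} \le \eta/(4c)$, which gives $\epsilon = \Theta\!\bigl((\eta/(\beta n))^2\bigr)$, and then stop after $T = \Theta\!\bigl(d\,\eta^{-2}\log(k/\delta)\bigr)$ rounds so that the stochastic term is also at most $\eta/(4c)$. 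Plugging in $d = 1/\epsilon = \Theta((\beta n/\eta)^2)$ and $k = \poly((n,m,1/\epsilon)^m) = \poly(n,m,\beta/\eta)$, one obtains the claimed sample complexity $T = \widetilde O\!\bigl(\poly(\beta,n,m)\,\eta^{-4}\log(1/\delta)\bigr)$. Finally, to pass from linear-model optimality to the original problem, combine \Cref{lem:valClose} and \Cref{thm:toFinite}: the returned contract $p$ satisfies $\mathbb{E}_{\theta\sim\Gamma}[\Up(p,\theta)] = \mu(x) \ge \max_{x'\in X}\mu(x') - \eta/2 \ge \OPT - \eta/2 - 2\beta n \epsilon$, which is at least $\OPT - \eta$ for the chosen $\epsilon$.

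The main obstacle is the anytime-concentration issue that the authors themselves flag just before the statement. The version of \PE in \citet{lattimore2020learning} proceeds in doubling phases with confidence sets whose failure probability is calibrated to a specific phase index, so naively halting at the self-chosen deadline $T$ may intersect a phase whose confidence event has non-negligible failure probability, and may even have just discarded arms using a bound that held only for a smaller horizon. The fix I would carry out is the standard one: replace each phase's confidence bound with a Laplace/peeling anytime version, so that the union bound over phases keeps the total failure probability below $\delta$ at every round simultaneously. This costs only an additional $\log\log T$ factor inside the logarithm and preserves the same $\widetilde O(\cdot)$ sample-complexity rate, while ensuring that any $x \in X_T$ is a valid output regardless of when $T$ falls relative to the internal phase schedule.
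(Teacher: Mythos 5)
Your proposal follows essentially the same path as the paper's proof: reuse the reduction to misspecified linear bandits from \Cref{thm:regret}, run $\PE$ as a best-arm-identification procedure, balance the misspecification term $\alpha\sqrt{d} = 2\beta n\sqrt{\epsilon}$ against the statistical term by choosing $\epsilon = \Theta((\eta/\beta n)^2)$, and pass back to the original problem via \Cref{lem:valClose} and \Cref{thm:toFinite}. The paper handles the anytime-concentration issue you flag by opening up the phase structure of $\PE$: it takes a union bound over phases with $\delta_\ell = 6\delta/(\pi^2\ell^2)$, computes the smallest phase index $\ell^\star$ for which the per-phase elimination guarantee implies $\eta$-optimality, and then sums $\sum_{\ell=1}^{\ell^\star}T_\ell$ to get the sample count --- i.e., it halts at a phase boundary rather than at an arbitrary round $T$, which sidesteps the ``mid-phase'' worry you raise; your anytime-bound alternative would also work but is not what the paper does. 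Both versions land on $\widetilde O(\poly(\beta,n,m)\,\eta^{-4}\log(1/\delta))$, and your sketch correctly identifies the only genuinely non-boilerplate ingredient, namely converting the regret-style guarantee of $\PE$ into a uniform suboptimality bound on the surviving set $X_\ell$, which the paper makes precise via \Cref{claim:estimation}.
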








\section{Discussion and Future Work}
In this work, we analyzed the complexity of contract design with single-dimensional types. Despite the recent reduction from multi-dimensional to single-dimensional types by \citet{castiglioni2025reduction} suggested that efficient algorithms might be unattainable, we demonstrate that the problem admits a PTAS when considering additive approximations. Moreover, we show that this result cannot be further improved as the problem does not admit a $\FPTAS$ if $\PolyClass\neq\NP$. This suggests that, to some extent, contract design in the single-dimensional setting is computationally easier than in the multi-dimensional case.
%


Beyond this computational perspective, an information-theoretic viewpoint also supports the idea that single-dimensional contract design is simpler than its multi-dimensional counterpart.
In an online learning setting, we derive a regret bound of $\widetilde O(\poly(\mathcal I)T^{3/4})$ and a sample complexity bound $\widetilde O\left(\frac{\poly(\mathcal I)}{\eta^4}\right)$, where $\eta$ represents the additive error and $\mathcal I$ denotes the instance size.
This stands in contrast to the multi-dimensional case, where regret bounds are either exponential in the instance size or nearly linear.

Several interesting directions remain open.
First, it would be interesting to extend our results to \emph{menus of contracts}. 
While we believe that our $\PTAS$ can be fairly easily extended to menus of deterministic contracts, we don't see a clear connection between our hardness result and the impossibility of designing an $\FPTAS$ for menus of contracts. 
Another promising direction is to explore the online setting further, providing both lower bounds (on regret or sample complexity) and positive results under weaker or alternative assumptions. 
A central question is whether the bounded density assumption is truly necessary, or other properties related to the single-dimensional structure of the problem can be exploited. 


\newpage
\appendix

\section*{Appendix}
\section{Omitted Proofs from \Cref{sec:PTAS}: Existence of a $\PTAS$}

\lemmadiscretization*

\begin{proof}
	For the first part, we simply use the law of total expectation 
	\begin{align*}
		\mathbb{E}_{\theta\sim\Gamma} \left[ \Up(p, \rho(\theta))\right] &= \sum_{\Pcal_i\in\Pcal}\mathbb{E}_{\theta\sim\Gamma} \left[ \Up(p, \rho(\theta))\Big\vert\theta\in\Pcal_i\right]\mathbb{P}_{\theta\sim\Gamma}[\theta\in\Pcal_i]\\
		&=\sum_{\Pcal_i\in\Pcal}\Up(p, \rho(\theta_i))\mathbb{P}_{\theta\sim\Gamma}[\theta\in\Pcal_i]\\
		&=\sum_{\Pcal_i\in\Pcal}\gamma_{\theta_i}\Up(p, \rho(\theta_i)).
	\end{align*}
For the second part, take any $i\in[k]$ and $\theta,\tilde \theta\in \Pcal_i$. Then, for all actions $a\neq \rho(\theta)$ it holds:
	\begin{align*}
		\Ua_{\theta}(\rho(\theta),p)&:=\sum_{\omega\in\Omega} F_{\rho(\theta),\omega}p_\omega-{ \theta c_{\rho(\theta)}} \\
		&= \sum_{\omega\in\Omega} F_{\rho(\theta),\omega}p_\omega- \tilde\theta c_{\rho(\theta)}+(\tilde\theta-\theta)c_{\rho(\theta)}\\
		&\ge \sum_{\omega\in\Omega} F_{a,\omega}p_\omega- \tilde\theta c_{a}+(\tilde\theta-\theta)c_{b^{\tilde\theta}(p)}\tag{By definition $\Bcal^{\tilde \theta}(p)$}\\
		&=\sum_{\omega\in\Omega} F_{a,\omega}p_\omega-{ \theta c_{a}}+(\tilde\theta-\theta)(c_{b^{\tilde\theta}(p)}-c_{a})\\
		&\ge \sum_{\omega\in\Omega} F_{a,\omega}p_\omega-{ \theta c_{a}} -\diam(\Pcal_i)\tag{Cauchy-Schwarz}\\
		&=\Ua_{\theta}(p,a)-\diam(\Pcal_i).
	\end{align*}
	This implies $\rho(\theta)\in\Bcal_{\diam(\Pcal_i)}^\theta(p)$, concluding the proof.
\end{proof}

\section{Impossibility of a $\FPTAS$}\label{app:reduction}
In this section, we show the missing details from the reduction of \Cref{sec:hardness}.
\subsection{Missing proofs for \Cref{thm:reduction}: If Analysis}
\begin{lemma}\label{lem:BR_IF}
    Let $\Scal^\star\subset \Scal$ be a cover of $E$, and for each $S\in \Scal^\star$, set $p_{\omega_S}=\frac{1}{n}$.
	while, $p_{\omega^\star}=p_{\bar \omega}=0$ and $p_{\omega_S}=0$ for all $S\notin \Scal^\star$. Then the best response of any type $\theta_i$ for $i\in E$ is in the set $\{a_{i,S}\}_{S\in\Scal^\star}$.
\end{lemma}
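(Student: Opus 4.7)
The plan is to enumerate every candidate action, compute the agent's utility, identify those that attain the maximum, and then apply the principal-favoring tie-breaking rule. Under the proposed contract, only the payments $p_{\omega_S}=1/n$ for $S\in\Scal^\star$ are nonzero, so for any action $a$ the agent's utility collapses to $\Ua_{\theta_i}(p,a)=\sum_{S\in\Scal^\star}F_{a,\omega_S}/n-\theta_i c_a$, with $\theta_i=i/n$.

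First I will rule out every action outside $\{a_{j,S},\bar a_{j,S}:S\in\Scal^\star,\,j\in S\}$. The candidate against which I compare is $a_{i,S}$ for some $S\in\Scal^\star$ with $i\in S$ (which exists because $\Scal^\star$ covers $E$); a direct computation will give it the strictly positive utility $\mu/(4in)$. The actions $a_{j,S}$, $\bar a_{j,S}$ with $S\notin\Scal^\star$ receive zero expected payment and have positive cost, so they are strictly worse; action $a^\star$ has expected payment at most $\epsilon m/n=1/n^9$, which is dominated by its cost $\theta_i=i/n$; and $a_0$ yields utility exactly zero.

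The heart of the argument is the analysis inside $\{a_{j,S},\bar a_{j,S}:S\in\Scal^\star,\,j\in S\}$. A direct calculation gives
\[
\Ua_{\theta_i}(p,a_{j,S})=\frac{\mu}{n}\left(\frac{1}{2j}-\frac{i}{4j^2}\right),\qquad \Ua_{\theta_i}(p,\bar a_{j,S})=\frac{\mu}{n}\left(\frac{1-\eta/2}{2j}-\frac{i(1-\eta)}{4j^2}\right),
\]
both depending only on $j$; and since $\Scal^\star$ covers $E$, the index $j$ effectively ranges over all of $\{1,\ldots,n\}$. The first function is strictly unimodal in continuous $j$ with maximum at $j=i$, so its integer maximum is $\mu/(4in)$ at $j=i$. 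The second function is the main obstacle: its continuous maximum lies at $j^\star=i(1-\eta)/(1-\eta/2)\in(i-1,i)$, so a priori the integer maximum could be at $j=i-1$ rather than $j=i$. I will rule this out by direct comparison: computing both values reduces the inequality $f(i-1)<f(i)$ to $i\eta<1$, which holds because $\eta=n^{-2}$ and $i\le n$. Hence the integer maximum of the $\bar a$ utility is also $\mu/(4in)$, attained at $j=i$.

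The set of agent best responses is therefore $\{a_{i,S},\bar a_{i,S}:S\in\Scal^\star,\,i\in S\}$, and tie-breaking in favor of the principal selects the action maximizing $\Up(p,\cdot)$. A short computation gives $\Up(p,a_{i,S})=\mu/(2in)>0$ (the reward on $\omega^\star$ outweighs the payment on $\omega_S$) while $\Up(p,\bar a_{i,S})=-(1-\eta/2)\mu/(2in)<0$ (since $\bar a_{i,S}$ never realizes $\omega^\star$). So $a_{i,S}$ strictly dominates $\bar a_{i,S}$ from the principal's viewpoint, and every $a_{i,S}$ with $S\in\Scal^\star,\,i\in S$ yields the same principal utility; the tie-broken best response therefore lies in $\{a_{i,S}\}_{S\in\Scal^\star}$, proving the lemma.
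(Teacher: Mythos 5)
Your proof is correct and follows essentially the same route as the paper's: compute the agent utility of $a_{j,S}$ and $\bar a_{j,S}$ as a function of $j$, exploit unimodality to reduce to comparing $j=i$ (and $j=i-1$ for the $\bar a$ family), reduce that comparison to $i\eta<1$, and rule out $a^\star$ and $a_0$ directly. The one place you go beyond the paper is the tie-break: the paper only establishes that $a_{i,S}$ is IC (\emph{i.e.}, in $\Bcal^{\theta_i}(p)$) and then notes in a footnote that this suffices for the downstream utility bound, whereas you explicitly compute $\Up(p,a_{i,S})=\mu/(2in)>0>\Up(p,\bar a_{i,S})$ and invoke the principal-favoring tie-break to pin down the best response exactly as the lemma literally claims. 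That extra step makes your proof a tighter match for the statement as written, at negligible additional cost.
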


\begin{proof}
        Let $\tilde a_i$ be any action in the set $\{a_{i,S}\}_{S\in\Scal^\star}$.
        The utility of playing an action $a_{j, S}$ (for $j\in S$ and $S\in\Scal$) is
	\begin{align*}
		\sum_{\omega \in \Omega} F_{ a_{j, S}, \omega } p_\omega - \theta_i \cdot c_{a_{j, S}} &= \sum_{S'\in\Scal^\star} F_{ a_{j, S}, {\omega_{S'}} } p_{\omega_{S'}} - \frac{i}{n} \cdot c_{a_{j, S}}\\
        & = F_{ a_{j, S}, {\omega_{S}} } p_{\omega_{S}} - \frac{i}{n} \cdot c_{a_{j, S}}  \\
        & =  F_{ a_{j, S}, {\omega_{S}} } p_{\omega_{S}} - \frac{i}{4j^2n}\mu  \\
        &= \mathbb{I}(S \in S^\star) \frac{1}{2in}\mu -  \frac{i}{4j^2n}\mu. 
	\end{align*}
 
    Hence, the agent's utility playing action $\tilde a_{i}$ is 
    \[
    \mathbb{I}(S \in S^\star) \frac{1}{2in}\mu-  \frac{i}{4i^2n}\mu = \frac{1}{2in}\mu-  \frac{1}{4in}\mu =   \frac{1}{4in}\mu.
    \]

    Now, we show that the utility playing any other action is not larger.
    Playing any other action $a_{j,S'}$ the agent's utility is at most
    \[
        \mathbb{I}(S' \in \Scal^\star) \frac{1}{2jn}\mu -  \frac{i}{4j^2n}\mu \le \frac{1}{2jn}\mu -  \frac{i}{4j^2n}\mu=
        \frac{1}{2jn}\mu \left(1-\frac{i}{2j}\right). 
    \]
    Then, observing that the function $j\mapsto \frac{1}{j}(1-\frac{i}{2j})$ is increasing for $j\le i$ and decreasing for $j\ge i$, we conclude that the maximum incurs at $j=i$ with utility $\frac{1}{4in}\mu$.
    Hence, the agent's utility playing $a_{j,S'}$ us at most
    \[ \frac{1}{2jn}\mu \left(1-\frac{i}{2j}\right) \le \frac{1}{4in}\mu. \]
	
   Now, we consider the agent's utility playing an action $\bar a_{j, S}$, $j\in E$ and $S\in\Scal^\star$ (we can consider only $S\in\Scal^\star$, since, if $S\notin\Scal^\star$, then the agent utility is strictly negative):
	\begin{align*}
		\sum_{\omega \in \Omega} F_{ \bar a_{j, S}, \omega } p_\omega - \frac{i}{n} \cdot c_{\bar a_{j, S}} &=  F_{ \bar a_{j, S}, {\omega_{S}}} p_{\omega_S}  - \frac{i}{n} \cdot c_{\bar a_{j, S}}\\
		&\le \frac{1}{n} F_{ \bar a_{j, S}, {\omega_{S}}} -\frac{i}{4j^2n}\mu(1-\eta) \\
		&=\frac{1}{2jn}\left(1-\frac\eta2\right)\mu-\frac{i}{4j^2n}\mu(1-\eta) \\
	&=\frac{1}{2jn}\mu\left(1-\frac{\eta}{2}-\frac{i}{2j}(1-\eta)\right).
	\end{align*}
	
	Then, we can observe that $j\mapsto \frac{1}{j}\left(1-\frac{\eta}{2}-\frac{i}{2j}(1-\eta)\right)$ is decreasing for $j\le 2i\frac{1-\eta}{2-\eta}$ and increasing otherwise. Since $2i\frac{1-\eta}{2-\eta}\in[i-1, i]$ for $\eta$ small enough, we only need to show that the agent's utility for actions $\bar a_{i, S}$ and $\bar a_{i-1, S}$ are suboptimal.
	Straightforward calculations shows that the value of action $\bar a_{i-1, S}$ is at most $\frac{1}{2(i-1)^2n}\mu\left[\frac{i}{2}+\frac{\eta}{2}-1
    \right]$, while the value of action $\bar a_{i, S}$ is $\frac{1}{4in}\mu$. Then, if $\eta$ is small enough it holds $\frac{1}{2(i-1)^2n}\mu\left[\frac{i}{2}+\frac{\eta}{2}-1
    \right] <\frac{1}{4in}\mu$. Thus $\bar a_{i, S}$ with $S\in\Scal^\star$ is optimal among all actions $\{\bar a_{j, S}\}_{j, S}$ and provides agent's utility $\frac{1}{4in}\mu$.
	
    Finally, consider action $a^\star$. The agent's utility playing this action would be
	\begin{align*}
		\sum_{\omega\in \Omega} F_{a^\star,\omega}p_\omega - \frac{i}{n}\cdot c_{a^\star}  = \frac{1}{n}\sum_{S\in \Scal^\star} F_{a^\star,\omega_S} - \frac{i}{n}
		\le \frac{\epsilon m}{n} - \frac{i}{n}
		\le 0.
	\end{align*}
    Hence, we can finally conclude that $\tilde a_i$ is a best response for type $\theta_i$.
\end{proof}

\begin{lemma}\label{lem:BR_IF2}
    Under the same conditions of \Cref{lem:BR_IF}, the best response of type $\theta_0$ is $a^\star$.
\end{lemma}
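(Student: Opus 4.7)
The plan is to exploit the fact that for type $\theta_0 = 0$ every action has zero cost, so the agent's utility reduces to the expected payment $\sum_{\omega\in\Omega} F_{a,\omega} p_\omega$. Under the contract posted in Lemma~\ref{lem:BR_IF}, only the outcomes $\omega_S$ with $S \in \Scal^\star$ carry a non-zero payment of $1/n$, so maximizing the agent's utility amounts to maximizing the total probability placed on the set $\{\omega_S : S \in \Scal^\star\}$, weighted by $1/n$.

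First I would compute the utility of $a^\star$ directly: since $F_{a^\star,\omega_S}=\epsilon$ for every $S\in\Scal$, we obtain
\(
U^A_{\theta_0}(p,a^\star)=\epsilon\sum_{S\in\Scal^\star} \tfrac{1}{n} = \tfrac{\epsilon k}{n},
\)
where $k = |\Scal^\star|$. Then I would upper bound the utility of every other action. For an action $a_{j,S}$ the only payment-relevant outcome is $\omega_S$, yielding utility $\mathbb{I}(S\in\Scal^\star)\cdot\tfrac{\mu}{2jn}\le\tfrac{\mu}{2n}$, with the maximum attained at $j=1$ and any $S\in\Scal^\star$. An analogous computation for $\bar a_{j,S}$ gives utility at most $\tfrac{\mu(1-\eta/2)}{2n}<\tfrac{\mu}{2n}$, and the trivial action $a_0$ yields $0$.

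Finally, I would verify the strict inequality $\tfrac{\epsilon k}{n} > \tfrac{\mu}{2n}$, equivalently $2k\epsilon > \mu$. Substituting the parameter choices $\epsilon = n^{-8}m^{-1}$ and $\mu = n^{-9}m^{-1}$ reduces this to $2k > 1/n$, which holds for any $k\ge 1$ (\emph{i.e.}, whenever $E$ is non-empty so that any cover $\Scal^\star$ is non-trivial). Hence $a^\star$ strictly dominates every other action, and the claim follows without invoking the tie-breaking rule.

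The argument is essentially a direct calculation, so I don't foresee a real obstacle; the only item requiring care is checking the strict gap between $\epsilon k$ and $\mu$, which is guaranteed by the polynomial separation between the parameters fixed in the reduction.
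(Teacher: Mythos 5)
Your proof is correct and follows essentially the same approach as the paper: observe that $\theta_0$ has zero cost so the best response maximizes expected payment, compute the payment $\tfrac{\epsilon k}{n}$ for $a^\star$, upper-bound the payment of every other action by $\tfrac{\mu}{2n}$ (the paper uses the slightly looser threshold $\tfrac{\mu}{n}$), and check the strict gap from the polynomial separation of $\epsilon$ and $\mu$.
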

\begin{proof}
    An agent of type $\theta_0$ doesn't pay any cost for playing actions, and hence their best response simply maximizes the expected payment. 
    The expected payment playing $a^\star$ would be at least $\frac{k\epsilon}{n}$ which is greater than $\mu/n$. On the other hand, the value of playing action $a_{i, S}$ or $\bar a_{i,S}$ (with $i\in S$) would be at most $\frac{1}{2in}\mu$ and $\frac{1}{2in}\mu(1-\eta/2)$ which are both strictly less then $\mu/n$. Thus, we conclude that an agent of type $\theta_0$ will play action $a^\star$.
\end{proof}

\subsection{Missing Proofs for \Cref{thm:reduction}: Only if Analysis}

\begin{lemma}\label{lem:onlyif1}
   The principal utility for types $i\in E_1$ is at most $\frac{1}{2in}\mu+\frac{1}{i}p_{\omega^\star}\mu\left(\frac{2}{\eta}-1\right)$.
\end{lemma}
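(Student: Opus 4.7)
The plan is to work directly from the definition of $E_1$ and exploit the key incentive-compatibility constraint between actions $a_{i,S}$ and $\bar a_{i,S}$, which are designed to be ``close competitors'' and should thus yield a tight linear inequality on the payments.

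First, I would expand the principal's utility. Since $i\in E_1$, by definition type $\theta_i$ plays $a_{i,S}$ for some $S\in\mathcal{S}$. Using the outcome distribution of $a_{i,S}$ (which puts mass only on $\omega_S$, $\omega^\star$, and $\bar\omega$) together with the rewards $r_{\omega_S}=r_{\bar\omega}=0$ and $r_{\omega^\star}=1/n$, the principal's utility becomes
\[
\Up(p,\theta_i)=\frac{\mu}{in}-\frac{\mu}{2i}p_{\omega_S}-\frac{\mu}{i}p_{\omega^\star}-\Bigl(1-\frac{3\mu}{2i}\Bigr)p_{\bar\omega}.
\]
Subtracting this from the target upper bound and simplifying, the desired inequality is equivalent to
\[
\frac{\mu}{2i}p_{\omega_S}+\frac{2\mu}{i\eta}p_{\omega^\star}+\Bigl(1-\frac{3\mu}{2i}\Bigr)p_{\bar\omega}\;\ge\;\frac{\mu}{2in}.
\]

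Next, I would extract the key linear constraint from incentive compatibility. Since $a_{i,S}$ is a best response for $\theta_i$, in particular $\Ua_{\theta_i}(p,a_{i,S})\ge \Ua_{\theta_i}(p,\bar a_{i,S})$. Expanding both sides and taking the difference, the $p_{\omega_S}$ contribution collapses to $\tfrac{\mu\eta}{4i}p_{\omega_S}$, the $p_{\omega^\star}$ contribution is $\tfrac{\mu}{i}p_{\omega^\star}$, the $p_{\bar\omega}$ contribution simplifies to $-(\tfrac{\mu}{i}+\tfrac{\mu\eta}{4i})p_{\bar\omega}$, and the cost difference gives $-\tfrac{\mu\eta}{4in}$. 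Multiplying the resulting inequality by $2/\eta$ yields
\[
\frac{\mu}{2i}p_{\omega_S}+\frac{2\mu}{i\eta}p_{\omega^\star}\;\ge\;\Bigl(\frac{\mu}{2i}+\frac{2\mu}{i\eta}\Bigr)p_{\bar\omega}+\frac{\mu}{2in}.
\]

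Finally, I would combine the two displayed inequalities. Adding $(1-\tfrac{3\mu}{2i})p_{\bar\omega}$ to both sides of the IC-derived inequality, the coefficient of $p_{\bar\omega}$ on the right becomes $1+\tfrac{\mu}{i}\bigl(\tfrac{2}{\eta}-1\bigr)$, which is non-negative because $\eta<2$ (indeed $\eta=n^{-2}$). Since $p_{\bar\omega}\ge 0$, the right-hand side is at least $\tfrac{\mu}{2in}$, matching the reformulated target. The main obstacle I anticipate is bookkeeping rather than conceptual: one must choose the right competitor action ($\bar a_{i,S}$, not $a_0$ or $a^\star$) so that the $p_{\omega^\star}$ term survives with the right coefficient $1/i$ and the $1/\eta$ amplification arises naturally from dividing by $\eta/2$; using any other competitor would yield a weaker bound that does not match the $\tfrac{2}{\eta}-1$ factor in the statement.
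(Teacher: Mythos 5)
Your proof is correct and uses essentially the same approach as the paper: the key constraint is $\Ua_{\theta_i}(p,a_{i,S})\ge\Ua_{\theta_i}(p,\bar a_{i,S})$, and the $2/\eta$ amplification arises from dividing by $\eta/2$. The only cosmetic difference is that the paper first drops the (non-positive) $p_{\bar\omega}$ term from the IC constraint to get the intermediate inequality $p_{\omega_S}\ge\frac{1}{n}-\frac{4}{\eta}p_{\omega^\star}$ and then substitutes into the utility, while you carry $p_{\bar\omega}$ to the end and observe its final coefficient $1+\frac{\mu}{i}(\frac{2}{\eta}-1)$ is non-negative.
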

\begin{proof}
    	We start analyzing the constraints that the best responses of types $\theta_i$, $i \in E_1$, impose on the contract.
	Consider a type $\theta_i$, $i \in E_1$, and let $a_{i,S}$ be their best response. Notice that $i \in S$.
	Then, from the IC constraint with respect to action $\bar a_{i,S}$ it holds
	\[
	p_{\omega_{S}}\frac{1}{2i}\mu + p_{\omega^\star}  \frac{1}{i}\mu+\left(1-\frac{3}{2i}\mu\right)p_{\bar\omega}- \frac{1}{4in}\mu \ge  p_{\omega_{S}}\frac{1-\frac{\eta}{2}}{2i} \mu +\left(1-\frac{1-\frac{\eta}{2}}{2i}\mu\right)p_{\bar\omega}- \frac{1}{4in} (1-\eta) \mu,
	\]
	implying
	\begin{align}\label{eq:largeE1}
	    p_{\omega_{S}}\ge \frac{1}{n}-\frac{4}{\eta}p_{\omega^\star}.
	\end{align}
	From this inequality it follows that the principal utility when the agent's type is $\theta_i$, $i\in E_1$ is at most
	\begin{align*}
	&F_{a_{i,S},\omega_S}(r_{\omega_S}-p_{\omega_S})+F_{a_{i,S}}(r_{\omega^\star}-p_{\omega^\star})+F_{a_{i,S},\bar \omega}(r_{\bar \omega}-p_{\bar\omega})\nonumber\\
        &\hspace{6cm}\le-\frac{1}{2i}\mu(\frac{1}{n}-4p_{\omega^\star}/\eta)+\frac{1}{i}\mu(\frac{1}{n}-p_{\omega^\star})\nonumber\\
	&\hspace{6cm}=\frac{1}{2in}\mu+\frac{1}{i}p_{\omega^\star}\mu\left(\frac{2}{\eta}-1\right),\label{eq:boundE1}
	\end{align*}
    as desired.
\end{proof}

\begin{lemma}\label{lem:onlyif2}
   The principal utility for types $i\in E_2$ is at most $\mu\left(\frac{1}{2in}-\frac{1}{8n^4}+\frac{2}{\eta}p_{\omega^\star}\right)$.
\end{lemma}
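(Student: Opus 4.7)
The plan is to mirror the argument used for $E_1$ in \Cref{lem:onlyif1}, but now carefully exploit the suboptimality gap that arises because the agent of type $\theta_i$ plays an action $a_{j,S}$ with $j\neq i$ instead of the ``natural'' action $a_{i,S}$. I will fix any $i\in E_2$ together with its best response $a_{j,S}$ (so $j\neq i$ and $j\in S$), and then proceed in three steps.

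First, I will write down the IC constraint of $a_{j,S}$ against $\bar a_{j,S}$. The computation is identical in spirit to the one carried out in \Cref{lem:onlyif1} (just replace the cost factor $\frac{1}{4i^2}$ by $\frac{1}{4j^2}$ and multiply the cost-side $\eta$-term by the type $i/n$). After simplification, this yields a lower bound of the form
\[
p_{\omega_S}\ \ge\ \frac{i}{jn}\ -\ \frac{4}{\eta}\,p_{\omega^\star}\ +\ \kappa(\eta,\mu,j)\, p_{\bar\omega},
\]
where $\kappa(\eta,\mu,j)\ge 0$. The $p_{\bar\omega}$ term was dropped in \Cref{lem:onlyif1}; here I will keep track of it because it will enter the final upper bound with the correct sign and can be discarded safely.

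Second, I will plug this lower bound into the principal's expected utility when the realized type is $\theta_i$:
\[
F_{a_{j,S},\omega_S}(r_{\omega_S}-p_{\omega_S}) + F_{a_{j,S},\omega^\star}(r_{\omega^\star}-p_{\omega^\star}) + F_{a_{j,S},\bar\omega}(r_{\bar\omega}-p_{\bar\omega}).
\]
Using $r_{\omega_S}=r_{\bar\omega}=0$, $r_{\omega^\star}=1/n$, and the outcome probabilities for $a_{j,S}$, routine algebra produces
\[
\frac{\mu}{jn}\!\left(1-\frac{i}{2j}\right) + p_{\omega^\star}\,\frac{\mu}{j}\!\left(\frac{2}{\eta}-1\right) - \underbrace{\left[\frac{(2+\eta/2)\mu}{j\eta} + 1-\frac{3\mu}{2j}\right]}_{\ge\,0\text{ for small }\mu,\eta} p_{\bar\omega}.
\]
The $p_{\bar\omega}$ bracket is non-negative, so I can drop that term when upper-bounding. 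Moreover, $\frac{1}{j}\!\left(\frac{2}{\eta}-1\right)\le \frac{2}{\eta}$ since $j\ge 1$.

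Third, I will quantify the gap created by $j\neq i$. The function $j\mapsto\frac{1}{jn}\bigl(1-\tfrac{i}{2j}\bigr)$ attains its maximum at $j=i$ with value $\frac{1}{2in}$, and a direct computation shows
\[
\frac{1}{2in}-\frac{1}{jn}\!\left(1-\frac{i}{2j}\right) = \frac{(j-i)^2}{2ij^2 n}.
\]
Since $j\neq i$, $(j-i)^2\ge 1$, and $i,j\le n$, this gap is at least $\frac{1}{2n^4}\ge \frac{1}{8n^4}$. Combining the three steps gives the claimed bound $\mu\!\left(\frac{1}{2in}-\frac{1}{8n^4}+\frac{2}{\eta}p_{\omega^\star}\right)$.

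The main subtlety I expect is the bookkeeping around $p_{\bar\omega}$: one must verify that in the IC inequality it appears with the sign that \emph{strengthens} the lower bound on $p_{\omega_S}$, while in the principal's utility it appears with the sign that \emph{reduces} utility, so that both occurrences can be discarded in favor of an upper bound. Everything else is routine algebra together with the quadratic-gap estimate $(j-i)^2\ge 1$ which converts ``$j\neq i$'' into the explicit $\frac{1}{8n^4}$ slack.
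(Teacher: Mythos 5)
Your proof is correct and follows the same overall route as the paper: start from the IC constraint of $a_{j,S}$ against $\bar a_{j,S}$ to get a lower bound on $p_{\omega_S}$, substitute into the principal's utility, and then quantify the slack coming from $j\neq i$.  Your bookkeeping of $p_{\bar\omega}$ is careful and correct (it does indeed appear with a sign that strengthens the lower bound on $p_{\omega_S}$ in the IC constraint and with a sign that can only hurt the principal in the utility, so both occurrences may be discarded), and the simplification of the principal's utility matches the paper's.

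Where you genuinely diverge is the final quantification of the gap.  The paper argues that $j\mapsto\frac{1}{j}-\frac{i}{2j^2}$ is unimodal with peak at $j=i$, so the constrained maximum over $j\neq i$ is at $j\in\{i-1,i+1\}$, and then bounds the two cases separately to obtain $\le\frac{1}{2i}-\frac{1}{8n^3}$.  You instead use the exact algebraic identity
\[
\frac{1}{2in}-\frac{1}{jn}\left(1-\frac{i}{2j}\right)=\frac{(j-i)^2}{2ij^2n},
\]
and the observation that $(j-i)^2\ge 1$ together with $i,j\le n$ gives a slack of at least $\frac{1}{2n^4}\ge\frac{1}{8n^4}$.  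This is cleaner (no case distinction, no monotonicity argument) and in fact yields a slightly stronger constant, $\frac{1}{2n^4}$ in place of $\frac{1}{8n^4}$, which trivially implies the stated bound.  Both arguments are valid; yours is the more economical of the two.
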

\begin{proof}
    	Now, consider an agent's type $\theta_i$, $i\in E_2$, and let $a_{j,S}$, $j \neq i$ be the action played by the agent. 
	Notice that $j \in S$. Then, IC constraint with respect to action $\bar a_{j,S}$ reads
	\[
	p_{\omega_{S}} \frac{1}{2j} \mu + \frac{1}{j}p_{\omega^\star}\mu+(1-\frac{3}{2j}\mu)p_{\bar\omega}- \frac{i}{4j^2n}\mu \ge p_{\omega_{S}} \frac{1}{2j} (1-\eta/2) \mu+\left(1-\frac{1-\frac{\eta}{2}}{10j}\mu\right)p_{\bar\omega}- \frac{i}{4j^2n} (1-\eta) \mu,
	\]
	which implies that
	\[
	p_{\omega_{S}} \ge \frac{i}{jn}-  \frac{4}{\eta} p_{\omega^\star}. 
	\]
	
	Hence, the principal's expected utility when the realized type is a $\theta_i$, $i \in E_2$, is at most:
	\begin{align*}
		F_{a_{j,S},\omega_S}(r_{\omega_S}-p_{\omega_S})+F_{a_{j,S},\omega^\star}(r_{\omega^\star}-p_{\omega^\star})+F_{a_{j,S},\bar \omega}(r_{\bar \omega}-p_{\bar\omega})&\le-\frac{1}{2j}\mu\left( \frac{i}{jn}- p_{\omega^\star} \frac{4}{\eta}\right)+\frac{1}{jn}\mu\\
		&=\mu\left(\frac{1}{jn}-\frac{i}{2j^2n}+\frac{2}{j\eta}p_{\omega^\star}\right)\\
		&\le\mu\left(\frac{1}{jn}-\frac{i}{2j^2n}+\frac{2}{\eta}p_{\omega^\star}\right)
	\end{align*}

	Now consider the term $\frac{1}{j}-\frac{i}{2j^2}$.
    This term over $E$ is maximized by $j=i-1$ or $j=i+1$ (recall that we cannot have that $j=i$ as $i\in E_2$). 
    For $j=i-1$, we can obtain an upper bound on this quantity observing that
	\begin{align*}
		\frac{1}{(i-1)}-\frac{i}{2(i-1)^2}&=\frac{1}{2i}-\frac{1}{2(i-1)^2i}\\
		&\le \frac{1}{2i} -\frac{1}{2(n-1)^2n}\tag{$x\mapsto -\frac{1}{(x-1)^2x}$ is increasing for $x\ge 1$}\\
		&\le \frac{1}{2i}-\frac{1}{2n^3}
	\end{align*}
    For $j=i+1$, we can obtain an upper bound on this quantity observing that
	\begin{align*}
		\frac{1}{(i+1)}-\frac{i}{2(i+1)^2}&=\frac{1}{2i}-\frac{1}{2(i+1)^2i}\\
		&\le \frac{1}{2i} -\frac{1}{2(n+1)^2n}\tag{$x\mapsto -\frac{1}{(x+1)^2x}$ is increasing for $x\ge 1$}\\
		&\le\frac{1}{2i} -\frac{1}{8n^3}
	\end{align*}

	Hence, the principal's expected utility when the realized type is a $\theta=i \in E_2$ is at most:
	\begin{align*}\label{eq:boundE2}
	\mu\left(\frac{1}{2in}-\frac{1}{8n^4}+\frac{2}{\eta}p_{\omega^\star}\right),
	\end{align*}
    as stated.    
\end{proof}

\begin{lemma}\label{lem:onlyif3}
   The principal utility for types $i\in E_3$ is at most $0$.
\end{lemma}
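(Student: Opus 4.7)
By the definition of $E_3$, if $i \in E_3$ the best response $b^{\theta_i}(p)$ is not of the form $a_{j,S}$ for any $j \in S$, $S \in \Scal$. Since $c_{a_0} = 0$ the null action $a_0$ is always individually rational, so the best response must lie in the set $\{\bar a_{j,S} : j \in S, S \in \Scal\} \cup \{a^\star, a_0\}$. I would establish the bound by a short case analysis on which of these actions is played, leveraging the fact that among these actions only $a^\star$ places positive mass on the sole reward-bearing outcome $\omega^\star$.

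The first two families of cases are essentially immediate from the construction. If $b^{\theta_i}(p) = \bar a_{j,S}$, then by construction $F_{\bar a_{j,S}, \omega^\star} = 0$, so the agent induces only outcomes in $\{\omega_S, \bar\omega\}$, both of which have reward $0$. Since payments are non-negative, the principal's expected utility equals $-F_{\bar a_{j,S},\omega_S}\,p_{\omega_S} - F_{\bar a_{j,S},\bar\omega}\,p_{\bar\omega} \le 0$. Similarly, if $b^{\theta_i}(p) = a_0$, then $\bar\omega$ realizes with probability $1$ and the principal's utility is $r_{\bar\omega} - p_{\bar\omega} = -p_{\bar\omega} \le 0$.

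The remaining (and only substantive) case is when $b^{\theta_i}(p) = a^\star$. Here I would use the IC inequality $U^A_{\theta_i}(p, a^\star) \ge U^A_{\theta_i}(p, a_0)$. Since $c_{a^\star} = 1$, $c_{a_0} = 0$, $\theta_i = i/n$, and $F_{a^\star, \bar\omega} = 0$, this reads
\[
(1-m\epsilon)\,p_{\omega^\star} + \epsilon \sum_{S \in \Scal} p_{\omega_S} - \frac{i}{n} \;\ge\; p_{\bar\omega} \;\ge\; 0.
\]
Because $i \ge 1$, this yields $(1-m\epsilon)\,p_{\omega^\star} + \epsilon \sum_{S} p_{\omega_S} \ge i/n \ge 1/n \ge (1-m\epsilon)/n$. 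Now the principal's utility under $a^\star$ is exactly
\[
\frac{1-m\epsilon}{n} \;-\; (1-m\epsilon)\,p_{\omega^\star} \;-\; \epsilon \sum_{S \in \Scal} p_{\omega_S},
\]
and the displayed inequality shows that this quantity is non-positive.

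I do not anticipate a real obstacle. The structural observation doing the work is that every action accessible to an $E_3$-type either places zero mass on $\omega^\star$ (making the principal's revenue zero while still demanding non-negative payments) or coincides with $a^\star$, in which case the type-$\theta_i$ IC constraint against $a_0$ forces the expected payment to exceed $(1-m\epsilon)/n$, matching the principal's maximum possible revenue.
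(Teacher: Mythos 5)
Your proof is correct and follows essentially the same case analysis as the paper's: both isolate the fact that the actions available to an $E_3$-type either never reach the reward-bearing outcome $\omega^\star$ (giving non-positive utility trivially) or equal $a^\star$, in which case the IC constraint against $a_0$ forces the expected payment to at least match the principal's possible revenue. The only cosmetic difference is that you retain the $p_{\bar\omega}$ term in the IC inequality before dropping it and bound by exactly $0$, whereas the paper drops it immediately and arrives at the slightly stronger bound $\frac{1-m\epsilon}{n}-\frac{i}{n}<0$.
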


\begin{proof}
    	Now consider any type $i\in E_3$. This agent plays an action $\bar a_{j, S}$ for some $S$ and $j\in S$, actions $a^\star$ or action $a_0$.
	Clearly when actions $\bar a_{j, S}$ and $a_0$ are played the principal's utility is at most zero, as the only outcomes reachable from these actions are $\omega_S$ and $\bar\omega$ which have reward $0$. Moreover, the principal utility is at most zero also for action $a^\star$. Indeed, consider the IC constraints of action $a^\star$ with respect to $a_0$:
	\[
	\sum_{S\in\Scal}F_{a^\star,\omega_S}p_{\omega_S}-\frac{i}{n} \cdot c_{a^\star} + F_{a^\star,\omega^\star} p_{\omega^\star}\ge 0,
	\]
	which implies that $\epsilon\sum_{S\in\Scal}p_{\omega_S}+(1-m\epsilon)p_{\omega^\star}\ge \frac{i}{n}$. Then, the principal utility, when action $a^\star$ is played is 
	\begin{align*}
		\sum_{S\in\Scal} F_{a^\star,\omega_S}(r_{\omega_S}-p_{\omega_S})+F_{a^\star,\omega^\star}(r_{\omega^\star}-p_{\omega^\star})=&-\epsilon\sum_{S\in\Scal}p_{\omega_S}+(1-m\epsilon)\left(\frac{1}{n}-p_{\omega^\star}\right)\\
		&\le \frac{1-m\epsilon}{n}- \frac{i}{n}.
	\end{align*}
	and thus, the principal utility is negative also for action $a^\star$.
    Hence, we can conclude that when the realized type is $\theta_i$, $i\in E_3$, the principal's utility is at most zero.
	%
\end{proof}

\begin{lemma}\label{lem:onlyif4}
    When the realized type is $\theta_0$ the principal utility is $-\epsilon\sum_{S\in\Scal}p_{\omega_S}+(1-m\epsilon)(\frac{1}{n}-p_{\omega^\star})$.
\end{lemma}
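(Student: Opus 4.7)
The approach I would take is essentially a direct computation, exploiting the fact that $\theta_0=0$ means the agent incurs no effort cost and therefore simply picks the action maximizing the expected payment $\sum_\omega F_{a,\omega}p_\omega$. I would first compute the principal's utility under $a^\star$, and then argue that this matches the stated expression (serving either as an exact evaluation or as an upper bound on the principal's utility for any best response that the $\theta_0$-agent might choose).

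The first step is a plug-in calculation. Using $F_{a^\star,\omega_S}=\epsilon$ for each $S\in\Scal$, $F_{a^\star,\omega^\star}=1-m\epsilon$, and the reward values $r_{\omega_S}=r_{\bar\omega}=0$, $r_{\omega^\star}=\tfrac{1}{n}$, one gets
\[
\Up(p,a^\star)=\sum_{S\in\Scal}\epsilon(0-p_{\omega_S})+(1-m\epsilon)\!\left(\tfrac{1}{n}-p_{\omega^\star}\right)=-\epsilon\sum_{S\in\Scal}p_{\omega_S}+(1-m\epsilon)\!\left(\tfrac{1}{n}-p_{\omega^\star}\right),
\]
which is exactly the claimed expression.

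The second step is to show that, regardless of which action the $\theta_0$-agent selects, the principal's utility is bounded above by this same quantity. I would use two ingredients. First, since $\theta_0=0$, any best response $a$ of type $\theta_0$ satisfies $\sum_\omega F_{a,\omega}p_\omega\ge\sum_\omega F_{a^\star,\omega}p_\omega$. Second, among all actions in the instance, only $a^\star$ places nontrivial probability on $\omega^\star$: indeed $F_{a^\star,\omega^\star}=1-m\epsilon$, whereas $F_{a_{i,S},\omega^\star}=\mu/i\le\mu=n^{-9}m^{-1}\ll 1-m\epsilon$, and $F_{a,\omega^\star}=0$ for $a\in\{\bar a_{i,S}\}\cup\{a_0\}$. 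Writing $\Up(p,a)=F_{a,\omega^\star}/n-\sum_\omega F_{a,\omega}p_\omega$ and combining these two inequalities gives $\Up(p,a)\le(1-m\epsilon)/n-\sum_\omega F_{a^\star,\omega}p_\omega=\Up(p,a^\star)$, as desired.

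The only mild subtlety is reconciling the wording ``is'' in the lemma with the fact that the $\theta_0$-agent need not choose $a^\star$ (e.g., when $p_{\bar\omega}$ is large, $a_0$ is strictly better for the agent). I would treat the lemma as an upper bound, in line with \Cref{lem:onlyif1,lem:onlyif2,lem:onlyif3}; the argument above delivers exactly this upper bound, and one recovers equality whenever $a^\star$ is a best response (which is favored by principal-biased tie-breaking, since the analysis also shows $\Up(p,a^\star)$ dominates $\Up(p,a)$ for competing best responses $a$). There is no serious obstacle; the main care is in enumerating the four action families and verifying the bound $F_{a,\omega^\star}\le 1-m\epsilon$ across all of them.
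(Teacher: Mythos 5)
Your proof is correct, and it takes a genuinely different route from the paper's. The paper's argument is conditional: it asserts that if the contract achieves utility at least the target threshold $\ell-n^{-15}m^{-1}$, then $\theta_0$ must best-respond with $a^\star$, and then simply evaluates $\Up(p,a^\star)$. This ``either $\theta_0$ plays $a^\star$, or the bound is already trivially met'' structure is somewhat informally justified in the paper (and the stated inequality there contains an apparent typo). Your argument instead gives an unconditional domination bound: you observe that since $\theta_0=0$ any best response $a$ of $\theta_0$ satisfies $\sum_\omega F_{a,\omega}p_\omega\ge\sum_\omega F_{a^\star,\omega}p_\omega$, and that $F_{a,\omega^\star}\le F_{a^\star,\omega^\star}=1-m\epsilon$ holds across all four action families, which together imply $\Up(p,a)\le\Up(p,a^\star)$. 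This cleanly establishes the claimed expression as an upper bound for every contract, with equality exactly when $a^\star\in\Bcal^{\theta_0}(p)$ (which the principal-favoring tie-break then guarantees is attained). Since the lemma is only ever invoked as an upper bound inside \Cref{lem:onlyif5}, your reading of ``is'' as ``is at most'' is the right one. Your approach avoids the circular-looking conditional reasoning in the paper's proof and is arguably more robust; the paper's approach, once unpacked, is terser but leaves a gap for the reader to fill.
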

\begin{proof}
    It is easy to see that type $\theta_0=0$ best responds with action $a^\star$  to any contract $p$ with expected principal's utility at least $\ell-n^{15}m^{-1}$. 
    Indeed, the maximum expected reward obtaining without playing $a^\star$ is obtained playing an action $a_{1,S}$ with expected reward
    \[ 
    \frac{1}{n} \mu \le \ell-n^{15}m^{-1}< \ell-n^{15}m^{-1}.  
    \]
    Hence, we conclude computing the principal's expected utility when the agent's type is $\theta_0$ playing action $a^\star$:
    
	\begin{align*}\label{eq:bound0}
	-\epsilon\sum_{S\in\Scal}p_{\omega_S}+(1-m\epsilon)\left(\frac{1}{n}-p_{\omega^\star}\right).
	\end{align*}
    as desired.
\end{proof}

\begin{lemma}\label{lem:onlyif5}
Define 	$\bar \Scal:=\{S\in\Scal: p_{\omega_S}\ge\frac{1}{n}-\frac{4}{\eta}p_{\omega^\star}\}$, then the principal utility is at most
    \[
        \frac{1-\rho}{n}\mu\left[\sum_{i\in E_1}\frac{1}{2in}+\sum_{i\in E_2}\left(\frac{1}{2in}-\frac{1}{8n^4}\right)\right]+\frac{1}{n}\rho(1-m\epsilon)-\frac{1}{n}\epsilon\rho|\bar\Scal|.
    \]
\end{lemma}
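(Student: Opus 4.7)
The plan is to simply aggregate the per-type bounds established in \Cref{lem:onlyif1,lem:onlyif2,lem:onlyif3,lem:onlyif4} and then argue that, given the parameter choices $\rho=n^{-6}$, $\eta=n^{-2}$, $\epsilon=n^{-8}m^{-1}$, $\mu=n^{-9}m^{-1}$, the aggregate coefficient of the payment $p_{\omega^\star}$ is non-positive. The upper bound is then obtained by formally setting $p_{\omega^\star}=0$.

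Concretely, I would start by writing the principal's expected utility as
\(\sum_{i\in E_1\cup E_2\cup E_3}\gamma_{\theta_i}\Up(p,\theta_i)+\gamma_{\theta_0}\Up(p,\theta_0)\), substitute $\gamma_{\theta_i}=(1-\rho)/n$, $\gamma_{\theta_0}=\rho$, and plug in the bounds of the previous four lemmas. The $E_3$ terms are dropped because each is non-positive (\Cref{lem:onlyif3}), and the $\theta_0$ contribution is expanded into a constant part $\rho(1-m\epsilon)/n$, a linear $p_{\omega^\star}$ part $-\rho(1-m\epsilon)p_{\omega^\star}$, and a linear part $-\rho\epsilon\sum_{S\in\Scal}p_{\omega_S}$. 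The $E_1$ and $E_2$ contributions each contribute a constant piece (the one that appears in the statement) and a $p_{\omega^\star}$ piece with positive coefficient $\tfrac{(1-\rho)\mu}{n}\sum_{i\in E_1}\tfrac{1}{i}(\tfrac{2}{\eta}-1)$ and $\tfrac{(1-\rho)\mu}{n}\sum_{i\in E_2}\tfrac{2}{\eta}$, respectively.

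Next, I would handle the $\sum_{S\in\Scal} p_{\omega_S}$ term in two steps. First, using $p_{\omega_S}\geq 0$, I drop the contribution from $S\notin\bar\Scal$ (this only makes the bound looser since the coefficient is negative): $-\rho\epsilon\sum_{S\in\Scal}p_{\omega_S}\le -\rho\epsilon\sum_{S\in\bar\Scal}p_{\omega_S}$. Second, for $S\in\bar\Scal$, by definition $p_{\omega_S}\ge \tfrac{1}{n}-\tfrac{4}{\eta}p_{\omega^\star}$, which gives $-\rho\epsilon\sum_{S\in\bar\Scal}p_{\omega_S}\le -\tfrac{\rho\epsilon|\bar\Scal|}{n}+\tfrac{4\rho\epsilon|\bar\Scal|}{\eta}p_{\omega^\star}$. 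The constant part $-\tfrac{\rho\epsilon|\bar\Scal|}{n}$ is exactly the last summand of the target bound.

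The main obstacle, and the only real calculation, is showing that the aggregated coefficient of $p_{\omega^\star}$,
\[
C\coloneqq \frac{(1-\rho)\mu}{n}\!\left[\sum_{i\in E_1}\tfrac{1}{i}\!\left(\tfrac{2}{\eta}-1\right)+\sum_{i\in E_2}\tfrac{2}{\eta}\right]-\rho(1-m\epsilon)+\frac{4\rho\epsilon|\bar\Scal|}{\eta},
\]
is non-positive. I would upper bound the first bracket by $\tfrac{2}{\eta}(1+\ln n+n)\le 3n/\eta$ for $n$ large, so that the positive part of $C$ is at most $\tfrac{6\mu}{\eta}+\tfrac{4\rho\epsilon m}{\eta}=O(n^{-7}m^{-1})+O(n^{-12})$, while the negative term is $-\rho(1-m\epsilon)=-n^{-6}(1-n^{-8})$. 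This dominates, so $C\le 0$. Since $p_{\omega^\star}\ge 0$, the term $C\cdot p_{\omega^\star}$ can be dropped, yielding exactly the bound in the statement.
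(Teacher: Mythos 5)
Your proposal is correct and follows essentially the same route as the paper: aggregate the per-type bounds from Lemmas~\ref{lem:onlyif1}--\ref{lem:onlyif4}, lower-bound $\sum_{S\in\Scal}p_{\omega_S}$ by $|\bar\Scal|\bigl(\tfrac{1}{n}-\tfrac{4}{\eta}p_{\omega^\star}\bigr)$ using $p_{\omega_S}\ge 0$ and the definition of $\bar\Scal$, and then show the aggregated coefficient of $p_{\omega^\star}$ is non-positive under the parameter choices so the expression is maximized at $p_{\omega^\star}=0$. Your two-step treatment of $\sum_{S}p_{\omega_S}$ (drop $S\notin\bar\Scal$, then bound the remainder) is the same bound the paper applies in a single line, and your bound on the bracket via the harmonic sum is marginally tighter than the paper's crude $\tfrac{1}{i}\le 1$ estimate, but the conclusion and structure are the same.
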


\begin{proof}
        Now, we combine the bound on the principal's utility for the different types.
    Let define the set $\bar \Scal:=\{S\in\Scal: p_{\omega_S}\ge\frac{1}{n}-\frac{4}{\eta}p_{\omega^\star}\}$. Notice that by \Cref{eq:largeE1}, for each element $i \in E_1$ it hold $i \in \bigcup_{S\in \bar \Scal} S$. Combining \Cref{lem:onlyif1}, \Cref{lem:onlyif2} and \Cref{lem:onlyif3}, we can upperbound the  principal's expected  utility with
    \begin{align}
    &\frac{1-\rho}{n}\mu\left[\sum_{i\in E_1}\right(\frac{1}{2in}+\frac{1}{i}p_{\omega^\star}(\frac{2}{\eta}-1)\left)+\sum_{i\in E_2}\left(\frac{1}{2in}-\frac{1}{8n^4}+\frac{2}{\eta}p_{\omega^\star}\right)\right]\nonumber\\
    &\hspace{6cm}+\rho(1-m\epsilon)\left(\frac{1}{n}-p_{\omega^\star}\right)-\epsilon\rho\sum_{S\in\Scal}p_{\omega_S}\nonumber\\
    & \le \frac{1-\rho}{n}\mu\left[\sum_{i\in E_1}\right(\frac{1}{2in}+\frac{1}{i}p_{\omega^\star}(\frac{2}{\eta}-1)\left)+\sum_{i\in E_2}\left(\frac{1}{2in}-\frac{1}{8n^4}+\frac{2}{\eta}p_{\omega^\star}\right)\right]\nonumber\\
    &\hspace{6cm}+\rho(1-m\epsilon)\left(\frac{1}{n}-p_{\omega^\star}\right)-\epsilon\rho |\bar\Scal|\left(\frac{1}{n}-\frac{4}{\eta}p_{\omega^\star}\right) \label{eq:upperbound}
    \end{align}
    where in the inequality we used that $\sum_{S\in\Scal} p_{\omega_S}\ge |\bar\Scal|(\frac{1}{n}-\frac{4}{\eta}p_{\omega^\star})$ by the definition of $\bar \Scal$.
    Then, we show that this upperbound on the principal's utility is maximized for $p_{\omega^\star}=0$. Indeed, in the expression the coefficient of $p_{\omega^\star}$ is 
	\[
	\frac{1-\rho}{n}\mu\left[\sum_{i\in E_1}\frac{1}{i}\left(\frac{2}{\eta}-1\right)+\sum_{i\in E_2}\frac{2}{\eta}\right]-\rho(1-m\epsilon)+\frac{4}{\eta}\epsilon\rho|\bar\Scal|.
	\]
	The following calculations shows that this is negative, implying that the expression is maximized for $p_{\omega^\star}=0$:
	\begin{align*}
		\frac{1-\rho}{n}\mu&\left[\sum_{i\in E_1}\frac{1}{i}\left(\frac{2}{\eta}-1\right)+\sum_{i\in E_2}\frac{2}{\eta}\right]-\rho(1-m\epsilon)+\frac{4}{\eta}\epsilon\rho|\bar\Scal|\\
		&\le \frac{1-\rho}{n}\mu\sum_{i\in E_1\cup E_2}\frac{2}{\eta} - \rho(1-m\epsilon)+4\frac{\rho \epsilon m}{\eta}\\
		&\le \frac{1-\rho}{n}\mu\frac{2}{\eta}n-\rho(1-m\epsilon)+4\frac{\rho \epsilon m}{\eta}\\
		&\le \frac{2}{\eta}\mu-\rho(1-m\epsilon)+4\frac{\rho \epsilon m}{\eta}\\
		&\le 2\frac{\mu}{\eta}-\frac{\rho}{2}+4\frac{\rho \epsilon m}{\eta}\\
		&=2\frac{n^2}{n^9m}-\frac{1}{2n^6}+4\frac{mn^2}{n^6n^8m}\\
		&=2\frac{1}{n^7}-\frac{1}{2n^6}+4\frac{1}{n^{12}},
	\end{align*}
	
	which is negative for $n\ge 2$.
	
	Thus, setting $p_{\omega^\star}=0$ in \Cref{eq:upperbound}, the principal utility is upper bounded by
	\[
	\frac{1-\rho}{n}\mu\left[\sum_{i\in E_1}\frac{1}{2in}+\sum_{i\in E_2}\left(\frac{1}{2in}-\frac{1}{8n^4}\right)\right]+\frac{1}{n}\rho(1-m\epsilon)-\frac{1}{n}\epsilon\rho|\bar\Scal|,
	\]
    concluding the proof.
\end{proof}

\section{Omitted Proofs from \Cref{sec:learning}: Learning Optimal Contracts}

\lemmaClose*

\begin{proof}
	Consider any contract $p\in[0,1]^m$. Then, for each action $a\in A$, let $\Theta(p,a)= \{\theta \in \Theta: b^{\theta}(p)=a\}$ which is the set of types that play $a$ as a best-response to $p$. 
    Let $\widehat \Theta(p)\subseteq \Theta_\epsilon$ be the set of $\theta \in \Theta_\epsilon$ such that in the interval $(\theta-\epsilon/2, \theta+\epsilon/2)$ there are multiple best-responses, \ie there are two actions $a\neq a'$ such that $\Theta(p,a)\cap (\theta-\epsilon/2,\theta+\epsilon/2]\neq\emptyset$ and $\Theta(p,a') \cap (\theta-\epsilon/2,\theta+\epsilon/2]\neq \emptyset$.
	Then, consider any $\theta\in\widehat\Theta(p)$ and observe that by \Cref{ass:boundedDensity} it holds
    \begin{align*}
        \left| \gamma_\theta \Up(p,\theta)-\int_{\theta-\epsilon/2}^{\theta+\epsilon/2} f_\Gamma(\tilde \theta) \Up(p,\tilde \theta)d\tilde\theta \right|&= \left| \int_{\theta-\epsilon/2}^{\theta+\epsilon/2} f_\Gamma(\tilde \theta) \left(\Up(p, \theta)-\Up(p,\tilde \theta)\right)d\tilde\theta \right|\\
        &\le \int_{\theta-\epsilon/2}^{\theta+\epsilon/2}f_\Gamma(\tilde \theta) \left|\Up(p, \theta)-\Up(p,\tilde \theta)\right|d\tilde\theta\\
        &\le 2\beta\epsilon.
    \end{align*}
    On the other hand, for any type $\theta\in\Theta_\epsilon\setminus\widehat\Theta(p)$ we have by definition that all the types in the interval $(\theta-\epsilon/2,\theta+\epsilon/2]$ have the same best response and thus
    \[
    \gamma_\theta\Up(p,\theta)=\int_{\theta-\epsilon/2}^{\theta+\epsilon/2} f_\Gamma(\tilde\theta) \Up(p,\tilde\theta)d\tilde\theta.
    \]
    Combining the two, we get that
    \begin{align*}
        \left|\mathbb{E}_{\theta\sim\Gamma}[\Up(p,\theta)]-\sum_{\theta\in\Theta_\epsilon}\gamma_\theta \Up(p,\theta)\right|&\le \sum_{\theta\in\Theta_\epsilon}\left|\gamma_\theta\Up(p,\theta)-\int\limits_{\theta-\epsilon/2}^{\theta+\epsilon/2}f_\Gamma(\tilde\theta)\Up(p,\tilde\theta)d\tilde\theta\right|\\
        &\le \sum_{\theta\in\widehat\Theta(p)}\left|\gamma_\theta\Up(p,\theta)-\int\limits_{\theta-\epsilon/2}^{\theta+\epsilon/2}f_\Gamma(\tilde\theta)\Up(p,\tilde\theta)d\tilde\theta\right|\\
        &\le 2|\widehat\Theta(p)|\beta\epsilon.
    \end{align*}

    Now we argue that $|\widehat\Theta(p)|$ is at most $n$. It is well known~\citep{alon2021contracts} that $\Theta(p,a)$ is an interval for each $a \in A$, and thus the number of intersections between such intervals is at most $n-1$, which is on to itself an upper-bound on $|\widehat\Theta(p)|$.
\end{proof}

\subsection{Proof of \Cref{thm:sample}}

\samplecomplexity*

\begin{proof}
    The proof essentially delves into the analysis of regret guarantees of $\PE$ presented in \citet{lattimore2020learning}. The algorithm $\PE$ works by partitioning the time horizon into geometrically increasing blocks each of length $T_\ell=2^{\ell-1}\lceil 4d\log\log d+16\rceil$. In each window each arm $x\in X$ is pulled $T_\ell\cdot\rho(x)$ times where $\rho\in\Delta_X$ is the optimal design, which has support at most $4d \log \log d + 16$ \citep{todd2016minimum}. The rewards observed are used to compute an estimate $\widehat \phi_\ell$ of $\phi$ as
    \[
    \widehat \phi_\ell=G_\ell^{-1}\sum_{k=T_{\ell}+1}^{T_{\ell+1}} \mu_k e_{x_t}\quad\text{where}\quad G_\ell=\sum_{x\in X} \lceil m_\ell\rho(x)\rceil x_tx_t^\top,
    \]
    and where $e_x\in\Reals^{k}$ is the vector with all zeros except the component associated with the arm $x$ (according to whatever fixed ordering).
    Then $\widehat \phi_\ell$ is used to discard provably suboptimal arms and maintain a set of optimal arms $X_t$ is updated at $T_l$
    \[
    X_{\ell}=\left\{x\in X_{{\ell-1}}:\max_{y\in X_{{\ell-1}}} (y-x)^\top \widehat\phi_\ell\le2\sqrt{\frac{4d}{T_\ell}\log\left(\frac{k}{\delta_\ell}\right)}\right\},
    \]
    where $\delta_\ell$ will be set in the following.
    Now we can use that key claim that with high probability $\widehat \phi_\ell$ is a good estimate of $\phi$.
    \begin{claim}[{\cite[Appendix~D]{lattimore2020learning}}]\label{claim:estimation}
    For any $\delta_\ell\in(0,1)$, the event 
    \[
    \Ecal_\ell=\left\{|x^\top(\widehat \phi_\ell-\phi)|\le 2\alpha\sqrt{d}+\sqrt{\frac{4d}{T_\ell}\log\left(\frac{k}{\delta_\ell}\right)}\quad\forall x\in X\right\}
    \]
    holds with probability at least $1-\delta_\ell$.
    \end{claim}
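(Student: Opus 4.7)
The plan is to decompose the estimation error $\widehat\phi_\ell - \phi$ into a misspecification bias and a stochastic noise term, and bound each uniformly over $X$ using the properties of the G-optimal design $\rho$. Write every observed reward in block $\ell$ as $\mu_k = \langle x_k, \phi\rangle + y(x_k) + \eta_k$, where $\eta_k := \mu_k - \mathbb{E}[\mu_k \mid x_k]$ is centered and $O(1)$-sub-gaussian, and $|y(x_k)| \le \alpha$ by the misspecification assumption. Substituting into the definition of the least-squares estimator and recalling that $G_\ell = \sum_k x_k x_k^\top$ (sum restricted to the pulls of block $\ell$) gives
\[
\widehat \phi_\ell - \phi \,=\, G_\ell^{-1} \sum_k y(x_k)\, x_k \,+\, G_\ell^{-1} \sum_k \eta_k\, x_k.
\]
The two terms on the right account exactly for the summands $2\alpha\sqrt{d}$ and $\sqrt{(4d/T_\ell)\log(k/\delta_\ell)}$ in the claim.

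For the bias term, I would fix $x \in X$ and use Cauchy--Schwarz together with the identity $\sum_k \langle x, G_\ell^{-1} x_k\rangle^2 = x^\top G_\ell^{-1} x$ to obtain
\[
\bigl|x^\top G_\ell^{-1} \textstyle\sum_k y(x_k)\, x_k\bigr| \,\le\, \alpha\sqrt{T_\ell \cdot x^\top G_\ell^{-1} x}.
\]
By the Kiefer--Wolfowitz theorem, the G-optimal design $\rho$ satisfies $\max_{x\in X}\, x^\top (\sum_y \rho(y)\, y y^\top)^{-1} x \le d$, and the ceilings $\lceil T_\ell \rho(y) \rceil$ inflate the total number of pulls by at most a constant factor once $T_\ell$ exceeds a fixed multiple of $d$, so $x^\top G_\ell^{-1} x \le 4d/T_\ell$, bounding the bias by $2\alpha\sqrt{d}$. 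For the noise term, conditionally on the deterministic pull schedule of phased elimination, $x^\top G_\ell^{-1} \sum_k \eta_k x_k = \sum_k \eta_k \langle x, G_\ell^{-1} x_k\rangle$ is a weighted sum of independent centered sub-gaussians with variance proxy equal to $x^\top G_\ell^{-1} x \le 4d/T_\ell$. A standard sub-gaussian tail inequality then yields
\[
\Pr\!\bigl[\,|x^\top G_\ell^{-1} \textstyle\sum_k \eta_k x_k| \,\ge\, \sqrt{\tfrac{4d}{T_\ell}\log(k/\delta_\ell)}\,\bigr] \,\le\, \delta_\ell/k,
\]
and a union bound over the $k$ arms in $X$ lifts this to the uniform statement $|x^\top G_\ell^{-1} \sum_k \eta_k x_k| \le \sqrt{(4d/T_\ell)\log(k/\delta_\ell)}$ simultaneously for all $x\in X$, with probability at least $1-\delta_\ell$. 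Combining the two bounds on this event gives the claim.

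The main obstacle is tracking the precise constants in the bias bound. The Kiefer--Wolfowitz inequality is asymptotically tight, so one has to absorb both the effect of the ceiling in $\lceil T_\ell \rho(y)\rceil$ and, if the design is supported on the current active set $X_{\ell-1}$ rather than on $X$, the need to control $x^\top G_\ell^{-1} x$ also for $x \in X \setminus X_{\ell-1}$ via a short extension-of-design argument. These two small losses are exactly what force the numerical factor $2$ in front of $\alpha\sqrt{d}$ and the $4$ inside the logarithm rather than tighter constants; everything else in the argument is a clean decomposition plus a Hoeffding-type sub-gaussian tail bound.
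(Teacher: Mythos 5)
Your proposal is correct and follows essentially the same argument as the source this claim is imported from (the paper itself does not prove it, citing \citet{lattimore2020learning}, Appendix~D): decompose $\widehat\phi_\ell-\phi$ into the misspecification bias and the noise, bound $x^\top G_\ell^{-1}x$ via Kiefer--Wolfowitz for the G-optimal design, and finish with a sub-gaussian tail bound plus a union bound over the $k$ arms. The only wobble is cosmetic: Cauchy--Schwarz gives $\alpha\sqrt{N_\ell\, x^\top G_\ell^{-1}x}$ with $N_\ell=\sum_x\lceil T_\ell\rho(x)\rceil\ge T_\ell$ rather than $\alpha\sqrt{T_\ell\, x^\top G_\ell^{-1}x}$, but since $N_\ell\le 2T_\ell$ and $G_\ell\succeq T_\ell\sum_x\rho(x)xx^\top$ this is absorbed exactly into the factor $2$ in $2\alpha\sqrt{d}$, as you indicate.
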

    Now define $x^\star=\max_{x\in X}\mu(x)$ and $\widehat x_\ell=\max_{x\in X_\ell}x^\top\widehat \phi_\ell$ and consider now any arm $x\in X_\ell$ the following inequalities
    \begin{align}
        2\sqrt{\frac{4d}{T_\ell}\log\left(\frac{k}{\delta_\ell}\right)}&\ge (\widehat x_\ell-x)^\top \widehat \phi_\ell\tag{Since $x\in X_\ell$}\\
        &\ge (x^\star-x)^\top \widehat \phi_\ell\tag{Definition of $\widehat x_\ell$}\\
        &\ge (x^\star-x)^\top \phi - 4\alpha\sqrt{d}-2\sqrt{\frac{4d}{T_\ell}\log\left(\frac{k}{\delta_\ell}\right)}\tag{Under event $\Ecal_\ell$}
    \end{align}

    Now, we can use the equation above to obtain a bound on the sub-optimality of any arm $x\in X_\ell$:
    \begin{align*}
        \mu(x)&\ge x^\top \phi-\alpha\\
        &\ge x^{\star,\top} \phi-\alpha- 4\alpha\sqrt{d}-4\sqrt{\frac{4d}{T_\ell}\log\left(\frac{k}{\delta_\ell}\right)}\\
        &\ge \mu(x^\star)-2\alpha- 4\alpha\sqrt{d}-4\sqrt{\frac{4d}{T_\ell}\log\left(\frac{k}{\delta_\ell}\right)}
    \end{align*}

    The event $\Ecal=\cup_{\ell=1}^{+\infty}\Ecal_{\ell}$ holds with probability at least $1-\sum_{\ell=1}^{+\infty}\delta_\ell$. Thus, by fixing $\delta_\ell = \frac{6\delta}{\pi^2\ell^2}$, we have that $\Ecal$ holds with probability at least $1-\delta$.

    Now we need to find the best lowest $\ell$ such that $\mu(x)\ge \mu(x^\star)-\eta$, or equivalently such that
    \begin{equation}\label{eq:tmp7}    
    2\alpha+ 4\alpha\sqrt{d}+4\sqrt{\frac{4d}{T_\ell}\log\left(\frac{k\pi^2\ell^2}{6\delta}\right)}\le \eta
    \end{equation}
    which is implied by the more stringent inequality
    \(
    6\alpha\sqrt{d}+4\sqrt{\frac{1}{2^\ell}\log\left(\frac{k\pi^2\ell^2}{6\delta}\right)}\le \eta.
    \)
    Now, we can use a standard technical claim (\emph{e.g.}, similar to \citet[Lemma~12]{jonsson2020planning}) that we can use to solve the equation above.
    \begin{claim}
        Define $D(c,\ell)=\sqrt{\frac{\log(c\cdot\ell^2)}{2^\ell}}$ for some $c>0$. Given any $z>0$, if $\ell\ge\log\left(\frac{2}{z^2}\log\left(\frac{c}{z^2}\right)\right)$ then $D(c,\ell)\le z$.
    \end{claim}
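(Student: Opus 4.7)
The plan is to reduce the stated bound $D(c,\ell)\le z$ to the algebraic inequality $z^2\cdot 2^\ell \ge \log(c\ell^2)$, verify it at the threshold $\ell_0 := \log\!\big(\tfrac{2}{z^2}\log(c/z^2)\big)$, and then extend to all $\ell\ge\ell_0$ by a monotonicity argument on a suitable gap function. This mirrors the standard recipe for inverting expressions of the form $\log(c\ell^2)/2^\ell$: one picks an ansatz that ignores the $\ell^2$ inside the logarithm, and the factor of $2$ in front of $\frac{1}{z^2}\log(c/z^2)$ buys exactly the slack needed to absorb the double-log correction.

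First I would handle the base case $\ell=\ell_0$. By the very definition of $\ell_0$ we have $2^{\ell_0}=\tfrac{2}{z^2}\log(c/z^2)$, so substituting and clearing the positive factor $1/z^2$ reduces the desired inequality to $2\log(c/z^2)\ge \log(c\,\ell_0^2)$, equivalently $c/z^4 \ge \ell_0^2$. Since $\ell_0$ depends on $z$ only through a double logarithm, while $1/z^4$ grows polynomially as $z\to 0$, this is a routine comparison in the regime where the lemma is invoked (small $z$); the analogous statement in \citet{jonsson2020planning} is verified in exactly this way.

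Next I would extend the bound to all $\ell\ge\ell_0$ by studying $g(\ell):=z^2\cdot 2^\ell - \log(c\ell^2)$. Differentiating gives $g'(\ell)=(\ln 2)z^2\cdot 2^\ell - 2/\ell$, which is nonnegative once $\ell\cdot z^2\cdot 2^\ell \ge 2/\ln 2$. At $\ell=\ell_0$ this is already implied by the base-case computation (since $2^{\ell_0}\ge \tfrac{2}{z^2}\log(c/z^2)$ and $\ell_0\ge 1$ in the relevant range), and the left-hand side only grows with $\ell$. Thus $g$ is non-decreasing on $[\ell_0,\infty)$, and combining with $g(\ell_0)\ge 0$ from the previous paragraph yields $g(\ell)\ge 0$, i.e., $D(c,\ell)\le z$, for all $\ell\ge\ell_0$.

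The main obstacle is the self-referential nature of the base-case inequality $c/z^4\ge \ell_0^2$: both the hypothesis on $\ell$ and the quantity to be bounded involve $z$ and $c$, so one cannot simply plug in and simplify. I would handle this by isolating the dominant asymptotic, namely $\ell_0 = O(\log\log(1/z))$ as $z\to 0$, which is dominated by any polynomial growth in $1/z$, so the inequality holds comfortably in the non-trivial parameter regime. A minor secondary issue is keeping the logarithm base consistent (the $2^\ell$ suggests base $2$, but the calculation goes through identically for the natural logarithm after absorbing a $\ln 2$ constant), which is a cosmetic matter that does not affect the structure of the argument.
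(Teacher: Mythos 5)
The paper itself never proves this claim: it is invoked as a ``standard technical claim'' and attributed to Lemma~12 of \citet{jonsson2020planning}, so there is no in-paper argument to compare against, and your attempt has to stand on its own. Your skeleton is the natural one (and surely close to how such lemmas are proved): reduce $D(c,\ell)\le z$ to $z^2 2^\ell\ge \log(c\ell^2)$, check it at the threshold $\ell_0$ with $2^{\ell_0}=\frac{2}{z^2}\log(c/z^2)$, and propagate to $\ell\ge\ell_0$ via the gap function $g(\ell)=z^2 2^\ell-\log(c\ell^2)$; the algebra reducing the base case to $c/z^4\ge\ell_0^2$ and the observation that $\ell\mapsto \ell z^2 2^\ell$ is increasing are both correct.

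The genuine gap is that the two quantitative steps are asserted rather than proved. The claim quantifies over \emph{all} $c>0$ and $z>0$, yet you dispose of $c/z^4\ge\ell_0^2$ by appealing to ``the regime where the lemma is invoked (small $z$)'', and of the derivative-sign condition by ``$\ell_0\ge 1$ in the relevant range''. Neither follows from the hypotheses, and the base-case inequality can in fact fail: reading $\log=\log_2$, take $z^2=0.2$ and $c=2z^2=0.4$, so $\log_2(c/z^2)=1$, $\ell_0=\log_2 10\approx 3.32$, and $c/z^4=10<\ell_0^2\approx 11.03$; correspondingly $D(c,\ell_0)=\sqrt{\log_2(0.4\,\ell_0^2)/10}\approx 0.463>z\approx 0.447$, so no argument can close this step without restricting the parameters (the statement as written is only valid under implicit side conditions, e.g.\ $c/z^2$ sufficiently large). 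To make the proof complete you would need to state such a condition explicitly --- it does hold where the paper applies the claim, with $c=\frac{k\pi^2}{6\delta}$ exponentially large and $z\le\eta/4$ --- and then actually verify $c/z^4\ge\ell_0^2$ and $\ell_0\log(c/z^2)\ge 1/\ln 2$ under it, rather than deferring them to an unspecified regime; as written, those deferred steps are exactly the content the claim is supposed to provide.
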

 
    We can use the claim above with $z=\frac{\eta-6\alpha\sqrt{d}}{4}$ and $c=\frac{k\pi^2}{6\delta}$ and thus we known that if 
    \[
    \ell\ge \ell^\star\coloneqq\log\left(\frac{32}{(\eta-6\alpha\sqrt{d})^2}\log\left(\frac{8k\pi^2}{3\delta (\eta-6\alpha\sqrt{d})^2}\right)\right)
    \]
    then \Cref{eq:tmp7} is satisfied. If we are going to use $\ell^\star$ blocks then we are actually using
    $\sum_{\ell=1}^{\ell^\star} T_\ell$ samples.
    \begin{align*}
        \sum_{\ell=1}^{\ell^\star} T_\ell &=\lceil4d\log\log d+16\rceil\sum_{\ell=1}^{\ell^\star}2^{\ell-1}\\
        &\le\lceil4d\log\log d+16\rceil 2^{\ell^\star}\\
        &\le\lceil4d\log\log d+16\rceil\frac{32}{(\eta-6\alpha\sqrt{d})^2}\log\left(\frac{8k\pi^2}{3\delta (\eta-6\alpha\sqrt{d})^2}\right).
    \end{align*}
    This proves that the sample complexity of misspecified linear bandits is $\tilde O\left(\frac{d}{(\eta-6\alpha\sqrt{d})^2}\log\left(\frac{k}{\delta}\right)\right)$.

    %
    When we reduce from our problem of contract design we can thus choose appropriately the discretization $\epsilon$. As per \Cref{lem:valClose}, for a discretization $\epsilon$ we have a misspecification of $\alpha = 2\beta n \epsilon$, a number of actions which is $k=|\Pcal|=\poly((n,m,1/\epsilon)^m)$ and a dimension of $d=|\Theta_\epsilon|=1/\epsilon$. By setting $\epsilon$ such that $6\cdot 2\beta n\epsilon\sqrt{d}=6\cdot 2\beta n\sqrt{\epsilon}=\frac\eta2$ we get $\epsilon = (\frac{\eta}{24\beta n})^2$, and we obtain a sample complexity of 
    \[
        C\frac{(\beta n)^2m\log\log(\beta n/\eta)}{\eta^4}\log\left(\frac{\poly(m,n)}{\delta}\right),
    \]
    which is $\tilde O\left(\frac{\poly(\beta,n,m)}{\eta^4}\log\left(\frac1\delta\right)\right)$ as stated.
\end{proof}

\section*{Acknowledgments}

Funded by the European Union. Views and opinions expressed are however those of the author(s) only and do not necessarily reflect those of the European Union or the European Research Council Executive Agency. Neither the European Union nor the granting authority can be held responsible for them. 

\vspace{.5cm}
\noindent AC is supported by ERC grant (Project 101165466 — PLA-STEER). MB, MC are partially supported by the FAIR (Future Artificial Intelligence Research) project PE0000013, funded by the NextGenerationEU program within the PNRR-PE-AI scheme (M4C2, investment 1.3, line on Artificial Intelligence). MC is also partially supported by the EU Horizon project ELIAS (European Lighthouse of AI for Sustainability, No. 101120237).

\newpage
\printbibliography

\end{document}